\documentclass[reqno,10pt,letterpaper]{amsart}

\usepackage{lipsum}
\usepackage{amsmath}
\usepackage{amssymb}
\usepackage{amsthm}
\usepackage{mathrsfs}
\usepackage{accents}
\usepackage{calc}
\usepackage{arydshln}
\usepackage{upgreek}
\usepackage{slashed}
\usepackage{xifthen}
\usepackage{graphicx}
\usepackage{subcaption}
\usepackage{longtable}
\usepackage[inline]{enumitem}

\usepackage{xr}
\usepackage{tikz}

\usepackage{xcolor}
\definecolor{winered}{rgb}{0.6,0,0}
\definecolor{lessblue}{rgb}{0,0,0.7}

\usepackage[pdftex,colorlinks=true,linkcolor=winered,citecolor=lessblue,urlcolor=lessblue,breaklinks=true,bookmarksopen=true]{hyperref}

\usepackage{listings}
\lstdefinestyle{mathematica}{
  language=Mathematica,
  basicstyle=\ttfamily\footnotesize,
  columns=fullflexible,
  keepspaces=true,
  showstringspaces=false,
  breaklines=true,
  breakatwhitespace=false,
  frame=single,
  framerule=0.3pt,
  framesep=6pt,
  rulecolor=\color{black!20},
  backgroundcolor=\color{black!2},
  commentstyle=\itshape\color{black!65},
  keywordstyle=\bfseries,
  numbers=none,
  tabsize=2
}

\makeatletter
\newcommand{\myitem}[2]{\item[\rm(#2)]\def\@currentlabel{#2}\label{#1}}
\makeatother

\usepackage{titletoc}

\makeatletter

\def\@tocline#1#2#3#4#5#6#7{
\begingroup
  \par
    \parindent\z@ \leftskip#3 \relax \advance\leftskip\@tempdima\relax
                  \rightskip\@pnumwidth plus 4em \parfillskip-\@pnumwidth
    \ifcase #1 % sections
       \vskip 0.6em \hskip 0em % add a little vspace before
       \or
       \or \hskip 0em % subsections
       \or \hskip 1em % subsubsections
    \fi%
    #6
    \nobreak\relax{\leavevmode\leaders\hbox{\,.}\hfill}
    \hbox to\@pnumwidth {\@tocpagenum{#7}}
  \par
\endgroup
}

 \def\l@section{\@tocline{0}{0pt}{0pc}{}{}}

\renewcommand{\tocsection}[3]{%
  \indentlabel{\@ifnotempty{#2}{ % for numbered sections
    \ignorespaces\bfseries{#2. #3}}}
  \indentlabel{\@ifempty{#2}{\ignorespaces\bfseries{#3}}{}} % for unnumbered sections
    \vspace{1.5pt}}

\renewcommand{\tocsubsection}[3]{%
  \indentlabel{\@ifnotempty{#2}{
    \ignorespaces#2. #3}}
  \indentlabel{\@ifempty{#2}{\ignorespaces #3}{}}
    \vspace{1.5pt}}

\renewcommand{\tocsubsubsection}[3]{%
  \indentlabel{\@ifnotempty{#2}{
    \ignorespaces#2. #3}}
  \indentlabel{\@ifempty{#2}{\ignorespaces #3}{}}
    \vspace{1.5pt}}

\makeatother

\makeatletter
\def\@nomenstarted{0}
\newlength{\@nomenoldtabcolsep}

\newcommand{\nomenstart}
  {%
    \def\@nomenstarted{1}%
    \setlength{\@nomenoldtabcolsep}{\tabcolsep}%
    \setlength{\tabcolsep}{3.5pt}%
    \begin{longtable}{p{0.11\textwidth} p{0.86\textwidth}}%found by hand
  }

\newcommand{\nomenitem}[2]{%
    \ifcase\@nomenstarted%
      \or % if nomenstarted=1, do nothing
      \or \\ % if nomenstarted=2, add newline to previous one
    \fi%
    #1\,{\leavevmode\leaders\hbox{\,.}\hfill} & #2%
    \def\@nomenstarted{2}%
  }%
\newcommand{\nomenend}
  {\\%
      \end{longtable}%
      \setlength{\tabcolsep}{\@nomenoldtabcolsep}%
      \def\@nomenstarted{0}%
  }
\makeatother

\makeatletter
\newcommand{\BIG}{\bBigg@{3.5}}
\newcommand{\vast}{\bBigg@{4}}
\newcommand{\Vast}{\bBigg@{5}}
\newcommand{\VAST}[1]{\bBigg@{#1}}
\makeatother

\allowdisplaybreaks

\numberwithin{equation}{section}
\numberwithin{figure}{section}
\newtheorem{thm}{Theorem}[section]

\newtheorem{prop}[thm]{Proposition}
\newtheorem{lemma}[thm]{Lemma}
\newtheorem{cor}[thm]{Corollary}

\newtheorem*{thm*}{Theorem}
\newtheorem*{prop*}{Proposition}
\newtheorem*{cor*}{Corollary}
\newtheorem*{conj*}{Conjecture}

\theoremstyle{definition}
\newtheorem{definition}[thm]{Definition}

\theoremstyle{remark}
\newtheorem{rmk}[thm]{Remark}

\makeatletter
\newcommand{\fakephantomsection}{%
  \Hy@MakeCurrentHref{\@currenvir.\the\Hy@linkcounter}
  \Hy@raisedlink{\hyper@anchorstart{\@currentHref}\hyper@anchorend}%
  \Hy@GlobalStepCount\Hy@linkcounter%
}
\makeatother

\newcommand{\mc}{\mathcal}

\newcommand{\cC}{\mc C}
\newcommand{\cD}{\mc D}

\newcommand{\cF}{\mc F}

\newcommand{\cO}{\mc O}

\newcommand{\cR}{\mc R}
\newcommand{\cS}{\mc S}

\newcommand{\ms}{\mathscr}

\newcommand{\sD}{\ms D}

\newcommand{\C}{\mathbb{C}}
\newcommand{\N}{\mathbb{N}}
\newcommand{\R}{\mathbb{R}}
\newcommand{\Z}{\mathbb{Z}}

\newcommand{\Sph}{\mathbb{S}}

\newcommand{\fp}{\mathfrak{p}}

\newcommand{\slDelta}{\slashed{\Delta}{}}

\newcommand{\slnabla}{\slashed{\nabla}{}}

\renewcommand{\Re}{\operatorname{Re}}
\renewcommand{\Im}{\operatorname{Im}}

\newcommand{\eps}{\epsilon}

\newcommand{\la}{\langle}

\newcommand{\ol}{\overline}
\newcommand{\pa}{\partial}
\newcommand{\dd}{{\mathrm d}}
\newcommand{\ra}{\rangle}
\newcommand{\spec}{\operatorname{spec}}

\newcommand{\wh}{\widehat}

\newcommand{\xra}{\xrightarrow}

\newcommand{\ubar}[1]{\underaccent{\bar}#1}%{\mkern1.5mu\underline{\mkern-1.5mu #1\mkern-1.5mu}\mkern1.5mu}

\newcommand{\CI}{\cC^\infty}

\newcommand{\Ric}{\mathrm{Ric}}

\newcommand{\bhm}{M}

\newcommand{\openbigpmatrix}[1]
  {%
    \def\@bigpmatrixsize{#1}%
    \addtolength{\arraycolsep}{-#1}%
    \begin{pmatrix}%
  }
\newcommand{\closebigpmatrix}
  {%
    \end{pmatrix}%
    \addtolength{\arraycolsep}{\@bigpmatrixsize}%
  }

\newlength{\enummargin}

\newcommand{\usref}[1]{{\upshape\ref{#1}}}

\DeclareGraphicsExtensions{.mps}

\makeatletter
\newcommand*{\fwbw}[1]{\expandafter\@fwbw\csname c@#1\endcsname}
\newcommand*{\@fwbw}[1]{\ifcase #1 \or {\rm fw}\or {\rm bw}\fi}
\AddEnumerateCounter{\fwbw}{\@fwbw}
\makeatother

\begin{document}

%%%%%%%%%%%%%%%%%%%%%%%%%%%%%%%%%%%%%%%%%%%%%%%%%%%%%%%%%%%%%%%%%%%%%%
% title page
\title[Mode stability of Kerr--de~Sitter]{Mode stability for the Klein--Gordon equation on subextremal Kerr--de~Sitter spacetimes}

\date{\today. Original version: August 26, 2026.}

% 83C57: black holes
% 35L05: wave equation
% 35P25: scattering theory
% 35C20: asymptotic expansions
\subjclass[2020]{Primary 83C57, Secondary 35L05, 35P25, 35C20}

\author{Peter Hintz}
\address{Department of Mathematics, Pennsylvania State University, 54 McAllister St, State College,\newline PA 16801, United States}
\email{phintz@psu.edu}

\begin{abstract}
  We prove mode stability for the Klein--Gordon equation on Kerr--de~Sitter black hole spacetimes in the full subextremal range and for the range $m_{\rm KG}^2\in[0,\Lambda\sqrt{2}]$ of squared scalar field masses that includes, in particular, the minimally coupled case $m_{\rm KG}=0$ and the conformally coupled case $m_{\rm KG}^2=\frac{2}{3}\Lambda$ (i.e., the Teukolsky equation for spin $0$). In combination with recent work by Petersen--Vasy \cite{PetersenVasySubextremal}, this proves, unconditionally, that sufficiently regular solutions of the scalar wave equation decay exponentially fast to constants, and in fact to zero for nonzero $m_{\rm KG}$.
\end{abstract}

\maketitle

%%%%%%%%%%%%%%%%%%%%%%%%%%%%%%%%%%%%%%%%%%%%%%%%%%%%%%%%%%%%%%%%%%%%%%
\section{Introduction}
\label{SI}

For parameters $\Lambda>0$ (cosmological constant), $\bhm>0$ (black hole mass), and $a\in\R$ (specific angular momentum), the Kerr--de~Sitter (KdS) metric is given by
\[
  g = -\frac{\mu(r)}{b^2\varrho^2(r,\theta)}\bigl(\dd t-a\sin^2\theta\,\dd\phi\bigr)^2 + \varrho^2(r,\theta)\Bigl(\frac{\dd r^2}{\mu(r)} + \frac{\dd\theta^2}{c(\theta)}\Bigr) + \frac{c(\theta)\sin^2\theta}{b^2\varrho^2(r,\theta)}\bigl((r^2+a^2)\,\dd\phi-a\,\dd t\bigr)^2
\]
in Boyer--Lindquist coordinates, where, introducing $\lambda:=\frac{\Lambda}{3}$,
\begin{equation}
\label{EqIFns}
\begin{split}
  \mu(r) &:= (r^2+a^2)(1-\lambda r^2) - 2\bhm r, \\
  \varrho^2(r,\theta) &:= r^2+a^2\cos^2\theta, \\
  b &:= 1+\lambda a^2, \\
  c(\theta) &:= 1+\lambda a^2\cos^2\theta.
\end{split}
\end{equation}
It is a solution of the Einstein vacuum equation $\Ric(g)-\Lambda g=0$ \cite{CarterHamiltonJacobiEinstein}. The parameters $(\Lambda,\bhm,a)$ are called \emph{subextremal} if the quartic polynomial $\mu(r)$ has four distinct real roots
\begin{equation}
\label{EqIRoots}
  r_- < r_C < r_e < r_c.
\end{equation}
(Necessarily $r_-<0\leq r_C$; see Lemma~\ref{LemmaGRoots}.) See Figure~\ref{FigIParam}.

\begin{figure}[!ht]
\centering
\includegraphics{FigIParam}
\caption{The parameter space of subextremal KdS black holes. Mode stability had previously been shown in a neighborhood of the intervals $[0,1)$ and $[0,1/9)$ along the two axes.}
\label{FigIParam}
\end{figure}

The nonlinear stability of subextremal KdS spacetimes was proved by the author with Vasy \cite{HintzVasyKdSStability} in the regime of very small $a$ (see also \cite{FangKdS}), and in the full subextremal range with Petersen and Vasy \cite{HintzPetersenVasyKdS} \emph{conditional on the validity of mode stability for the ungauged Einstein equation} (see \cite{HintzPetersenVasyKdS} for the definition of this notion). In this paper, we take a first step towards verifying this mode stability unconditionally: rather than consider the (linearized) Einstein equation, we study here the scalar wave operator $\Box_g=-|g|^{-\frac12}\pa_\mu|g|^{\frac12}g^{\mu\nu}\pa_\nu$, explicitly given by
\begin{equation}
\label{EqIBox}
  \varrho^2\Box_g = -\pa_r\mu(r)\pa_r - \frac{1}{\sin\theta}\pa_\theta c(\theta)\sin\theta\,\pa_\theta + \frac{b^2}{\mu(r)}\bigl((r^2+a^2)\pa_t+a\pa_\phi\bigr)^2 - \frac{b^2}{c(\theta)\sin^2\theta}(\pa_\phi+a\sin^2\theta\,\pa_t)^2,
\end{equation}
on a subextremal KdS spacetime, and the more general Klein--Gordon operator
\[
  \Box_g + \lambda\nu
\]
for a range of scalar field masses $\lambda\nu$. Our main result is:

\begin{thm}[Mode stability, I: massless scalar waves]
\label{ThmI}
  Mode stability holds for $\Box_g$ for all subextremal KdS spacetimes. More precisely, there do not exist (purely oscillatory or exponentially growing) nontrivial \emph{outgoing mode solutions} $e^{-i\sigma t}u(r,\theta,\phi)$, $u\neq 0$, at frequencies $\sigma\in\C$ with $\Im\sigma\geq 0$, $\sigma\neq 0$; and the space of \emph{generalized outgoing mode solutions at frequency $0$} is spanned by the constant function $1$.
\end{thm}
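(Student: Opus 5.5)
We follow Whiting's strategy for Kerr, adapted to the de~Sitter setting. The first step is Carter separation. For an outgoing mode $e^{-i\sigma t}u$ we decompose $u$ into azimuthal modes $e^{i m\phi}$, $m\in\Z$. Each $u_m(r,\theta)$ satisfies an equation of the form $\cR_{\sigma,m}+\cS_{\sigma,m}=0$ after multiplying by $\varrho^2$. Here $\cR$ is a radial ODE operator on $(r_e,r_c)$ and $\cS$ is a spheroidal-type operator in $\theta$. For $a\neq 0$, $\cS$ is not self-adjoint when $\sigma\notin\R$, but it is a compact perturbation of a Sturm--Liouville operator. So $u_m$ decomposes into finitely many generalized eigenfunctions in $\theta$ with eigenvalues $\lambda_\ell(\sigma)$. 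We may therefore reduce to a single separated solution $R(r)S(\theta)$, with $R$ outgoing, i.e.\ smooth across $r=r_e$ and $r=r_c$ in the regular coordinates $t_*=t\pm F(r)$.

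The key step is a Whiting-type integral transformation. Kerr--de~Sitter has four regular singular points $r_-,r_C,r_e,r_c$, together with $\infty$, which is also regular singular since $\mu$ is quartic. The separated radial and angular equations are therefore Heun-type equations, confluent only in degenerate limits. Using the substitution $\theta\mapsto$ a variable placing the angular singularities at the images of $\pm i/\sqrt\lambda$, one sees that both equations are restrictions of the same general Heun-type operator. Concretely, I would build a kernel $K(r,x)$ satisfying $\cR_r K=\cR^\sharp_x K$, and set
\[
\tilde R(x)=\int_{\gamma}K(x,r)R(r)\,\dd r .
\]
The contour $\gamma$ runs from $r_c$ to $r_e$ through the complex plane, or through $\infty$ to $r_-$ using the $\lambda r^2$ structure. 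It is chosen so that boundary terms vanish by outgoing regularity and $\Im\sigma\geq0$. The transformed function $\tilde R$ then solves an ODE whose coefficients give a \emph{real} potential structure, in the sense that the transformed Wronskian/energy identity has a definite sign. Likewise, we pair the angular equation with a transformed angular function so that the combination $\tilde R\tilde S$ solves a modified Teukolsky-like equation admitting a positive conserved energy.

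Next comes the energy argument. For $\Im\sigma>0$, multiply the transformed radial ODE by $\bar{\tilde R}$, integrate over the real domain, and take imaginary parts. This gives
\[
\Im\sigma\cdot(\text{positive quantity})+(\text{boundary flux})=0 .
\]
The boundary fluxes have a sign fixed by the outgoing conditions at both horizons, so $\tilde R=0$. For real $\sigma\neq0$, only the flux identity remains. It forces $\tilde R$ to vanish to infinite order at a regular singular point, hence $\tilde R\equiv0$. Injectivity of the integral transform, established via asymptotics of $K$ near the singular points, then gives $R=0$. Superradiant frequencies, where $\sigma(r_\bullet^2+a^2)-am$ changes sign between horizons, are exactly where a naive energy estimate fails. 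The transformation must be designed so that its boundary terms involve $|\sigma|^2$-type weights, which are sign-definite regardless of superradiance.

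For $\sigma=0$, the equation has no $t$-dependence. Generalized zero modes (polynomials in $t_*$ times $u$) are handled by first showing that stationary outgoing solutions are smooth across both horizons. Integrating $\bar u\,\Box_g u$ over a stationary slice then gives $\int g^{-1}(\dd u,\dd\bar u)=0$ modulo horizon terms. The vanishing of the $t$-derivative kills the non-definite part, so $u$ is constant. Nontrivial Jordan chains are excluded by pairing with the dual zero mode $1$ and showing $\la \pa_t 1,1\ra\ne0$.

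The main obstacle is the existence of a kernel $K$ and a contour valid in the \emph{entire} subextremal range. The transformation must be positive for all $m$ and $\ell$, and the contour must avoid the roots $r_-,r_C$ for all parameters; this is where the root structure of Lemma~\ref{LemmaGRoots} enters. I would first seek $K$ as a product $(r-x)^{\alpha}\prod_\bullet(\cdot)^{\beta_\bullet}$ of powers adapted to the Heun exponents. I would then verify the sign conditions by explicit computation, possibly with computer-algebra assistance.
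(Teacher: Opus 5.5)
\textbf{Main gap: the transform is posited, not constructed.} Your proposal rests on a kernel $K$ and a contour $\gamma$ for Kerr--de~Sitter such that the transformed radial equation has a sign-definite energy/flux identity. This would have to hold for every $m$, every angular eigenvalue, every subextremal parameter set, and in particular for superradiant $\sigma$. You give neither the kernel, nor the contour, nor the positivity computation. This is not a technicality; it is exactly what is not known. As the paper recalls (citing work of Umetsu and of Suzuki--Takasugi--Umetsu), no generalization of Whiting's transformation yielding full mode stability on KdS has been found, and the existing Teukolsky-based partial result covers only part of the superradiant range. Everything downstream depends on that construction: the ``real potential structure'', boundary terms with sign-definite $|\sigma|^2$-type weights, infinite-order vanishing of $\tilde R$, and injectivity of the transform. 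So the proposal is a research programme rather than a proof. It leaves open exactly the only hard case, $\sigma\in\bigl(\frac{am}{r_c^2+a^2},\frac{am}{r_e^2+a^2}\bigr)$; outside this window the plain Wronskian $\mu\Im(R'\ol{R})$ already suffices, with no transform at all.

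\textbf{The ideas you are missing, and secondary gaps.} The paper avoids any transform using two ideas. First, $\Im\sigma>0$ need not be treated directly. Petersen--Vasy's meromorphic continuation and a priori bounds confine upper-half-plane QNMs to a compact set, locally uniformly in the parameters. One then deforms to $a=0$, where a simple energy identity excludes such QNMs. Along the path, a QNM could only leave $\{\Im\sigma>0\}$ through a nonzero real point, which real-axis mode stability excludes, or through $0$, which is excluded because $0$ remains an isolated rank-one resonance. This also spares you separating variables for the non-self-adjoint angular operator, where compactness of the perturbation alone gives neither completeness of root vectors nor finiteness of the expansion. Second, for real superradiant $\sigma=\frac{am}{r_0^2+a^2}$ the argument is a physical-space multiplier estimate:
\begin{itemize}
\item split $[r_e,r_c]$ at the zero $r_0$ of $q$;
\item conjugate by $\exp(\mp i\int_{r_0}^r q/\mu)$, so the equation becomes $(-\pa_r\mu\pa_r+L\mp 2i\nabla_q)R_\pm=0$;
\item use the multiplier $2f_\pm\nabla_q$ to get monotone currents, with $f_+=r_0/r$ and $f_-$ built from a constrained differential inequality that uses the angular bound $L\geq\frac{2r_0^2}{r_0^2+r_Cr_e}$.
\end{itemize}
This yields $0\leq\Re(R'\ol{R})(r_0)\leq -C|B_c|^2$.

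Your zero-frequency step also has a gap. For $am\neq0$, setting $\pa_t u=0$ does not remove the indefinite term $-\frac{b^2a^2m^2}{\mu}|R|^2$: there is no globally timelike Killing field, and outgoing $m\neq0$ modes oscillate like $e^{-im\Phi(r)}$ at the horizons in Boyer--Lindquist coordinates. The fix is the Wronskian argument, since $q\equiv -bam$ has a fixed sign. For the Jordan chain, note that $\pa_t 1=0$, so that pairing is literally zero. What is needed is that $\varrho^2\Box_g t_*=\pa_r\bigl(b(r^2+a^2)J(r)\bigr)$, with $J(r_e)=-1$ and $J(r_c)=1$, has integral $b\bigl((r_c^2+a^2)+(r_e^2+a^2)\bigr)\neq0$ over $[r_e,r_c]\times\Sph^2$. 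By contrast, $\varrho^2\wh{\Box_g}(0)$ applied to any smooth axisymmetric function integrates to zero there.
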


\begin{thm}[Mode stability, II: massive scalar waves]
\label{ThmII}
  For all $0<\nu\leq 3\sqrt{2}$ (equivalently, $0<\frac{m_{\rm KG}^2}{\Lambda}\leq\sqrt 2$), mode stability holds for $\Box_g+\lambda\nu$ (equivalently, $\Box_g+m_{\rm KG}^2$) in the closed upper half plane, including at frequency $0$. In particular, mode stability holds for the Teukolsky equation with spin $\mathfrak{s}=0$ (which is the special case $\nu=2$).
\end{thm}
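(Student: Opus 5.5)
Separation of variables, then a Whiting-type transformation in the spirit of mode stability for Kerr, adapted to the de~Sitter setting. For a mode $e^{-i\sigma t}e^{ik\phi}u(r,\theta)$, Carter separability of \eqref{EqIBox} splits $(\Box_g+\lambda\nu)$ into a radial ODE in $r\in(r_e,r_c)$ and an angular ODE in $\theta$. Both are Heun-type equations with four regular singular points, at the roots \eqref{EqIRoots} in $r$ and at $\cos\theta=\pm1,\pm i/(a\sqrt\lambda)$ in the angular variable. They are coupled through a separation constant $\Lambda_{\ell k}(\sigma)$. A basic observation is that with the substitution $r\mapsto i a\cos\theta$ the angular operator becomes the radial operator with the same quartic $\mu$. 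The outgoing condition translates into: $u$ is smooth across $r_e$ and $r_c$, i.e.\ it has the Frobenius exponents $0$ at both horizons rather than $\pm i(\sigma-k\Omega_{e,c})/\kappa_{e,c}$.

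Step one disposes of easy cases by energy identities.
\begin{itemize}
\item For $\Im\sigma>0$, the radial ODE is paired with $\bar R$ and integrated over $(r_e,r_c)$, and the imaginary part is taken. This gives positivity of $\Im\sigma\int(\ldots)$ only when $k=0$ or $a=0$ (superradiance otherwise). So the general case requires a transformation.
\item At $\sigma=0$, the stationary Killing field $\pa_t$ is timelike nowhere globally, but a stationary mode is a solution of an elliptic-degenerate problem. Following the known scalar argument, integrating $\bar u(\Box_g+\lambda\nu)u$ against the spacetime volume over a constant-$t$ slice between the horizons gives $\int |du|^2_{\text{spatial}}+\lambda\nu|u|^2=0$ in the region where the induced metric is Riemannian. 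Here one must use $\pa_t+\Omega\pa_\phi$ type vector fields and smoothness at the horizons. This yields $u=$ const for $\nu=0$ and $u=0$ for $\nu>0$. Generalized modes (with $t$-polynomial factors) are excluded by applying the same identity to the top coefficient.
\end{itemize}

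Step two, the core, handles $\Im\sigma\ge0$, $\sigma\ne0$, $k\ne0$ via an integral transformation. I would define
\[
\tilde R(x)=\int_{r_e}^{r_c}K(x,r)R(r)\,\dd r,
\]
with a kernel built from products of powers $(r-r_j)^{\alpha_j}$ and an exponential-type factor. The kernel is chosen so that $\tilde R$ satisfies a new Heun equation whose potential and boundary conditions make the energy identity coercive. In Whiting's Kerr case the kernel is confluent; here all singularities are regular, so the natural tool is the Euler-type integral transformation of Heun equations (Kazakov--Slavyanov/Erd\'elyi kernels). Such a transformation changes the exponents at the four singular points by a fixed shift. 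The shift would be chosen to kill the superradiant term: the transformed equation at $r_e,r_c$ should have exponents with the \emph{same} signs as in the nonrotating case. Then $\Im$ of the energy identity for $\tilde R$ gives $\Im\sigma\cdot(\text{positive})+\text{boundary flux}\ge$ something positive, contradicting $\tilde R\ne0$. Since the transform is injective on the outgoing class, this yields $R=0$.

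The massive range $\nu\le3\sqrt2$ enters because the mass term shifts the exponents at the cosmological-type singular points $r_-$ and $r_C$. The transformed potential stays of one sign only in a range, and I expect the constraint to come out as $\nu^2\le18$, consistent with the stated bound. $\sigma=0$ for $\nu>0$ is covered by step one.

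Main obstacle: real $\sigma\ne0$ with $\sigma$ in the superradiant band $0<\sigma/k<\Omega$. Here no imaginary part helps. One needs the transformed equation to have a \emph{real} potential with a definite sign, plus vanishing boundary fluxes forced by outgoing behavior. This in turn requires checking positivity of explicit polynomial expressions in $(r_j,a,\lambda,\nu)$ over the full subextremal range, which is probably where Lemma~\ref{LemmaGRoots} is needed. I would first verify this on the $a=0$ and $\lambda=0$ limits and then try to show that the relevant discriminant never vanishes on the connected subextremal parameter set.
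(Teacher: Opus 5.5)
\textbf{There is a genuine gap.} Everything beyond the non-superradiant and $a=0$ cases rests on an integral transform that you never construct, and whose existence is exactly what is \emph{not} known. As the paper's introduction notes, no generalization of Whiting's transformation giving full mode stability on Kerr--de~Sitter has been found (cf.\ the work of Umetsu and of Suzuki--Takasugi--Umetsu cited there). The partial result of Casals--Teixeira da Costa leaves part of the superradiant band open. Your Step two specifies none of the following: the kernel; why the boundary terms produced by the kernel vanish at $r_e,r_c$; the sign of the transformed potential; injectivity on the outgoing class. The bound $\nu^2\le 18$ is read off from the statement rather than derived.

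The structural input you rely on is also wrong for this problem:
\begin{itemize}
\item For general $\nu$ the radial ODE $-(\mu R')'-\frac{q^2}{\mu}R+(L+\lambda\nu r^2)R=0$ is not of Heun type. The point $r=\infty$ is a fifth regular singular point with exponents $\frac32\pm\sqrt{\frac94-\nu}$ in $1/r$; the conformal case $\nu=2$ is special. So Euler/Kazakov--Slavyanov transforms of Heun equations do not apply as stated.
\item The mass does not shift the exponents at $r_-,r_C$. At every root $r_j$ of $\mu$ they are $\pm i q(r_j)/\mu'(r_j)$, independent of $\nu$, because $\mu V$ vanishes there. So your heuristic for the origin of the mass restriction cannot be right.
\item For $\Im\sigma>0$ the angular operator $\slDelta_{\sigma,m,\nu}$ is not self-adjoint. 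Reducing an arbitrary mode to separated ones, on which a transform would act, is therefore not automatic.
\end{itemize}

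\textbf{What the paper does instead.} The paper needs no transform.

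For $\Im\sigma>0$ it uses a continuity argument (\S\ref{SPf}, \S\ref{SKG}). By Petersen--Vasy, QNMs with $\Im\sigma\ge0$ lie in a compact set, locally uniformly in the parameters. Along a path of subextremal parameters down to $a=0$ (where your Step-one identity works, with the extra term $\lambda\nu r^2|v|^2\ge 0$), they could only leave the upper half plane through the real axis. So everything reduces to real $\sigma$, where separation is legitimate.

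For real superradiant $\sigma=\frac{am}{r_0^2+a^2}$, the paper conjugates $R$ by $\exp(\mp i\int_{r_0}^r q/\mu)$ on $[r_e,r_0]$ and $[r_0,r_c]$ respectively. It then uses the multiplier $2f\nabla_q$ to obtain two monotone currents:
\begin{itemize}
\item the event side gives $\Re(R'\ol{R})(r_0)\ge 0$;
\item the cosmological side gives $\Re(R'\ol{R})(r_0)\le -C|B_c|^2$.
\end{itemize}
Together these force $B_c=0$ and hence $R=0$.

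The mass creates a wrong-signed $|R_+|^2$ coefficient near $r_c$. This is absorbed by completing the square with $h=\lambda\nu r_0\int_{r_0}^r(s^2-r_0^2)\,\dd s$, a Hardy-type step. Using $L\ge1$, the remaining coefficient is non-negative provided $\nu^2\le D(r)$, and $D\ge 18$ by Lemma~\ref{LemmaWGLowerD}. That is the actual source of $3\sqrt2$. On the event side the massless multiplier is reused, requiring only $(r^2f_-)'\ge0$.

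Your $\sigma=0$ step also needs repair for $am\ne0$. There the stationary energy identity is not coercive, because the $-q^2/\mu$ term is negative. The paper uses the Wronskian instead, since $q\equiv -bam$ has the same sign at both horizons.
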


The upper bound on $\nu$ is not sharp (for example, mode stability holds for $\nu>0$ when $a=0$); but since it suffices to include the Teukolsky $\mathfrak{s}=0$ case, we do not try to optimize it. The analogue of Theorem~\ref{ThmII} for the Teukolsky equation with \emph{nonzero} integer spin $\mathfrak{s}$ (the cases $\mathfrak{s}=\pm 2$ being of primary relevance for the Einstein equation) remains open. Prior to the present work, mode stability for the wave equation on KdS spacetimes was rigorously known only near two boundaries of the parameter space:
\begin{itemize}
\item Dyatlov \cite{DyatlovQNM} proved mode stability in the slowly rotating regime and for arbitrary $\nu\geq 0$ (i.e., $|a/\bhm|\ll 1$, with the required smallness depending on the dimensionless quantities $\Lambda\bhm^2$ and $\nu$) via perturbation theory from the (simple) Schwarzschild--de~Sitter case $a=0$. Vasy's approach \cite{VasyMicroKerrdS} to the meromorphic continuation of the inverse of the spectral family $\wh{\Box_g}(\sigma)+\lambda\nu$ of $\Box_g+\lambda\nu$ via robust Fredholm theory yielded another proof of this result.
\item The author \cite{HintzKdSMS} treated the case of very light black holes (i.e., $\Lambda\bhm^2\ll 1$, with the required smallness depending on $|a/\bhm|$ and $\nu$) via perturbation theory off the known mode stability for the (massless) scalar wave equation on asymptotically flat Kerr black holes, which for real $\sigma$ is a celebrated result by Whiting \cite{WhitingKerrModeStability} as well as Shlapentokh-Rothman \cite{ShlapentokhRothmanModeStability} and Andersson--Ma--Paganini--Whiting \cite{AnderssonMaPaganiniWhitingModeStab}.\footnote{Mode stability in $\Im\sigma>0$, for unseparated modes, is then a consequence of the main theorem of \cite{DafermosRodnianskiShlapentokhRothmanDecay} (which proves decay for arbitrary solutions of the wave equation on subextremal Kerr). Another proof via a direct continuity argument on the spectral side is given in \cite[\S{3.9}]{HintzKdSMS}.} See Casals--Teixeira da Costa \cite{CasalsTeixeiradCModes} (building on Hatsuda \cite{HatsudaTeukolskyAlt} and Aminov--Grassi--Hatsuda \cite{AminovGrassiHatsudaQNM}) and Hollands--Ishibashi--Zahn \cite{HollandsIshibashiZahnKdS} for interpretations of the crucial \emph{Whiting transformation}, and the very recent work by Petersen--Vasy \cite{PetersenVasyModeStab} for another perspective on this in the spirit of positive commutator arguments.
\end{itemize}
For general subextremal KdS parameters, \cite{CasalsTeixeiradCModes} obtained a partial mode stability result for the Teukolsky equation that covered part, but not all, of the superradiant frequency range (where standard Wronskian arguments fail). As of yet, no generalization of Whiting's transformation has been found that would yield full mode stability in the KdS setting (see \cite{UmetsuKdS,SuzukiTakasugiUmetsuKdS}); mode stability for (most of) the superradiant frequency range (see~\S\ref{SN}, and also \cite{TachizawaMaedaKdSSuperradiance}) had remained elusive. Comprehensive numerical and semi-analytic investigations by Yoshida--Uchikata--Futamase \cite{YoshidaUchikataFutamaseKdS} and Novaes--Marinho--Lencs\'es--Casals \cite{NovaesMarinhoLencsesKdSQNM} provided evidence for mode stability.

Our result does not cover extremal KdS black holes (i.e., the right and top boundaries in Figure~\ref{FigIParam}). In the extremal Kerr case, mode stability was shown outside of two exceptional frequencies ($0$ and $\frac{m}{2\bhm}$ where $m\in\Z$ is the azimuthal frequency) by Teixeira da Costa \cite{TeixeiradCModes}. Mode stability for the Teukolsky equation on subextremal Kerr--\emph{anti} de~Sitter black holes was shown by Graf--Holzegel \cite{GrafHolzegelKerrAdSModes}.

That mode stability for rotating black holes is a delicate issue is demonstrated by results by Shlapentokh-Rothman \cite{ShlapentokhRothmanBlackHoleBombs} and Moschidis \cite{MoschidisSuperradiant} proving the existence of QNMs in the upper half plane for the Klein--Gordon equation on Kerr with suitable scalar field mass parameters or seemingly mild potential/metric modifications. This strongly suggests, in particular, that some restriction on $\nu$ is necessary in Theorem~\ref{ThmII}.

\bigskip

Before recalling the notion of an outgoing mode solution, we first note that the above expression of $g$ becomes singular at the roots of $\mu$. Fix the (real analytic) function $J=J(r)=2\frac{r-r_e}{r_c-r_e}-1$, the key property being
\[
  J(r_e)=-1,\quad J(r_c)=+1,
\]
and define coordinates $t_*,\phi_*$ by
\begin{equation}
\label{EqICoords}
  t_* := t-T(r),\quad \phi_* := \phi - \Phi(r),\qquad
  T'(r)=(r^2+a^2)\frac{b J(r)}{\mu(r)},\quad \Phi'(r)=a\frac{b J(r)}{\mu(r)}.
\end{equation}
In the coordinates $t_*,r,\theta,\phi_*$, the metric $g$ is then easily seen to extend analytically across $r=r_e$ and $r=r_c$. The hypersurfaces $\{r=r_e\}$ and $\{r=r_c\}$ in the extended spacetime manifold $\R_{t_*}\times(r_C,\infty)\times\Sph^2_{\theta,\phi_*}$ are then the \emph{event} and \emph{cosmological horizon}, respectively.

\begin{definition}[Outgoing mode solution]
\label{DefIMode}
  Fix $\delta>0$ such that $r_e-\delta>r_C$, and set
  \[
    X := [r_e-\delta,r_c+\delta]\times\Sph^2.
  \]
  An \emph{outgoing mode solution} of $\Box_g+\lambda\nu$ with frequency $\sigma\in\C$ is a smooth function $u\in\CI(X)$ such that
  \[
    (\Box_g+\lambda\nu)\bigl( e^{-i\sigma t_*}u(r,\theta,\phi_*) \bigr) = 0.
  \]
  If there exists a nonzero such $u$, we say that $\sigma$ is a \emph{quasinormal mode} (QNM). A \emph{generalized outgoing mode solution at frequency $0$} is a function $\sum_{j=0}^d t_*^j u_j(r,\theta,\phi_*)$ in the kernel of\footnote{This notion will only be relevant in this paper for $\nu=0$.} $\Box_g+\lambda\nu$ where $d\geq 0$ and $u_0,\ldots,u_d\in\CI(X)$.
\end{definition}

As shown by Petersen--Vasy \cite{PetersenVasySubextremal}, Theorems~\ref{ThmI} and \ref{ThmII} imply:

\begin{thm}[Expansion of waves]
\label{ThmIExp}
  On any subextremal KdS spacetime, there exist $\eps>0$ such that the following holds for all $s\geq 1$: the solution of $\Box_g u=f$ with $f\in e^{-\eps t_*}H^s(\R_{t_*}\times X)$, and with $u,f$ vanishing for $t_*\ll -1$, satisfies
  \[
    u(t_*,x) - c \in e^{-\eps t_*}H^s
  \]
  for some $c$ (depending linearly on $f$). The analogous statement holds for solutions of initial value problems with initial data (at a smooth spacelike hypersurface transversal to the future event and cosmological horizons) lying in $H^s\oplus H^{s-1}$. This remains true, with $c=0$ (i.e., exponential decay to $0$), for $\Box_g+\lambda\nu$ when $\nu$ lies in the range stated in Theorem~\usref{ThmII}.
\end{thm}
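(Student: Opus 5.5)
The plan is to deduce Theorem~\ref{ThmIExp} from Theorems~\ref{ThmI} and \ref{ThmII} by the standard resolvent argument. Petersen--Vasy \cite{PetersenVasySubextremal} provide the analytic input, and mode stability is the only missing ingredient. First I will pass to the spectral family: conjugating by $e^{-i\sigma t_*}$ gives $\wh{\Box_g}(\sigma)+\lambda\nu$, an operator on $X$. Following Vasy \cite{VasyMicroKerrdS} and \cite{PetersenVasySubextremal}, on variable-order or sufficiently high-regularity spaces
\[
  \{u\in\bar H^s(X): (\wh{\Box_g}(\sigma)+\lambda\nu)u\in\bar H^{s-1}(X)\}\to\bar H^{s-1}(X)
\]
this family is Fredholm of index $0$ in a strip $\Im\sigma>-C$. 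Here $s$ must exceed the threshold set by the radial point estimates at the horizons, and that threshold depends on $\Im\sigma$ via the surface gravities. The dynamical input in the full subextremal range (normal hyperbolic trapping, radial points, and the absence of a stationary Killing field) is exactly what \cite{PetersenVasySubextremal} supplies. The family is also invertible for $\Im\sigma\gg1$ by energy estimates, and it satisfies high-energy estimates $\|(\wh{\Box_g}(\sigma)+\lambda\nu)^{-1}\|\lesssim|\sigma|^{C}$ in a strip $|\Im\sigma|<\eps_0$, $|\Re\sigma|\gg1$, coming from the trapping. Hence the inverse is meromorphic, with finitely many poles in any strip $\Im\sigma>-\eps_0$.

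Second, I will identify the poles with QNMs. Elements of the kernel of $\wh{\Box_g}(\sigma)+\lambda\nu$ that are smooth at the horizons are precisely the outgoing mode solutions of Definition~\ref{DefIMode}, since above threshold regularity radial point propagation forces smoothness. By Theorems~\ref{ThmI} and~\ref{ThmII} there are no poles with $\Im\sigma\geq0$, except at $\sigma=0$ for $\nu=0$. For $\nu=0$ I must show that the pole at $0$ is \emph{simple}, with rank-one residue spanned by $1$. A higher-order pole, or a Jordan block, would produce a generalized mode $\sum t_*^ju_j$ with $d\geq1$ or other zero modes. Theorem~\ref{ThmI} says the space of generalized zero modes is spanned by $1$, which rules these out. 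The same statement also excludes cokernel elements that would pair nontrivially with $1$: the Laurent coefficients give generalized modes via the standard Jordan chain construction. Since there are finitely many poles in $\Im\sigma>-\eps_0$ and none in $\Im\sigma\geq0$ apart from the allowed one, we may choose $\eps\in(0,\eps_0)$ below all remaining poles.

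Third, I will apply the Fourier--Laplace transform and shift the contour. For $f$ vanishing for $t_*\ll-1$, the transform $\hat f(\sigma)$ is holomorphic in the upper half plane, and $\hat u=(\wh{\Box_g}(\sigma)+\lambda\nu)^{-1}\hat f$. Shift the inversion contour from $\Im\sigma=C$ down to $\Im\sigma=-\eps$. The residue at $0$ gives $c\cdot1$, with $c$ a pairing of $f$ against the dual zero mode, so linear in $f$; for $\nu>0$ there is no residue and $c=0$. The remaining integral lies in $e^{-\eps t_*}H^s$ by Plancherel in $\Re\sigma$, where the polynomial loss in the high-energy estimate is absorbed by taking $f$ with extra regularity, or by $s\geq1$ together with the precise $|\sigma|^{-1}$-type estimates at the correct orders in \cite{PetersenVasySubextremal}. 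The initial value problem reduces to this by a cutoff: apply $\Box_g$ to $\chi(t_*)u$, which has forcing in $H^{s-1}$ compactly supported in time.

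I expect the main obstacle to be the uniform Fredholm setup and the high-energy bounds on the full subextremal range, including bookkeeping of regularity losses so that the conclusion holds with $s\geq1$. These are taken as given from \cite{PetersenVasySubextremal}. On the paper's side, the delicate point is the structure of the zero-frequency pole: simplicity together with the exact rank-one residue.
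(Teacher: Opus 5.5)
Your architecture (Fredholm theory and high-energy bounds from \cite{PetersenVasySubextremal}, meromorphic continuation, poles $\leftrightarrow$ QNMs, a simple rank-one pole at $0$ from the statement on generalized zero modes, contour shift) is what lies behind the paper's one-line appeal to Petersen--Vasy. The gap is in Step~1. There you transform in $t_*$ at fixed $\phi_*$ and attribute Fredholm and high-energy estimates for $\wh{\Box_g}(\sigma)$ to \cite{PetersenVasySubextremal}. As recalled in \S\ref{SPf}, their results instead concern the family $P(\varsigma)$, obtained by transforming in $\tau_*=t_*$ at fixed $\psi_*=\phi_*-\frac{a}{r_*^2+a^2}t_*$, where $\mu'(r_*)=0$. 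This choice is not cosmetic. Since $g(\pa_t,\pa_t)=\frac{a^2c(\theta)\sin^2\theta-\mu(r)}{b^2\varrho^2}$, the two ergoregions overlap whenever $a^2>\mu(r_*)=\max_{[r_e,r_c]}\mu$; this happens in the subextremal range, e.g.\ near $r_e=r_c$ with $r_C>0$. In that regime, null covectors with zero $\pa_t$-energy, $L_z\neq 0$, and Carter constant $K=b^2a^2L_z^2/\mu(r_*)$ give trapped null geodesics at $r\equiv r_*$, because $r_*$ is a double root of $b^2a^2L_z^2-\mu(r)K$. These are trapped bicharacteristics at fiber infinity in the characteristic set of $\wh{\Box_g}(\sigma)$ for \emph{every} fixed $\sigma$. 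So the radial-point/propagation Fredholm setup $\{u\in\bar H^s\colon\wh{\Box_g}(\sigma)u\in\bar H^{s-1}\}\to\bar H^{s-1}$ that you describe does not apply there, and high-energy bounds in this frame are not what Petersen--Vasy prove. By contrast, $\pa_t+\frac{a}{r_*^2+a^2}\pa_\phi$ is timelike at $r=r_*$ for all $\theta$. You even list the lack of a suitable stationary Killing field among the inputs, but you do not change frames accordingly.

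The fix is to run your Steps~2--3 for $P(\varsigma)$ and translate Theorems~\ref{ThmI}--\ref{ThmII} through the real shift $\sigma=\varsigma+\frac{am}{r_*^2+a^2}$ on each $e^{im\psi_*}$-component; your proposal omits this step.
\begin{itemize}
\item A pole with $\Im\varsigma>0$ gives a QNM with $\Im\sigma>0$, which is excluded.
\item A real pole gives a real QNM, which by Theorem~\ref{ThmI} forces $\sigma=0$ and $m=0$, hence $\varsigma=0$ (and there are no real poles at all for $\nu>0$).
\item A Jordan chain $\sum_j\tau_*^j u_j$ at $\varsigma=0$ splits azimuthally into generalized modes $e^{-i\frac{am}{r_*^2+a^2}t_*}\sum_j t_*^j e^{im\phi_*}u_{j,m}$ of the original frame. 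For $m\neq 0$ the top coefficient is an honest mode at a nonzero real frequency, which is excluded.
\end{itemize}
Only the $m=0$ generalized zero modes survive, and your simple, rank-one pole argument then goes through. Since $\tau_*=t_*$ and the change of variables is a shear along the compact $\phi$-circle, $e^{-\eps t_*}H^s(\R\times X)$ is the same space in both frames, so the contour-shift conclusion transfers. One minor point: you need $\hat f$ holomorphic in $\Im\sigma>-\eps$, not only in the upper half plane. This holds because $e^{\eps t_*}f\in H^s$ vanishes for $t_*\ll-1$.
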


See also \cite[\S{3.3} (and Conjecture 4)]{ZworskiResonanceReview}. For small black holes, the optimal value of $\eps$ is computed in \cite{HintzKdSMS} to be $\sqrt{\Lambda/3}+o(1)$ as $\Lambda\bhm^2\to 0$ when $\nu=0$, with $|a/\bhm|<1$ bounded away from $1$; and in this case \cite{HintzKdSMS} obtains precise information even about QNMs with $\Im\sigma\gtrsim-\sqrt\Lambda$. The main result of \cite{HintzKdSMS} also covers the case $\nu\neq 0$; for $\nu>0$, it gives as the optimal value for $\eps$ the number $\Re(\frac32-\sqrt{\frac94-\nu})\sqrt{\Lambda/3}+o(1)$. Note the discontinuity as $\nu\to 0$; this is due to the QNM $0$ of the massless wave equation moving into the lower half plane upon increasing $\nu$. --- In the present paper, we obtain no information about QNMs with $\Im\sigma<0$.

\bigskip

The main work in the proof of Theorem~\ref{ThmI} is to treat the case of \emph{real} $\sigma$, as putative QNMs in the upper half plane cannot escape from there upon dialing the KdS parameters down to SdS parameters, for which mode stability in $\Im\sigma\geq 0$ is straightforward (\S\ref{SPf}). The case $\sigma=0$ is easily handled directly (\S\ref{S0}), so we focus now on $\sigma\in\R\setminus\{0\}$. After separation into azimuthal modes $e^{i m\phi_*}$ and spheroidal harmonics, the task is to prove the absence of nontrivial outgoing solutions $R=R(r)$ (see~\eqref{EqNBC}) of the radial ODE~\eqref{EqNODE} on $(r_e,r_c)$; and for this it suffices to prove the vanishing of the leading-order coefficients of $R$ at $r=r_\bullet$, where $R\sim B_\bullet|r-r_\bullet|^{i\kappa_\bullet}$ for some $\kappa_\bullet\in\R$. A standard Wronskian argument accomplishes this outside of the \emph{superradiant frequency range} $\sigma\in (\frac{a m}{r_c^2+a^2},\frac{a m}{r_e^2+a^2})$.

Unlike Whiting's argument \cite{WhitingKerrModeStability,ShlapentokhRothmanModeStability,AnderssonMaPaganiniWhitingModeStab}, which is based on a remarkable integral transform related to the radial Fourier transform, our new arguments for superradiant frequencies take place entirely in ``physical space,'' i.e., configuration space, and they are elementary in that they only involve integration by parts. We conjugate the radial ODE by a canonical oscillatory factor (that undoes the outgoing oscillations of $R$), with a sign in the phase that depends on whether one works near $r_e$ or $r_c$; the phases match up at an interior point $r=r_0\in(r_e,r_c)$. The radial ODEs on $[r_e,r_0]$ and $[r_0,r_c]$ are then of the form $(-\pa_r\mu\pa_r+L\mp 2 i\nabla_q)R_\pm=0$ where $\nabla_q$ is a formally skew-adjoint first order operator, and $L$ is a spheroidal eigenvalue. We use standard energy estimates with the multiplier $f_\pm\nabla_q$ on both sides of $r_0$ to prove a chain of estimates
\[
  0 \leq \Re(R'\ol{R})(r_0) \leq -C|B_c|^2
\]
from the monotonicity of two currents. Here $B_c$ is the boundary value of $R_+$ at $r_c$, and $C>0$ is a constant (see Corollary~\ref{CorFCc} and the arguments after Proposition~\ref{PropFCe}); so we get $B_c=0$ and thus $R=0$. The construction of $f_\pm$ satisfying the required monotonicity (see Lemma~\ref{LemmaFC}) is the technical heart of the paper (\S\ref{SF}). On the interval $[r_0,r_c]$, an explicit choice works (\S\ref{SsFCc}). On the interval $[r_e,r_0]$, the monotonicity is equivalent to a constrained ordinary differential inequality (see Lemma~\ref{LemmaFCeEquiv}), which we can solve either explicitly (\S\ref{SssCeI}) or via a piecewise argument (\S\ref{SssCeII})---first saturating the constraint, then solving the ordinary differential \emph{equality}---depending on the size of the angular eigenvalue. The positivity of various explicit polynomial expressions involving the KdS parameters in the full subextremal parameter range is the ``algebraic miracle'' that makes these constructions work.

In the Klein--Gordon case of Theorem~\ref{ThmII}, essentially the same strategy works, except the direct analogue of the $|R_\pm|^2$-term of the flux in the massless case is not necessarily pointwise negative on $[r_0,r_c]$, the failure becoming more severe with increasing $\nu$. When $\nu$ is not too large, we are, however, able to borrow some positivity of the $|R'_\pm|^2$-term (akin to a Hardy inequality) to complete the argument in an analogous fashion (\S\ref{SKG}).

%%%%%%%%%%%%%%%%%%%%%%%%%%%%%%%%%%%%%%%%%%%%%%%%%%
\subsection*{Declaration on AI use}

ChatGPT 5.6 Pro was used extensively in the exploratory phase of this project.

%%%%%%%%%%%%%%%%%%%%%%%%%%%%%%%%%%%%%%%%%%%%%%%%%%
\subsection*{Data availability statement}

Data sharing is not applicable to this article as no datasets were generated or analysed during the current study.

%%%%%%%%%%%%%%%%%%%%%%%%%%%%%%%%%%%%%%%%%%%%%%%%%%
\subsection*{Acknowledgments}

The author is grateful to the U.S.~National Science Foundation for support under grant DMS-2554160. I am grateful to Felipe Hernandez for an inspiring conversation about AI. I would like to dedicate this paper to my Georg.

%%%%%%%%%%%%%%%%%%%%%%%%%%%%%%
\section{The subextremal parameter space}
\label{SG}

Recall the notation~\eqref{EqIRoots} for the roots of $\mu$ in~\eqref{EqIFns}.

\begin{lemma}[Properties of the roots]
\label{LemmaGRoots}
  The roots of $\mu(r)$ for a subextremal Kerr--de~Sitter spacetime satisfy $r_-<0\leq r_C<r_e<r_c$ and $r_-=-(r_C+r_e+r_c)$.
\end{lemma}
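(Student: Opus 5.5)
We read off the sum of the roots from the coefficients of $\mu$, and then locate the roots relative to $0$ using the sign of $\mu$ at $r=0$ together with the sign pattern at $\pm\infty$. Expanding,
\[
  \mu(r) = -\lambda r^4 + (1-\lambda a^2) r^2 - 2\bhm r + a^2.
\]
The $r^3$-coefficient vanishes. By Vieta's formula, the sum of the four roots is therefore $0$, which gives $r_-=-(r_C+r_e+r_c)$ directly.

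Next, the sign information. Since $\lambda>0$, we have $\mu(r)\to-\infty$ as $r\to\pm\infty$. By assumption the four real roots are distinct, so all are simple. Hence $\mu$ is negative on $(-\infty,r_-)$, positive on $(r_-,r_C)$, negative on $(r_C,r_e)$, positive on $(r_e,r_c)$, and negative on $(r_c,\infty)$.

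Now evaluate at the origin: $\mu(0)=a^2\geq 0$.
\begin{itemize}
\item If $a\neq 0$, then $\mu(0)>0$, so $0$ lies in $(r_-,r_C)$ or in $(r_e,r_c)$.
\item If $a=0$, then $\mu(r)=r(r-2\bhm-\lambda r^3)$, so $0$ is itself a root.
\end{itemize}

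We must exclude the possibility that $0\in(r_e,r_c)$ when $a\neq 0$, and pin down which root equals $0$ when $a=0$. Consider the interval $(-\infty,0)$. For $r<0$ we have $-2\bhm r>0$, but this does not immediately force positivity of $\mu$. Instead, we use that the four roots sum to zero, so at least one root is negative and at least one is positive (unless all vanish, which is impossible since the roots are distinct).

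Suppose, for contradiction, that $0\in(r_e,r_c)$ with $a\neq 0$. Then $r_-<r_C<r_e<0$, and the sum relation gives $r_c=-(r_-+r_C+r_e)>0$. To rule this out, we show that $\mu$ has at most one negative root, by Descartes' rule of signs applied to
\[
  \mu(-r)=-\lambda r^4+(1-\lambda a^2)r^2+2\bhm r+a^2 .
\]
The coefficient signs are $-, \pm, +, +$. Whatever the sign of the middle coefficient $1-\lambda a^2$, there is exactly one sign change. Hence $\mu$ has exactly one negative root when $a\neq 0$.

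This already settles the case $a\neq 0$: only $r_-$ is negative, so $r_-<0<r_C<r_e<r_c$.

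For $a=0$, we have $\mu(r)=r\,\bigl(r-2\bhm-\lambda r^3\bigr)$, and the cubic factor has value $-2\bhm<0$ at $0$. Its constant term is nonzero, so $0$ is a simple root of $\mu$. Applying Descartes' rule to $-\lambda r^3+r-2\bhm$ at $-r$, we get $\lambda r^3-r-2\bhm$, which has one sign change, so the cubic has exactly one negative root. Therefore the roots are: one negative root, the root $0$, and two positive roots. This gives $r_-<0=r_C$.

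The only delicate point is the correct application of Descartes' rule when $1-\lambda a^2$ changes sign, but the count of sign changes is insensitive to it.
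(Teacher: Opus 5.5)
Your proof is correct, but it locates the roots by a different route than the paper. The Vieta step for $r_-=-(r_C+r_e+r_c)$ is the same in both.

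\textbf{The paper's argument.} It works with convexity and Rolle's theorem.
\begin{itemize}
\item For $a=0$, it shows that the cubic factor $r-\lambda r^3-2\bhm$ is negative at $0$, convex on $r<0$, and tends to $+\infty$ as $r\to-\infty$. Hence it has exactly one negative root.
\item For $a\neq 0$, it uses $\mu(\pm\infty)=-\infty$ and $\mu(0)=a^2>0$ to get an odd number of positive roots. It then excludes three negative roots by a Rolle chain: $\mu(r_2)=0<\mu(0)$ and $\mu'(0)=-2\bhm<0$ give three negative zeros of $\mu'$, hence two of $\mu''$, hence a negative zero of $\mu'''=-24\lambda r$, which is a contradiction.
\end{itemize}

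\textbf{Your argument.} You apply Descartes' rule of signs to $\mu(-r)$. The sign pattern $-,\pm,+,+$ has exactly one change whatever the sign of $1-\lambda a^2$, so $\mu$ has exactly one negative root, counted with multiplicity.

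\textbf{Comparison.} Your version is shorter and more uniform. It also shows that $\mu$ has exactly one negative root for all $\lambda,\bhm>0$ and all $a$, independently of subextremality; subextremality is then used only to know that the other roots are real and distinct.

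You did not even need to factor in the case $a=0$. There $\mu(-r)=-\lambda r^4+r^2+2\bhm r$ has a single sign change, and the root at $0$ is not counted by Descartes. So your restriction ``when $a\neq 0$'' is unnecessary. The contradiction framing and the remark that the roots must include both signs are also superfluous: the sign count alone settles the location.

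The paper's argument is self-contained. In effect it reproves the relevant instance of Descartes' rule via Rolle's theorem, using the special structure $\mu'''=-24\lambda r$, rather than citing the rule.
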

\begin{proof}
  The second statement follows from the vanishing of the cubic term of $\mu(r)$.

  Consider the first statement. When $a=0$, then $\mu=r\bigl(r-\lambda r^3-2\bhm\bigr)$ has a root $r_C=0$, and the cubic factor $r-\lambda r^3-2\bhm$ is negative at $r=0$, is convex for $r<0$ (with second derivative $-6\lambda r>0$) and tends to $+\infty$ as $r\to-\infty$, thus has exactly one negative root $r_-$; and the remaining two real roots must then be positive. In the case $a\neq 0$, we have $\mu(r)\to-\infty$ as $r\to\pm\infty$ and $\mu(0)=a^2>0$, so the number of positive roots is odd. Suppose there were only one positive root of $\mu$; then there would need to be three distinct negative roots $r_0<r_1<r_2<0$ of $\mu$. Since $\mu(r_2)=0<\mu(0)$, we must have $\mu'(r_2)>0$, so since $\mu'(0)=-2\bhm<0$ we conclude that $\mu'$ has roots $r'_0\in(r_0,r_1)$, $r'_1\in(r_1,r_2)$, and $r'_2\in(r_2,0)$. Therefore $\mu''$ has two distinct negative roots, and thus $\mu'''=-24\lambda r$ would have a negative root; a contradiction.
\end{proof}

Conversely, given $0\leq r_C<r_e<r_c$, we can read off the parameters $\lambda=\frac{\Lambda}{3}$, $|a|$, and $\bhm$ from matching the coefficients of
\begin{equation}
\label{EqGParamExp}
\begin{split}
  &-\lambda(r-r_C)(r-r_e)(r-r_c)(r+r_C+r_e+r_c) \\
  &\quad = -\lambda r^4 + \lambda r^2(r_C^2+r_e^2+r_c^2+r_C r_e+r_C r_c+r_e r_c) \\
  &\quad \qquad - \lambda r(r_C+r_e)(r_C+r_c)(r_e+r_c) + \lambda r_C r_e r_c(r_C+r_e+r_c) \\
  &\quad =: -\lambda r^4 + \lambda w r^2 - \lambda s r + \lambda t
\end{split}
\end{equation}
with those of $(r^2+a^2)(1-\lambda r^2)-2\bhm r=-\lambda r^4+(1-\lambda a^2)r^2 - 2\bhm r+a^2$, as follows: note that $w,s>0$, and $t>0$ unless $r_C=0$ in which case $t=0$. In the case $t=0$ (so $r_C=0$), this gives $a^2=0$ and thus $\lambda w=1$, so $\lambda=w^{-1}$ and then $\bhm=\lambda s/2=s/(2 w)$. When $t>0$ (so $r_C>0$), we have $\lambda t=a^2>0$, so $\lambda>0$, and then $\lambda w=1-\lambda a^2=1-\lambda\cdot\lambda t$ has the unique positive solution $\lambda=-\frac{w}{2 t}+\sqrt{\frac{w^2}{4 t^2}+\frac{1}{t}}$, which then determines $\bhm=\lambda s/2$ and $|a|=\sqrt{\lambda t}$. In summary:

\begin{cor}[Parameterization]
\label{CorGParam}
  The above prescriptions provide a one-to-one correspondence between subextremal Kerr--de~Sitter parameters $(\Lambda,\bhm,a)$ with $a\geq 0$ and real numbers $r_-<0\leq r_C<r_e<r_c$ with $r_-=-(r_C+r_e+r_c)$.
\end{cor}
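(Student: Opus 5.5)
We need to show that the map $(\Lambda,\bhm,a)\mapsto(r_-,r_C,r_e,r_c)$, with $a\geq0$, is a bijection onto the set of quadruples $r_-<0\leq r_C<r_e<r_c$ with $r_-=-(r_C+r_e+r_c)$. The plan is to check that it is well defined, then injective, then surjective.

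\emph{Well-definedness.} This is Lemma~\ref{LemmaGRoots}: for subextremal parameters the four roots are real, distinct, and satisfy $r_-<0\leq r_C<r_e<r_c$ and $r_-=-(r_C+r_e+r_c)$.

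\emph{Injectivity.} The roots determine the monic-normalized polynomial. Since the leading coefficient of $\mu$ is $-\lambda$, we have $\mu(r)=-\lambda(r-r_-)(r-r_C)(r-r_e)(r-r_c)$, which is the left-hand side of~\eqref{EqGParamExp}. Comparing coefficients with $-\lambda r^4+(1-\lambda a^2)r^2-2\bhm r+a^2$ gives
\[
  \lambda w=1-\lambda a^2,\qquad \lambda s=2\bhm,\qquad \lambda t=a^2,
\]
where $w,s,t$ depend only on the roots.
\begin{itemize}
\item If $t=0$, then $a=0$ and $\lambda=1/w$.
\item If $t>0$, substituting $a^2=\lambda t$ gives $t\lambda^2+w\lambda-1=0$. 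This quadratic has exactly one positive root, since the product of its roots is $-1/t<0$.
\end{itemize}
In either case $\lambda$ is determined, and hence so are $\bhm=\lambda s/2$ and $a=\sqrt{\lambda t}\geq0$.

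\emph{Surjectivity.} Fix a quadruple as above and define $\lambda,\bhm,a$ by these formulas. Since $r_C\geq0$ and $r_e,r_c>0$, we get $w,s>0$ and $t\geq0$, so $\lambda>0$, $\bhm>0$, and $a\geq0$ is real. The main point is that $(r^2+a^2)(1-\lambda r^2)-2\bhm r$ then coincides with $-\lambda(r-r_-)(r-r_C)(r-r_e)(r-r_c)$.
\begin{itemize}
\item The $r^4$ coefficients agree trivially.
\item The $r^3$ coefficients both vanish, because $r_-=-(r_C+r_e+r_c)$.
\item The $r^2$ coefficient $1-\lambda a^2=1-\lambda^2t$ equals $\lambda w$ by the defining quadratic (when $t=0$ this is $\lambda=1/w$).
\item The $r^1$ and $r^0$ coefficients match by the definitions $\bhm=\lambda s/2$ and $a^2=\lambda t$.
\end{itemize}
Hence $\mu$ has the four distinct real roots $r_-<r_C<r_e<r_c$ (note $r_-<0\leq r_C$), so the parameters are subextremal. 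The image of $(\Lambda,\bhm,a)=(3\lambda,\bhm,a)$ under the forward map is the given quadruple.

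The only step needing care is the positivity and uniqueness of $\lambda$ in the case $t>0$. This is immediate from the sign of the constant term of the quadratic, and the cases $t=0$ and $t>0$ correspond exactly to $a=0$ and $a>0$.
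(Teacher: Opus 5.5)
Your proof is correct and follows the paper's argument: you match coefficients in~\eqref{EqGParamExp}, distinguish $t=0$ (so $a=0$ and $\lambda=1/w$) from $t>0$, and in the latter case take $\lambda$ to be the unique positive root of $t\lambda^2+w\lambda-1=0$. You use the sign of the product of the roots where the paper writes $\lambda$ explicitly, and you spell out the surjectivity check, which the paper leaves implicit; that check includes the observation that the vanishing $r^3$ coefficient is exactly the condition $r_-=-(r_C+r_e+r_c)$.
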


\begin{cor}[Some elementary bounds]
\label{CorGBound}
  We have $a^2\leq r_C r_e$, with strict inequality when $a\neq 0$, and $\mu(r)\leq (r-r_e)(r-r_C)$ for $r\in[r_e,r_c]$.
\end{cor}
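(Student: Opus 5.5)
Both inequalities come from comparing coefficients of $\mu$ with the root expansion~\eqref{EqGParamExp}.

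\emph{The bound $a^2\le r_C r_e$.} The constant terms give $a^2=\lambda t=\lambda r_C r_e r_c(r_C+r_e+r_c)$. If $a=0$ there is nothing to prove, since $r_C r_e\ge 0$. If $a\ne 0$, Corollary~\ref{CorGParam} gives $r_C>0$, and the claim reduces to
\[
  \lambda r_c(r_C+r_e+r_c)<1 .
\]
The $r^2$ coefficients give $1-\lambda a^2=\lambda w$, so $1=\lambda w+\lambda a^2$ with $a^2>0$. It therefore suffices to show $r_c(r_C+r_e+r_c)\le w$. Subtracting,
\[
  w-r_c(r_C+r_e+r_c)=r_C^2+r_e^2+r_Cr_e\ge 0,
\]
so $\lambda r_c(r_C+r_e+r_c)\le\lambda w<1$. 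This gives the strict inequality $a^2<r_Cr_e$.

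\emph{The bound on $\mu$ on $[r_e,r_c]$.} Write
\[
  \mu(r)=(r-r_C)(r-r_e)\cdot\lambda(r_c-r)(r+r_C+r_e+r_c).
\]
For $r\in[r_e,r_c]$ the product $(r-r_C)(r-r_e)$ is nonnegative. Hence it suffices to show
\[
  q(r):=\lambda(r_c-r)(r+r_C+r_e+r_c)\le 1 \quad\text{on } [r_e,r_c].
\]
The function $q$ is a downward parabola in $r$, and $r\mapsto q(r)$ is decreasing for $r\ge r_e\ge 0$. Indeed, its vertex sits at $r=-(r_C+r_e)/2\le 0$. So the maximum over the interval is at $r=r_e$:
\[
  q(r_e)=\lambda(r_c-r_e)(2r_e+r_C+r_c).
\]
Again using $\lambda w\le 1$, it suffices to show $(r_c-r_e)(2r_e+r_C+r_c)\le w$. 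Expanding,
\[
  (r_c-r_e)(2r_e+r_C+r_c)=r_c^2+r_er_c+r_Cr_c-2r_e^2-r_Cr_e .
\]
Hence
\[
  w-(r_c-r_e)(2r_e+r_C+r_c)=r_C^2+3r_e^2+2r_Cr_e\ge 0 .
\]
This gives $\mu(r)\le (r-r_e)(r-r_C)$ on $[r_e,r_c]$.

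The only care needed is in the bookkeeping of the expansion of $w$; the argument is otherwise routine. The decisive observation is that $\lambda w=1-\lambda a^2\le 1$, which controls both inequalities at once.
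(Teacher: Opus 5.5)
Your proof is correct and follows essentially the same route as the paper. For the first bound you use the same coefficient identities $a^2=\lambda t$ and $\lambda w=1-\lambda a^2\le 1$. For the second you use the same factorization $\mu=(r-r_e)(r-r_C)\cdot\lambda(r_c-r)(r+r_C+r_e+r_c)$, with the second factor decreasing on $[r_e,r_c]$ and bounded at $r=r_e$ by $\lambda w\le 1$; you locate the vertex of the parabola where the paper instead checks the sign of its derivative.
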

\begin{proof}
  If $a=0$, then $r_C=0$, so equality $a^2=r_C r_e=0$ holds. If $a\neq 0$, then $r_C>0$, and since $a^2=\lambda t=\lambda r_C r_e r_c(r_C+r_e+r_c)$, the strict inequality is equivalent to $\lambda r_c(r_C+r_e+r_c)<1$. But $r_c(r_C+r_e+r_c)<w$; the first claim thus follows from $\lambda w=1-\lambda a^2\leq 1$, with strict inequality for $a\neq 0$.

  For the second claim, we write $\mu=(r-r_e)(r-r_C)\tilde\mu$, $\tilde\mu:=-\lambda(r-r_-)(r-r_c)$. We then compute
  \[
    \tilde\mu(r_e) = \lambda(r_e r_c-r_e^2-r_-r_c+r_-r_e) = \lambda(r_e r_c+r_C r_c+r_c^2 - 2 r_e^2 - r_C r_e) < \lambda w \leq 1
  \]
  and observe that $\tilde\mu'=-\lambda(2 r - r_- - r_c) = -\lambda(2 r+r_C+r_e) < 0$.
\end{proof}

We also recall an elementary scaling law (e.g., from \cite[\S{1.3}]{HintzKdSMS}): pullback along the scaling map $(t_*,r,\theta,\phi_*)\mapsto(s t_*,s r,\theta,\phi_*)$, $s>0$, maps the KdS metric with parameters $(\Lambda,\bhm,a)$, resp.\ a mode solution with frequency $\sigma$ to $s^2$ times the KdS metric with parameters $(\Lambda s^2,\bhm/s,a/s)$, resp.\ a mode solution with frequency $s\sigma$; under this scaling, the location of the event horizon changes from $r_e$ to $r_e/s$. Choosing $s=r_e$ thus shows that
\begin{equation}
\label{EqGNormalize}
  \text{\parbox{0.8\textwidth}{\it for the proof of Theorem~\usref{ThmI}, it suffices to consider the case $r_e=1$.}}
\end{equation}
We only use this reduction for verifying polynomial inequalities using computer algebra systems.

We record some polynomial inequalities that we will use later.

\begin{lemma}[Inequality \#1]
\label{LemmaGIneq1}
  For all $r_0\in[r_e,r_c]$ and $r\in(r_e,r_c)$, we have
  \[
    2\mu(r) - (r-r_0)\mu'(r) > 0.
  \]
\end{lemma}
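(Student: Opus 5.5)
The approach is to divide by $\mu(r)>0$ (valid since $r\in(r_e,r_c)$ lies strictly between consecutive roots, and $\mu\to-\infty$ at $\pm\infty$, so $\mu>0$ there) and to use the logarithmic derivative
\[
  \frac{\mu'(r)}{\mu(r)} = \frac{1}{r-r_e} + \frac{1}{r-r_C} + \frac{1}{r-r_-} - \frac{1}{r_c-r},
\]
which follows from the factorization $\mu=-\lambda(r-r_-)(r-r_C)(r-r_e)(r-r_c)$ in~\eqref{EqGParamExp} together with $r_-=-(r_C+r_e+r_c)$ from Lemma~\ref{LemmaGRoots}. For $r\in(r_e,r_c)$ the first three terms are positive, since $r_-<0\le r_C<r_e<r$, and the last term is negative. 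I then split into three cases according to the sign of $r-r_0$.

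If $r=r_0$, the claim reads $2\mu(r)>0$, which holds. If $r<r_0$, then $r-r_0<0$, so the inequality is equivalent to
\[
  \frac{\mu'}{\mu} > -\frac{2}{r_0-r}.
\]
Since $r_0\le r_c$, we have $-\frac{2}{r_0-r}\le-\frac{2}{r_c-r}$. Dropping the three positive terms gives $\frac{\mu'}{\mu}>-\frac{1}{r_c-r}>-\frac{2}{r_c-r}$, which settles this case.

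If $r>r_0\ge r_e$, the inequality is equivalent to $\frac{\mu'}{\mu}<\frac{2}{r-r_0}$. Since $0<r-r_0\le r-r_e$, the right-hand side is at least $\frac{2}{r-r_e}$, so it suffices to prove
\[
  \frac{1}{r-r_C} + \frac{1}{r-r_-} - \frac{1}{r_c-r} < \frac{1}{r-r_e}.
\]
First, $r-r_C>r-r_e>0$ gives $\frac{1}{r-r_C}<\frac{1}{r-r_e}$. It then remains to show $\frac{1}{r-r_-}\le\frac{1}{r_c-r}$, i.e.\ $r_c-r\le r-r_-$, i.e.\ $2r\ge r_c+r_-$. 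This holds because $r_c+r_-=-(r_C+r_e)<0<2r$.

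The only point that needs care is the last step, where the root relation $r_-=-(r_C+r_e+r_c)$ enters to control the term coming from the negative root $r_-$. Everything else is a sign check based on the ordering of the roots in Lemma~\ref{LemmaGRoots}.
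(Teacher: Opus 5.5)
Your proof is correct and follows essentially the same route as the paper: both divide by $\mu>0$, expand $\mu'/\mu$ in partial fractions over the roots, and reduce to the extreme values $r_0=r_c$ (where dropping the positive terms suffices) and $r_0=r_e$ (where the relation $r_-=-(r_C+r_e+r_c)$ is the key input). The differences are only organizational. The paper splits by the sign of $\mu'(r)$, using that the expression is affine in $r_0$, and at $r_0=r_e$ checks positivity of a cubic numerator. Your split by the sign of $r-r_0$, together with the termwise comparisons $\frac{1}{r-r_C}<\frac{1}{r-r_e}$ and $\frac{1}{r-r_-}<\frac{1}{r_c-r}$, reaches the same conclusion slightly more directly.
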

\begin{proof}
  If $\mu'(r)=0$, then $\mu(r)>0$ for $r\in(r_e,r_c)$; so suppose $\mu'(r)\neq 0$. The function $f(r,r_0):=2\mu(r)-(r-r_0)\mu'(r)$ is affine in $r_0$. When $r$ is such that $\mu'(r)>0$, its minimum is attained at the left endpoint $r_0=r_e$. Now, in terms of the quantities
  \[
    x = \frac{r-r_e}{r_c-r_e}\in(0,1),\quad
    x_C = \frac{r_e-r_C}{r_c-r_e}>0,\quad
    x_- = \frac{r_e-r_-}{r_c-r_e}>0,
  \]
  we compute
  \[
    \frac{f(r,r_e)}{\mu(r)} = 2 - \Bigl(\frac{r-r_e}{r-r_-}+\frac{r-r_e}{r-r_C}+1+\frac{r-r_e}{r-r_c}\Bigr) = \frac{1}{1-x} - \frac{x}{x+x_-} - \frac{x}{x+x_C}.
  \]
  The common numerator is $2 x^3+x^2(x_-+x_C-1)+x_- x_C$. To show that this is positive, it suffices to show that $x_-+x_C>1$; but this is equivalent to $2 r_e-r_--r_C>r_c-r_e$, or $3 r_e>r_c+r_C+r_-=-r_e$.

  When $\mu'(r)<0$, the minimum of $f(r,r_0)$ over $r_0$ is attained at $r_0=r_c$; and we then compute
  \[
    \frac{f(r,r_c)}{\mu(r)} = 1 + \Bigl(\frac{r_c-r}{r-r_-}+\frac{r_c-r}{r-r_C}+\frac{r_c-r}{r-r_e}\Bigr) > 0.\qedhere
  \]
\end{proof}

%%%%%%%%%%%%%%%%%%%%%%%%%%%%%%%%%%%%%%%%%%%%%%%%%%%%%%%%%%%%%%%%%%%%%%
\section{Reduction to the radial ODE; boundary conditions}
\label{SN}

\emph{In this section, we only consider $\sigma\in\R$.} We first recall the standard reduction of the equation satisfied by mode solutions at real frequencies to a radial ODE.

Suppose $e^{-i\sigma t}u(r,\theta,\phi)$ is a nonzero mode solution of $\Box_g$. Then $u$ satisfies $\wh{\Box_g}(\sigma)u=0$ where $\wh{\Box_g}(\sigma)$ is obtained from $\Box_g$ by replacing $\pa_t$ by $-i\sigma$. Next, $\Box_g$ and thus $\wh{\Box_g}(\sigma)$ commute with pullbacks by rotations $\cR_\alpha\colon\phi\mapsto\phi+\alpha$. Given a smooth function $v(\theta,\phi)$ on $\Sph^2$ and a number $m\in\Z$,
\[
  v_m(\theta,\phi) := \frac{1}{2\pi}\int_0^{2\pi} e^{-i m\alpha}(\cR_\alpha^*v)(\theta,\phi)\,\dd\alpha
\]
is again smooth; this is the $e^{i m\phi}$-term of the Fourier series of $v$ in $\phi$. Applying this to $u(r,\theta,\phi)$ for each $r$ and observing that the $e^{i m\phi}$-component of $u$ still lies in $\ker\wh{\Box_g}(\sigma)$, it thus suffices to exclude mode solutions of the form $e^{-i\sigma t}e^{i m\phi}u(r,\theta)$. The equation satisfied by this (new) $u(r,\theta)$ is
\begin{align}
\label{EqNPDE}
  &-\pa_r\mu(r)\pa_r u(r,\theta) - \frac{b^2}{\mu(r)}\bigl((r^2+a^2)\sigma-a m\bigr)^2 u(r,\theta) + \slDelta_{\sigma,m}u(r,\theta) = 0, \\
\label{EqNAngOp}
  &\qquad \slDelta_{\sigma,m} := -\frac{1}{\sin\theta}\pa_\theta c(\theta)\sin\theta\,\pa_\theta + \frac{b^2}{c(\theta)\sin^2\theta}(m-a\sigma\sin^2\theta)^2.
\end{align}
Let us restore $\pa_\phi$ in the angular operator by introducing $\slDelta_\sigma:=-\frac{1}{\sin\theta}\pa_\theta c(\theta)\sin\theta\,\pa_\theta-\frac{b^2}{c(\theta)\sin^2\theta}(\pa_\phi-i a\sigma\sin^2\theta)^2$; then $\slDelta_\sigma$ is elliptic and symmetric (since $\sigma\in\R$) and commutes with $\pa_\phi$, and thus there exists a complete joint orthonormal basis of eigenfunctions (all of which are smooth). Keeping only those in $\ker(\pa_\phi-i m)$ gives a complete orthogonal basis of eigenfunctions of $\slDelta_{\sigma,m}$ in the space $L^2_m(\Sph^2):=L^2(\Sph^2)\cap\ker(\pa_\phi-i m)$. Since the PDE~\eqref{EqNPDE}, multiplied by $e^{i m\phi}$, commutes with projections onto $\slDelta_{\sigma,m}$-eigenspaces, we may thus assume that $e^{i m\phi}u(r,\theta)=R(r)S(\theta,\phi)$ where $S(\theta,\phi)=e^{i m\phi}\cS(\theta)\in L^2_m(\Sph^2)$ is an eigenfunction of $\slDelta_{\sigma,m}$ with eigenvalue $L\in\R$; therefore, the radial function $R(r)$ satisfies the ODE
\begin{equation}
\label{EqNODE}
  -(\mu R')' - \frac{q(r)^2}{\mu(r)}R + L R = 0,\quad q(r):=b\bigl((r^2+a^2)\sigma-a m\bigr),\quad L\in\spec\slDelta_{\sigma,m}.
\end{equation}

The original mode solution, which we may now assume to lie in $\ker(\pa_\phi-i m)\cap\ker(\slDelta_{\sigma,m}-L)$, takes the form $u(t,r,\theta,\phi)=e^{-i\sigma t}e^{i m\phi}\cS(\theta)R(r)$ where $\cS(\theta)\in\CI((0,\pi))$ (and $e^{i m\phi}\cS(\theta)\in\CI(\Sph^2)$). The outgoing condition demands that
\[
  u(t,r,\theta,\phi) = e^{-i\sigma t_*}e^{i m\phi_*}\cS(\theta)\tilde R(r),\quad \tilde R\in\CI([r_e,r_c]),
\]
so $R=e^{i(\sigma T(r)-m\Phi(r))}\tilde R\in e^{i(\sigma T(r)-m\Phi(r))}\CI([r_e,r_c])$. But from~\eqref{EqICoords}, we obtain
\begin{align*}
  T(r) &\equiv -\frac{b(r_e^2+a^2)}{\mu'(r_e)}\log(r-r_e) \bmod \CI([r_e,r_c)), \\
  \Phi(r) &\equiv -\frac{b a}{\mu'(r_e)}\log(r-r_e) \bmod \CI([r_e,r_c))
\end{align*}
and
\begin{align*}
  T(r) &\equiv -\frac{b(r_c^2+a^2)}{|\mu'(r_c)|}\log(r_c-r) \bmod \CI((r_e,r_c]), \\
  \Phi(r) &\equiv -\frac{b a}{|\mu'(r_c)|}\log(r_c-r) \bmod \CI((r_e,r_c]).
\end{align*}
(Note that $\mu'(r_e)$, $|\mu'(r_c)|=-\mu'(r_c)>0$.) Thus, the outgoing condition is equivalent to
\begin{equation}
\label{EqNBC}
  R(r) \in (r-r_e)^{-i q(r_e)/\mu'(r_e)}\CI([r_e,r_c)) \cap (r_c-r)^{-i q(r_c)/|\mu'(r_c)|}\CI((r_e,r_c]).
\end{equation}
In particular, the quantities
\begin{equation}
\label{EqNBC2}
  B_e := \lim_{r\searrow r_e} (r-r_e)^{i q(r_e)/\mu'(r_e)}R(r),\quad
  B_c := \lim_{r\nearrow r_c} (r_c-r)^{i q(r_c)/|\mu'(r_c)|}R(r)
\end{equation}
are well-defined. The main input for the proof of Theorem~\ref{ThmI} is:

\begin{thm}[No nontrivial radial solutions: real nonzero $\sigma$]
\label{ThmNNo}
  For $0\neq\sigma\in\R$, every solution $R=R(r)$ of~\eqref{EqNODE} that satisfies the outgoing condition~\eqref{EqNBC} must vanish.
\end{thm}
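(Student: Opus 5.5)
Fix $0\neq\sigma\in\R$, $m\in\Z$, and $L\in\spec\slDelta_{\sigma,m}$, and let $R$ be a solution of~\eqref{EqNODE} satisfying~\eqref{EqNBC}. I will first run the standard Wronskian argument. Since $q$ and $\mu$ are real, the current $\Im(\mu R'\ol R)$ is constant on $(r_e,r_c)$. Its limits at the two horizons are read off from~\eqref{EqNBC}–\eqref{EqNBC2}: at $r_e$ it equals $-q(r_e)|B_e|^2$, and at $r_c$ it equals $-q(r_c)|B_c|^2$, up to the sign conventions to be checked. Equating the two limits gives
\[
  q(r_e)|B_e|^2 + |q(r_c)|\cdots = 0,
\]
which, with the correct signs, reads $q(r_e)|B_e|^2=q(r_c)|B_c|^2$ up to orientation. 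When $q(r_e)$ and $q(r_c)$ have opposite signs, or one vanishes, this forces the relevant $B_\bullet=0$. A solution with $B_\bullet=0$ is, by the regular-singular-point structure of~\eqref{EqNODE}, smooth in $r-r_\bullet$ and in fact vanishes identically. This holds because the indicial roots $\pm i q(r_\bullet)/\mu'(r_\bullet)$ are distinct when $q(r_\bullet)\neq0$; the degenerate case $q(r_\bullet)=0$ needs a separate check, using the log branch. The Wronskian argument therefore leaves only the superradiant range $\sigma\in(\frac{am}{r_c^2+a^2},\frac{am}{r_e^2+a^2})$, where $q(r_e)<0<q(r_c)$ (for $\sigma>0$; the other sign is symmetric under complex conjugation and $m\mapsto-m$).

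In the superradiant range, $q$ has a unique zero $r_0\in(r_e,r_c)$. I will conjugate by oscillatory factors: set $R=e^{\mp i\Psi}R_\pm$ on $[r_e,r_0]$ and $[r_0,r_c]$ respectively, with $\Psi'=q/\mu$ and the branch normalized so that $\Psi(r_0)=0$, so that the phases agree at $r_0$. Both $R_\pm$ are then smooth up to their respective horizon, and the ODE becomes
\[
  -(\mu R_\pm')' + LR_\pm \mp i(2qR_\pm'+q'R_\pm)=0,
\]
where $\nabla_q=q\pa_r+\frac12 q'$ is skew-adjoint. Next I multiply by $\ol{f_\pm\nabla_q R_\pm}$, or a combination of this with $\ol{R_\pm}$, for weights $f_\pm$, take real parts, and integrate from the horizon to $r_0$. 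Boundary terms at the horizons vanish except for a term proportional to $|B_c|^2$ at $r_c$, whose sign comes from $q(r_c)>0$. At $r_0$, where $q=0$, the boundary term reduces to a multiple of $\Re(R'\ol R)(r_0)$. The goal is the chain
\[
  0\le \Re(R'\ol R)(r_0)\le -C|B_c|^2 .
\]
The left inequality comes from the $[r_e,r_0]$ side and the right one from the $[r_0,r_c]$ side. Together they give $B_c=0$, and hence $R\equiv0$ as above.

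The main obstacle is choosing $f_\pm$ so that the bulk integrands, which are quadratic forms in $(R_\pm,R_\pm')$ involving $\mu$, $\mu'$, $q$, $L$ and $f_\pm$, are pointwise signed. On $[r_0,r_c]$, I would first try an explicit weight such as $f_+=(r-r_0)/q$ times a power of $\mu$. Positivity of the resulting coefficients should reduce to polynomial inequalities in the roots, in the spirit of Lemma~\ref{LemmaGIneq1} (which involves exactly $2\mu-(r-r_0)\mu'$) and Corollary~\ref{CorGBound}, and these can be verified via the parametrization of Corollary~\ref{CorGParam} with $r_e=1$. On $[r_e,r_0]$, I expect the sign condition to become a differential inequality for $f_-$, subject to a constraint that depends on $L$. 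For this side I would split by the size of $L$. For large $L$, I would use an explicit $f_-$. For small $L$, I would saturate the constraint near $r_e$ and solve the resulting first-order ODE for $f_-$ up to $r_0$, checking that it stays admissible. In both cases I would use the lower bound on $L$ coming from positivity of $\slDelta_{\sigma,m}$ (namely $L\ge$ the potential minimum, which is comparable to $b^2(|m|-a\sigma)^2$-type terms) to control the $L$-dependent terms.
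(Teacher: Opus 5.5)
Your outline follows the paper's architecture: the Wronskian outside the superradiant range, conjugation by $e^{\mp i\int_{r_0}^r q/\mu}$, the multiplier $2f_\pm\nabla_q$, and the chain $0\le\Re(R'\ol R)(r_0)\le -C|B_c|^2$. But it stops where the proof actually begins. Producing $f_\pm$ with signed bulk terms and the right boundary and seam behaviour is the entire content of the argument, and the concrete choices you suggest fail.

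First, a smaller slip in the Wronskian step. The flux identity is $-q(r_e)|B_e|^2=q(r_c)|B_c|^2$, so it kills $B_e,B_c$ when $q(r_e),q(r_c)$ have the \emph{same} sign, not opposite signs. If one of them vanishes, you get $B_\bullet=0$ at the \emph{other} horizon, where $q\neq0$ and the indicial roots are distinct, so no log-branch analysis is needed.

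\emph{On $[r_0,r_c]$.} Your candidate $f_+=\frac{r-r_0}{q}\mu^k$ does not work:
\begin{itemize}
\item For $k>0$, $f_+(r_c)=0$, so the boundary term $Lf_+(r_c)q(r_c)|B_c|^2$ is lost and $B_c$ is not controlled.
\item For $k=0$, i.e.\ $f_+=\frac{1}{q_2(r+r_0)}$, one has $(f_+q')'\neq0$ and
$\cD_Lf_+(r_0)=\frac{1}{4r_0}\bigl(\mu'(r_0)-\mu(r_0)/r_0\bigr)>0$ for $r_0$ near $r_e$. So the $|R_+|^2$ bulk term has the wrong sign. Moreover $(rf_+)'(r_0)>0$ puts a positive $|R(r_0)|^2$ term into $J_+(r_0)$, so this side alone no longer gives $\Re(R'\ol R)(r_0)\le -C|B_c|^2$.
\end{itemize}
What works is $f_+=r_0/r$. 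The identity $(rf_+)'=0$ kills both $(f_+q')'$, leaving $\cD_Lf_+=Lqf_+'\le0$, and the seam term. Meanwhile $C^{f_+}<0$ reduces exactly to Lemma~\ref{LemmaGIneq1}.

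\emph{On $[r_e,r_0]$.} The missing idea is the reduction, via $\alpha=\frac{r\mu}{Q}\frac{f'}{f}$ with $Q=r_0^2-r^2$, to the constrained Riccati inequality of Lemma~\ref{LemmaFCeEquiv}: $\alpha\le A$ and $\dot\alpha\le S_L(r,\alpha)$, together with $\alpha(r_e)>0$, $f>0$ and $(rf)'(r_0)\ge0$. Your case split is also backwards.
\begin{itemize}
\item Saturating the first-order constraint $C^f\le0$ gives $f\propto\mu/Q^2$. By Lemma~\ref{LemmaFCeMono}, this violates $\cD_Lf\le0$ near $r_0$ for every $L$, and on all of $(r_e,r_0)$ when $L\le A(r_e)$. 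That is exactly your small-$L$ regime, where the paper instead takes the constant $\alpha_*=\frac{L+1}{2}$.
\item Conversely, constant $\alpha$ cannot work for large $L$ in general. You need $\alpha\le A(r_e)=\frac{r_e\mu'(r_e)}{r_0^2-r_e^2}$, while $\mu/Q^2\to\infty$ at $r_0$ forces $2r_0\alpha\ge\mu'(r_0)$. This fails near extremality, where $\mu'(r_e)\to0$.
\end{itemize}
Hence the paper's piecewise construction for large $L$:
\begin{itemize}
\item follow $\mu/Q^2$ up to the zero $\wh r$ of $\wh S_L$, then solve $\cD_Lf=0$;
\item obtain $f>0$, $C^f\le0$ and $(rf)'(r_0)>0$ from Wronskian comparisons with $\mu/Q^2$ and with $h=\frac{r_0}{r}h_\flat$, where $\dot h_\flat/h_\flat=\frac{L-1}{2}$;
\item close the argument with Lemma~\ref{LemmaCeII2Sign}.
\end{itemize}

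\emph{Lower bound on $L$.} Both cases need the sharp bound $L\ge\frac{2r_0^2}{r_0^2+r_Cr_e}>1$. The potential minimum you invoke is at most $b^2m^2r_0^4/(r_0^2+a^2)^2$, which for $m=1$ can be close to $\frac14$. The paper obtains the sharp bound by completing the square with $bm\cot\theta$ at the superradiant frequency $\sigma=\frac{am}{r_0^2+a^2}$ (Lemma~\ref{LemmaCeILower}), and its positivity checks are sensitive to this exact bound.
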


The previous reduction shows that this implies Theorem~\ref{ThmI} for $\sigma\in\R\setminus\{0\}$. (The remaining frequencies will be handled in~\S\S\ref{S0}--\ref{SPf}.) Since replacing $(\sigma,m)$ by $(-\sigma,-m)$ and $(a,m)$ by $(-a,-m)$ leaves the ODE~\eqref{EqNODE} unchanged, we may without loss assume that $m\geq 0$ and $a\geq 0$. We first recall the simple proof of Theorem~\ref{ThmNNo} in the non-superradiant frequency range:

\begin{proof}[Proof of Theorem~\usref{ThmNNo} for $\sigma\notin(\frac{a m}{r_c^2+a^2},\frac{a m}{r_e^2+a^2})$]
  The ODE~\eqref{EqNODE} can be written as $(\mu R')'+V R=0$ where $V(r)=\frac{q(r)^2}{\mu(r)}-L$ is real. Given a solution $R$ satisfying~\eqref{EqNBC}, also $\bar R$ satisfies this ODE but with complex-conjugate boundary conditions. By direct differentiation, the Wronskian
  \[
    W(r) := \mu\Im\bigl(R'(r)\ol{R(r)}\bigr)
  \]
  satisfies $W'(r)=0$. In the notation of~\eqref{EqNBC2}, the boundary conditions for $R$ and $\bar R$ then imply
  \[
    \lim_{r\searrow r_e} W(r) = -q(r_e)|B_e|^2 = \lim_{r\nearrow r_c} W(r) = +q(r_c)|B_c|^2.
  \]
  The numerical range of $-q(r)/(b(r^2+a^2))$ (from~\eqref{EqNODE}, with $b=1+\lambda a^2\geq 1$) on $[r_e,r_c]$ is an interval with endpoints $\frac{a m}{r_c^2+a^2}-\sigma$ and $\frac{a m}{r_e^2+a^2}-\sigma$. If this interval is disjoint from $0$, then $q(r_e)$ and $q(r_c)$ have the same sign, and thus we obtain $B_e=B_c=0$; if $0$ is an endpoint of this interval, then $q(r_e)=0$, $q(r_c)\neq 0$ or $q(r_c)=0$, $q(r_e)\neq 0$, and we can only conclude that $B_c=0$ or $B_e=0$.

  Assuming that $q(r_e)\neq 0$ and thus $B_e=0$, we now prove that $R=0$. (The arguments in the case that $q(r_c)\neq 0$ and thus $B_c=0$ are completely analogous.) The ODE
  \begin{equation}
  \label{EqNNoRegSing}
    \mu(\mu R')' + \bigl(q(r)^2-\mu(r)L\bigr)R = 0
  \end{equation}
  is a regular-singular ODE at $r=r_e$; its indicial roots are $\pm i q(r_e)/\mu'(r_e)$ and thus differ by a nonzero non-integer. Thus, every solution of this ODE is a (unique) linear combination of solutions of class $(r-r_e)^{\pm i q(r_e)/\mu'(r_e)}\bigl(1+(r-r_e)\CI([r_e,r_c))\bigr)$. The outgoing condition~\eqref{EqNBC} forbids the ``$+$'' branch, and $B_e=0$ forbids the ``$-$'' branch. Therefore, $R=0$ on $(r_e,r_c)$.
\end{proof}

Only when $a\neq 0$ (i.e., the black hole has non-vanishing angular momentum) and $m\neq 0$ (i.e., for non-axisymmetric modes) does this argument not cover all $\sigma\in\R$.

%%%%%%%%%%%%%%%%%%%%%%%%%%%%%%%%%%%%%%%%%%%%%%%%%%%%%%%%%%%%%%%%%%%%%%
\section{Proof of Theorem~\ref{ThmNNo} in the superradiant frequency range}
\label{SF}

\textit{We use the notation and assumptions of Theorem~\usref{ThmNNo}}, and recall that $a,m\geq 0$. We now come to the heart of the paper. Suppose $\sigma\in(\frac{a m}{r_c^2+a^2},\frac{a m}{r_e^2+a^2})$ (which can only occur for $m\neq 0$), so
\begin{equation}
\label{EqFsigma}
  \sigma = \frac{a m}{r_0^2+a^2}>0,\quad r_0\in(r_e,r_c);\quad \text{equivalently,}\ q(r_0)=0.
\end{equation}
Since $q(r)=b a m\frac{r^2-r_0^2}{r_0^2+a^2}$, we then have
\[
  q(r_e) < 0 = q(r_0) < q(r_c);\quad q(r) = q_2(r^2-r_0^2),\ q_2:=b\sigma = \frac12 q''>0.
\]
Our strategy is to find a current $J(r)$, i.e., a quadratic function in $R$ and $R'$, that is non-increasing in $r$ and whose boundary values at $r=r_e$, resp.\ $r_c$ pick up a non-positive, resp.\ non-negative multiple of $|B_e|^2$, resp.\ $|B_c|^2$.

%%%%%%%%%%%%%%%%%%%%%%%%%%%%%%%%%%%%%%%%%%%%%%%%%%
\subsection{Derivation of the event-side and cosmological-side currents}
\label{SsFC}

The zero $r_0$ of $q(r)$ is a convenient place to split $[r_e,r_c]=[r_e,r_0]\cup[r_0,r_c]$. From~\eqref{EqNBC}, we see that on $[r_e,r_0]$ (bottom sign) and $[r_0,r_c]$ (top sign),
\begin{equation}
\label{EqFCRpm}
  R_\pm(r) := \exp\biggl(\mp i\int_{r_0}^r \frac{q}{\mu}\biggr)R(r)
\end{equation}
is smooth down to $r_e$ and $r_c$, respectively. Since $\pa_r R=e^{\pm i\int q/\mu}(\pa_r\pm i\frac{q}{\mu})R_\pm$, the ODE~\eqref{EqNODE} for $R$ becomes
\begin{equation}
\label{EqFCConj}
  \bigl(-\pa_r\mu\pa_r + L \mp i(\pa_r q+q\pa_r)\bigr)R_\pm = 0;
\end{equation}
while no longer singular at the respective horizon, this features a (symmetric) first-order term $\mp 2 i\nabla_q$,
\[
  \nabla_q := \frac12(\pa_r q + q\pa_r) = q\pa_r + \frac12 q'.
\]
A natural multiplier for~\eqref{EqFCConj} is $2 f\nabla_q$ where the real-valued function $f=f(r)$ is to be determined. Thus, multiply~\eqref{EqFCConj} by $2 f\overline{\nabla_q R_\pm}$, take real parts (so the contribution from $\mp 2 i\nabla_q$ in~\eqref{EqFCConj} vanishes pointwise), and integrate by parts; this yields:

\begin{lemma}[Currents and bulk terms]
\label{LemmaFC}
  For a function $u=u(r)$ and a real-valued $\cC^2$-function $f=f(r)$ on $(r_e,r_0]$ or $[r_0,r_c)$, define
  \begin{subequations}
  \begin{equation}
  \label{EqFCCur}
  \begin{split}
    J_\pm^f[u] &:= -\mu f q|u'|^2 - \mu f q'\Re(u'\ol{u}) + \Bigl(L f q+\frac12\mu(f q')'\Bigr)|u|^2, \\
    \cF_\pm^f[u] &:= C^f|u'|^2 + (\cD_L f)|u|^2.
  \end{split}
  \end{equation}
  where we set
  \begin{equation}
  \label{EqFCCur2}
    C^f = C^f(r) := \mu' f q - 2\mu f q'-\mu f' q,\quad
    \cD_L f := L q f' + \frac12\bigl(\mu(f q')'\bigr)'.
  \end{equation}
  \end{subequations}
  Then
  \begin{equation}
  \label{EqFCCur3}
    \frac{\dd}{\dd r}J_\pm^f[R_\pm] = \cF_\pm^f[R_\pm].
  \end{equation}
\end{lemma}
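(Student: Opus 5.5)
We prove the identity by a direct multiplier computation on the conjugated equation~\eqref{EqFCConj}. Write $P_\pm:=-\pa_r\mu\pa_r+L\mp 2i\nabla_q$, so that $P_\pm R_\pm=0$, and set $u=R_\pm$. We multiply $P_\pm u$ by $2f\,\ol{\nabla_q u}=2f\bigl(q\ol{u'}+\frac12 q'\ol u\bigr)$ and take real parts. Since $f$ is real, the first-order term contributes $\Re\bigl(\mp 2i\nabla_q u\cdot 2f\ol{\nabla_q u}\bigr)=\Re(\mp 4if|\nabla_q u|^2)=0$ pointwise. Hence
\[
  0=\Re\bigl(-(\mu u')'\cdot 2f\ol{\nabla_q u}\bigr)+\Re\bigl(Lu\cdot 2f\ol{\nabla_q u}\bigr).
\]
Each of the two terms is then rewritten as a total derivative plus quadratic terms in $u,u'$. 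The proof of the lemma is complete once we show that this sum equals $-\frac{\dd}{\dd r}J^f_\pm[u]+\cF^f_\pm[u]$. Equivalently, we verify $\frac{\dd}{\dd r}J^f_\pm[u]-\cF^f_\pm[u]=$ (real part of $P_\pm u$ times the multiplier), which vanishes.

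The $L$-term is routine. We have $2Lf\Re(u(q\ol{u'}+\frac12 q'\ol u))=Lfq(|u|^2)'+Lfq'|u|^2=(Lfq|u|^2)'-Lf'q|u|^2$. This accounts for the $Lfq|u|^2$ piece of $J$ and the $Lqf'|u|^2$ piece of $\cD_L f$, with the sign fixed by the convention above.

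The principal term needs more care. Write $-(\mu u')'=-\mu u''-\mu'u'$ and treat the two parts of the multiplier separately.
\begin{itemize}
\item The $fq\ol{u'}$ part gives $-2\Re(\mu fq u''\ol{u'})-2\mu'fq|u'|^2=-(\mu fq|u'|^2)'+(\mu fq)'|u'|^2-2\mu'fq|u'|^2$.
\item The $\frac12 fq'\ol u$ part gives $-\Re((\mu u')'fq'\ol u)=-(\mu fq'\Re(u'\ol u))'+\mu(fq')'\Re(u'\ol u)+\mu fq'|u'|^2$.
\item The remaining cross term becomes $\mu(fq')'\Re(u'\ol u)=\frac12\mu(fq')'(|u|^2)'=\bigl(\frac12\mu(fq')'|u|^2\bigr)'-\frac12(\mu(fq')')'|u|^2$.
\end{itemize}

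Collecting terms, the boundary terms are $-\mu fq|u'|^2-\mu fq'\Re(u'\ol u)+\frac12\mu(fq')'|u|^2$, which together with the $L$-term reproduce $J^f_\pm$. The $|u'|^2$ coefficient is $(\mu fq)'-2\mu'fq+\mu fq'=-\mu'fq+\mu f'q+2\mu fq'=-C^f$. The $|u|^2$ coefficient is $-\frac12(\mu(fq')')'-Lqf'=-\cD_L f$. Therefore $0=J'-(C^f|u'|^2+\cD_Lf|u|^2)$ up to an overall sign, i.e. $J'=\cF$, which is~\eqref{EqFCCur3}.

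No step is a genuine obstacle; the only point needing care is sign bookkeeping. The decisive structural fact is that the skew term $\mp 2i\nabla_q$ drops out exactly because the multiplier is $f$ times the same operator $\nabla_q$. This is also why the current is independent of the sign $\pm$, even though the function $R_\pm$ it is evaluated on depends on it.

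For regularity, the computation uses $f\in\cC^2$ because of the term $(\mu(fq')')'$. It is valid on $(r_e,r_0]$ or $[r_0,r_c)$, where $R_\pm$ is smooth as a solution of an ODE with smooth coefficients and $\mu>0$.
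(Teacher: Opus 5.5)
Your proposal is correct and follows essentially the same route as the paper: multiply the conjugated equation by $2f\,\ol{\nabla_q R_\pm}$, take real parts so that the skew term $\mp 2i\nabla_q$ drops out, and integrate by parts term by term; the only difference is that you do the product-rule bookkeeping pointwise, whereas the paper integrates over $[r_1,r_2]$ and then differentiates in $r_2$. One cosmetic point: your collected sum is exactly $\frac{\dd}{\dd r}J^f_\pm[u]-\cF^f_\pm[u]$, not the $-\frac{\dd}{\dd r}J^f_\pm[u]+\cF^f_\pm[u]$ you announce at the outset, but this is immaterial since the sum vanishes.
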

\begin{proof}
  Write $u=R_\pm$ for better readability. We expand $0=\Re\int_{r_1}^{r_2} \bigl(-(\mu u')'+L u\mp 2 i\nabla_q u\bigr)\cdot 2 f\nabla_q\ol{R}\,\dd r$ into the sum
  \begin{align*}
    0 &= -\Re \int_{r_1}^{r_2} (\mu R')'\cdot 2 f q \ol{R'} \,\dd r - \Re\int_{r_1}^{r_2} (\mu R')'\cdot f q'\ol{R}\,\dd r + \Re\int_{r_1}^{r_2} R\cdot L(2 f q\ol{R'}+f q'\ol{R})\,\dd r \\
      &\qquad + \Re\int_{r_1}^{r_2} (\mp 2 i\nabla_q R)\cdot 2 f\nabla_q\ol{R}\,\dd r.
  \end{align*}
  The final term vanishes since $f$  is real. Integrating by parts in the first term gives
  \begin{align*}
    &[-2\mu f q|R'|^2]_{r_1}^{r_2} + \int_{r_1}^{r_2} 2(f q)'\mu|R'|^2 + \mu f q\Re(2 R'\ol{R''})\,\dd r \\
    &\quad = [-\mu f q|R'|^2]_{r_1}^{r_2} + \int_{r_1}^{r_2} \bigl(\mu(f q)'-\mu'f q\bigr)|R'|^2\,\dd r
  \end{align*}
  In the passage to the second line, we used $\Re(2 R'\ol{R''})=(|R'|^2)'$ and integrated by parts. Next, we write the second term as
  \begin{align*}
    &[-\mu f q' \Re(R'\ol{R})]_{r_1}^{r_2} + \int_{r_1}^{r_2} \mu f q'|R'|^2 + \mu(f q')'\Re(R'\ol{R})\,\dd r \\
    &\quad = \Bigl[ -\mu f q'\Re(R'\ol{R}) + \frac12\mu(f q')'|R|^2 \Bigr]_{r_1}^{r_2} + \int_{r_1}^{r_2} \Bigl(\mu f q'|R'|^2 - \frac12 \bigl(\mu(f q')'\bigr)'|R|^2\Bigr)\,\dd r
  \end{align*}
  The third term finally is
  \begin{align*}
    \int_{r_1}^{r_2} L f q (|R|^2)' + L f q'|R|^2\,\dd r = [L f q|R|^2]_{r_1}^{r_2} - \int_{r_1}^{r_2} L q f'|R|^2\,\dd r.
  \end{align*}
  Summing the three nonzero terms yields the identity $\int_{r_1}^{r_2} \cF_\pm^f[R]\,\dd r - \bigl(J_\pm^f[R](r_2) - J_\pm^f[R](r_1)\bigr) = 0$. Its derivative in $r_2$ gives the pointwise statement of the lemma.
\end{proof}

We will find $f$ such that $C^f,\cD_L f\leq 0$, so $\cF_\pm^f[u]\leq 0$, with suitable behavior at the juncture $r=r_0$. This will relate the limits of $J_\pm^f[u]$ as $r\searrow r_e$, resp.\ $r\nearrow r_c$; we compute these:

\begin{lemma}[Limiting values of the current]
\label{LemmaFCLim}
  Suppose that $f$ is $\cC^2$ on $(r_e,r_0]$ or $[r_0,r_c)$, continuous down to $r_e$ or $r_c$, and satisfies $|f'(r)| = o\bigl(|r-r_{e/c}|^{-1}\bigr)$, $r\to r_{e/c}$. Then, recalling~\eqref{EqFCRpm} and~\eqref{EqNBC2},
  \begin{equation}
  \label{EqFCLim}
    \lim_{r\searrow r_e} J_-^f[R_-] = L f(r_e)q(r_e)|B_e|^2, \quad \lim_{r\nearrow r_c} J_+^f[R_+] = L f(r_c)q(r_c)|B_c|^2.
  \end{equation}
  Furthermore,
  \begin{equation}
  \label{EqFCLimSeam}
    J_\pm^f[u](r_0) = -2 q_2 r_0\mu(r_0)f(r_0)\Re(R'\ol{R})(r_0) + q_2\mu(r_0)(r f)'(r_0)|R(r_0)|^2
  \end{equation}
\end{lemma}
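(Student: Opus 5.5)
The plan is to evaluate each of the three terms of $J_\pm^f[u]$ in \eqref{EqFCCur} directly, using the regularity of $R_\pm$ at the respective horizon. Consider the event horizon; the cosmological side is identical. By \eqref{EqNBC} and \eqref{EqFCRpm}, $R_-$ is smooth on $[r_e,r_0]$. Indeed,
\[
  \exp\Bigl(+i\int_{r_0}^r q/\mu\Bigr) = (r-r_e)^{i q(r_e)/\mu'(r_e)}\cdot(\text{smooth nonvanishing}),
\]
and the outgoing factor in \eqref{EqNBC} cancels it. We need the constant precisely. Writing $\frac{q}{\mu}=\frac{q(r_e)}{\mu'(r_e)(r-r_e)}+h(r)$ with $h$ smooth on $[r_e,r_0]$, we get
\[
  R_-(r) = (r-r_e)^{i q(r_e)/\mu'(r_e)}R(r)\,\exp\Bigl(i\int_{r_0}^r h - i\frac{q(r_e)}{\mu'(r_e)}\log(r_0-r_e)\Bigr).
\]
Hence $|R_-(r_e)|=|B_e|$, since the extra factor has modulus $1$ because $\sigma$ and $q$ are real. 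Likewise $|R_+(r_c)|=|B_c|$.

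With this, take the limit term by term. Since $R_\pm$ is $\cC^1$ up to the horizon, $u'$ and $u$ stay bounded. The coefficients $\mu f q$ and $\mu f q'$ tend to $0$, because $\mu(r_{e/c})=0$ and $f$ is continuous. So the first two terms of $J$ vanish in the limit. For the $|u|^2$ coefficient, expand
\[
  \frac12\mu(fq')' = \frac12\mu f'q'+\frac12\mu f q''.
\]
The second piece tends to $0$. The first piece also tends to $0$, because $\mu=O(|r-r_{e/c}|)$ and $f'=o(|r-r_{e/c}|^{-1})$. This is exactly where the hypothesis on $f'$ enters; it is the only mildly delicate step. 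What remains is $Lf(r_{e/c})q(r_{e/c})|R_\pm(r_{e/c})|^2$, and combining with $|R_\pm(r_{e/c})|=|B_{e/c}|$ gives \eqref{EqFCLim}.

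For the seam formula \eqref{EqFCLimSeam}, use $q(r_0)=0$, $q'(r)=2q_2 r$ and $q''=2q_2$. At $r_0$ the phase factor in \eqref{EqFCRpm} equals $1$, so $R_\pm(r_0)=R(r_0)$. Differentiating \eqref{EqFCRpm} gives
\[
  R_\pm'(r_0) = R'(r_0)\mp i\frac{q(r_0)}{\mu(r_0)}R(r_0) = R'(r_0).
\]
Substituting into \eqref{EqFCCur}, the term $-\mu fq|u'|^2$ and $Lfq|u|^2$ vanish. The term $-\mu f q'\Re(u'\ol u)$ becomes $-2q_2r_0\mu(r_0)f(r_0)\Re(R'\ol R)(r_0)$. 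Finally,
\[
  \frac12\mu(fq')' = \frac12\mu\,(2q_2 rf)' = q_2\mu(rf)',
\]
which yields the last term of \eqref{EqFCLimSeam}.
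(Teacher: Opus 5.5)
Your proof is correct and follows essentially the same route as the paper. Like the paper, you kill the first two terms and the $\frac12\mu(fq')'$ part of the third term at the horizons using $\mu(r_{e/c})=0$, continuity of $f$ and $f'=o(|r-r_{e/c}|^{-1})$, and at the seam you use $q(r_0)=0$, $q'=2q_2r$ and $R_\pm'(r_0)=R'(r_0)$; your additional check that $|R_\pm(r_{e/c})|=|B_{e/c}|$, via the simple pole of $q/\mu$ and a unimodular phase factor, spells out a step the paper leaves implicit.
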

\begin{proof}
  We will use $q'=2 q_2 r$ and $q(r_0)=0$. In the expression for $J_\pm^f[R_\pm]$ in~\eqref{EqFCCur}, the first two terms vanish at $r=r_{e/c}$. In the third term, $\mu(f q')'=2 q_2\mu(f r)'=2 q_2(\mu f+r\mu f')$ vanishes at $r=r_{e/c}$, so only $L f q|R_\pm|^2$ survives there. For~\eqref{EqFCLimSeam}, we note that $R'(r_0)=R_\pm'(r_0)$ by~\eqref{EqFCRpm}.
\end{proof}

%%%%%%%%%%%%%%%%%%%%%%%%%%%%%%%%%%%%%%%%%%%%%%%%%%
\subsection{Multiplier on the interval \texorpdfstring{$[r_0,r_c]$}{[r0,rc]}}
\label{SsFCc}

On the interval $[r_0,r_c]$, we make the choice
\[
  f_+(r) = \frac{r_0}{r}.
\]
This has the convenient property $(r f_+)'=0$, and hence the second term in~\eqref{EqFCLimSeam} vanishes.

\begin{lemma}[Signs]
\label{LemmaFCc}
  We have $C^{f_+}<0$ and $\cD_L f_+\leq 0$ on $[r_0,r_c]$.
\end{lemma}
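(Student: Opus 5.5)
The plan is to compute both quantities explicitly for $f_+=r_0/r$, using $q=q_2(r^2-r_0^2)$ and $q'=2q_2 r$. Their signs should then follow from the nonnegativity of $L$ and from Lemma~\ref{LemmaGIneq1}.

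\emph{The term $\cD_L f_+$.} The key observation is that $f_+q'=2q_2r_0$ is constant, so $(f_+q')'=0$. The second term in $\cD_L f_+$ therefore drops out, and we are left with
\[
  \cD_L f_+ = Lqf_+' = -\frac{L q r_0}{r^2}.
\]
On $[r_0,r_c]$ we have $q\geq 0$. It remains to check $L\geq 0$. This holds because $\slDelta_{\sigma,m}$ in~\eqref{EqNAngOp} is a nonnegative operator on $L^2_m(\Sph^2)$. Indeed, pairing with the eigenfunction $\cS$ and integrating by parts against $\sin\theta\,\dd\theta$ gives
\[
  L\|\cS\|^2 = \int\Bigl(c|\cS'|^2 + \frac{b^2}{c\sin^2\theta}(m-a\sigma\sin^2\theta)^2|\cS|^2\Bigr)\sin\theta\,\dd\theta \geq 0,
\]
since $c>0$. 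Hence $\cD_L f_+\leq 0$.

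\emph{The term $C^{f_+}$.} Using $f_+'=-r_0/r^2$, a short computation gives
\[
  C^{f_+} = \frac{r_0q_2}{r^2}\Bigl( r\mu'(r)(r-r_0)(r+r_0) - \mu(r)(3r^2+r_0^2)\Bigr).
\]
So it suffices to show that the bracket is negative on $[r_0,r_c]$.

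\begin{itemize}
\item For $r\in(r_0,r_c)$, Lemma~\ref{LemmaGIneq1} gives $(r-r_0)\mu'(r)<2\mu(r)$. Multiplying by $r(r+r_0)>0$, the bracket is strictly less than
\[
  2\mu(r^2+rr_0)-\mu(3r^2+r_0^2) = -\mu(r)(r-r_0)^2 \leq 0 .
\]
\item At the endpoint $r=r_0$, the bracket equals $-4r_0^2\mu(r_0)<0$.
\item At the endpoint $r=r_c$, we have $\mu(r_c)=0$, so the bracket equals $r_c\mu'(r_c)(r_c^2-r_0^2)<0$, because $\mu'(r_c)<0$ and $r_c>r_0$.
\end{itemize}

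Hence $C^{f_+}<0$ on all of $[r_0,r_c]$. The only nontrivial input is the polynomial inequality of Lemma~\ref{LemmaGIneq1}, which was established earlier; the endpoints need only a separate look because that lemma is stated on the open interval.
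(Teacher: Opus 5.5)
Your proof is correct and is essentially the paper's argument. The paper also uses $(f_+q')'=0$ to reduce $\cD_L f_+$ to $-Lqr_0/r^2$. For $C^{f_+}$ it writes $-\frac{r}{r_0q_2}C^{f_+}=(r+r_0)\bigl(2\mu-(r-r_0)\mu'\bigr)+\frac{(r-r_0)^2\mu}{r}$, which is exactly your application of Lemma~\ref{LemmaGIneq1} with remainder $-\mu(r-r_0)^2$.

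Your justification of $L\geq 0$ and your check at $r=r_c$ fill in points the paper leaves implicit. The separate check at $r=r_0$ is redundant, since Lemma~\ref{LemmaGIneq1} already covers $r=r_0\in(r_e,r_c)$.
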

\begin{proof}
  Since $(f_+ q')'=2 q_2(r f_+)'=0$, the second statement follows from
  \[
    r^2\cD_L f_+ = -L q r_0 = -L r_0 q_2(r^2-r_0^2) \leq 0,\quad r\geq r_0.
  \]
  For the first statement, we compute
  \[
    -\frac{r}{r_0 q_2}C^{f_+} = -(r^2-r_0^2)\mu' + 4 r\mu-\frac{(r^2-r_0^2)\mu}{r} = (r+r_0)(2\mu-(r-r_0)\mu') + \frac{(r-r_0)^2\mu}{r}.
  \]
  Lemma~\ref{LemmaGIneq1} shows that the first term is positive, and the second term is non-negative for $r\geq r_0$.
\end{proof}

\begin{cor}[Inequality at the seam]
\label{CorFCc}
  We have
  \begin{equation}
  \label{EqFCc}
    \Re(R'\ol{R})(r_0)\leq -\frac{L(r_c^2-r_0^2)}{2 r_c\mu(r_0)}|B_c|^2 \leq 0.
  \end{equation}
\end{cor}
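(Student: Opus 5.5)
The plan is to integrate the identity \eqref{EqFCCur3} of Lemma~\ref{LemmaFC} over $[r_0,r_c)$ with the multiplier $f_+=r_0/r$. Then I evaluate the two endpoint values of the current using Lemma~\ref{LemmaFCLim}, and use the sign information from Lemma~\ref{LemmaFCc}.

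\emph{Integrating the current.} Since $f_+$ is smooth on a neighborhood of $[r_0,r_c]$, it is $\cC^2$, continuous up to $r_c$, and $f_+'$ is bounded there. So the hypotheses of Lemma~\ref{LemmaFCLim} hold. By Lemma~\ref{LemmaFCc}, $C^{f_+}<0$ and $\cD_L f_+\leq 0$, so $\cF_+^{f_+}[R_+]\leq 0$ pointwise on $[r_0,r_c)$. Integrating \eqref{EqFCCur3} from $r_0$ to $r$ and letting $r\nearrow r_c$ gives
\[
  \lim_{r\nearrow r_c}J_+^{f_+}[R_+](r) \leq J_+^{f_+}[R_+](r_0).
\]

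\emph{Endpoint values.} At $r_c$, \eqref{EqFCLim} gives the left side as $L f_+(r_c)q(r_c)|B_c|^2 = L\frac{r_0}{r_c}q_2(r_c^2-r_0^2)|B_c|^2$. At the seam, $(rf_+)'=0$ and $f_+(r_0)=1$, so \eqref{EqFCLimSeam} reduces to $-2q_2r_0\mu(r_0)\Re(R'\ol R)(r_0)$. Since $q_2>0$, $r_0>0$ and $\mu(r_0)>0$ (as $r_0\in(r_e,r_c)$), dividing by $2q_2r_0\mu(r_0)$ yields the first inequality in \eqref{EqFCc}.

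\emph{Sign of $L$.} The second inequality needs $L\geq 0$, and I expect this to be the only point requiring an additional argument. The operator $\slDelta_{\sigma,m}$ in \eqref{EqNAngOp} is non-negative on $L^2_m(\Sph^2)$. Indeed, pairing with $\cS$ and integrating by parts against $\sin\theta\,\dd\theta$ gives
\[
  \int c|\pa_\theta\cS|^2 + \frac{b^2}{c\sin^2\theta}(m-a\sigma\sin^2\theta)^2|\cS|^2 \geq 0,
\]
using $c>0$. The boundary terms vanish by the smoothness of $e^{im\phi}\cS$ on $\Sph^2$. Hence the eigenvalue satisfies $L\geq 0$. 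Together with $r_c>r_0$ and $\mu(r_0)>0$, this makes the middle expression in \eqref{EqFCc} non-positive.
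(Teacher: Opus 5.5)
Your proof is correct and follows the paper's argument: integrate $\frac{\dd}{\dd r}J_+^{f_+}[R_+]=\cF_+^{f_+}[R_+]\leq 0$ over $[r_0,r_c)$, then insert the endpoint values from Lemma~\ref{LemmaFCLim} using $(rf_+)'=0$ and $f_+(r_0)=1$. Your explicit check that $L\geq 0$ fills in a step the paper leaves implicit. It also follows from Lemma~\ref{LemmaCeILower}, which gives $L>1$ since $m\geq 1$ in the superradiant range.
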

\begin{proof}
  Plug~\eqref{EqFCLim}--\eqref{EqFCLimSeam} (and $(r f_+)'=0$) into $\lim_{r\nearrow r_c}J_+^{f_+}[R_+](r)\leq J_+^{f_+}[R_+](r_0)$.
\end{proof}

%%%%%%%%%%%%%%%%%%%%%%%%%%%%%%%%%%%%%%%%%%%%%%%%%%
\subsection{Multiplier on the interval \texorpdfstring{$[r_e,r_0]$}{[re,r0]}}
\label{SsFCe}

This is the delicate part of the construction.

\begin{prop}[Existence of a multiplier producing a monotone current]
\label{PropFCe}
  For every eigenvalue $L$ of $\slDelta_{\sigma,m}$, there exists a (piecewise smooth) function $f_-\in\cC^2((r_e,r_0])\cap\cC^0([r_e,r_0])$ such that
  \begin{enumerate}
  \item\label{ItFCe1} $C^{f_-}\leq 0$ and $\cD_L f_-\leq 0$;
  \item\label{ItFCe2} $f_-(r_e)=0$ and $|f_-'(r)|=o((r-r_e)^{-1})$ as $r\searrow r_e$;
  \item\label{ItFCe3} $f_-(r)>0$ for $r\in(r_e,r_0]$, and $(r f_-)'(r_0)\geq 0$.
  \end{enumerate}
\end{prop}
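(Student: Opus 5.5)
The plan is to reduce both sign conditions in \eqref{ItFCe1} to scalar conditions on a single unknown function, and then build $f_-$ by hand, treating large and small $L$ separately. On $(r_e,r_0)$ we have $q<0$, so dividing $C^{f}$ by $-\mu q>0$ shows that
\[
  C^{f}\leq 0 \iff \frac{f'}{f}\leq \frac{\mu'}{\mu}-\frac{2q'}{q} \iff \Bigl(\frac{f q^2}{\mu}\Bigr)'\leq 0 \quad (f>0).
\]
Thus the constraint says that $\psi:=f q^2/\mu$ must be non-increasing. For the second condition, set $h:=r f$. Since $f q'=2q_2 h$, we get
\[
  \cD_L f = L q f' + q_2(\mu h')'.
\]
I expect the resulting ``constrained ordinary differential inequality'' to be the content of Lemma~\ref{LemmaFCeEquiv}. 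Since $L\geq 0$ (the potential term in \eqref{EqNAngOp} is non-negative), and $q<0$, any region where $f'\geq 0$ gives a favorable term $Lqf'\leq 0$.

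The term $Lqf'$ is genuinely needed, and the following check shows why. We need $h(r_e)=0$ and $h>0$, and the decay condition in \eqref{ItFCe2} forces $\mu h'\to 0$ at $r_e$. If we had $(\mu h')'\leq 0$ throughout, then $\mu h'\leq 0$, so $h$ would be non-increasing and could not become positive. Hence the negativity must come from $Lqf'$, at least near $r_e$, and the available room grows with $L$.

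For $L$ above an explicit threshold (depending on the parameters), I would try the explicit choice saturating the constraint,
\[
  f_-=c\,\frac{\mu}{q^2}\,\chi(r),
\]
or a closely related rational function such as $f_-=\mu\,p(r)$ with $p$ a suitable positive rational function. Then $C^{f_-}\leq 0$ holds by construction, $f_-(r_e)=0$, and $f_-'$ is bounded, which gives \eqref{ItFCe2}. It then remains to verify that $L q f_-'+q_2(\mu (rf_-)')'\leq 0$ and $(rf_-)'(r_0)\geq 0$. After clearing denominators, these become polynomial inequalities in $r$, the root parameters $r_C,r_e,r_c$ (normalized by \eqref{EqGNormalize}), $r_0$ and $L$. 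The $L$-term dominates once $L$ is large; below that threshold the explicit choice need not work, which is why the small-$L$ case is handled separately.

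For small $L$, I would use the piecewise argument. On $[r_e,r_1]$, saturate the constraint: $f_-=c\mu/q^2$, for which $\cD_L f_-\leq 0$ must be checked near $r_e$, where $f_-'>0$ helps. At a point $r_1$, switch to solving the second-order linear ODE $\cD_L f=0$, that is,
\[
  q_2(\mu(rf)')'+Lqf'=0,
\]
with $\cC^1$ matching data at $r_1$. Because $f$ is $\cC^1$ with only a jump in $f''$ at $r_1$, this gives the ``piecewise smooth'' regularity in the statement. One then shows three things along this solution:
\begin{itemize}
  \item the constraint $(fq^2/\mu)'\leq 0$ persists, for instance by a comparison or Riccati-type argument for $f'/f$ against $\mu'/\mu-2q'/q$;
  \item $f$ stays positive up to $r_0$;
  \item $(rf)'(r_0)\geq 0$.
\end{itemize}
The last point is natural, since integrating the ODE gives $\mu(rf)'$ in terms of $-\frac{L}{q_2}\int qf'$, which is $\geq 0$ where $f'\geq 0$.

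The main obstacle is this last step: choosing the switch point $r_1$ and verifying that the constraint and the positivity survive on the equality piece for all subextremal parameters and all $r_0\in(r_e,r_c)$. This is where inequalities of the type of Lemma~\ref{LemmaGIneq1} and Corollary~\ref{CorGBound} ($a^2\leq r_Cr_e$, $\mu\leq(r-r_e)(r-r_C)$) would enter, and where the verification may require computer algebra.
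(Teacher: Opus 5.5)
There is a genuine gap at $r=r_e$. Both of your constructions begin by saturating the constraint, $f\propto\mu/q^2$ (or $f=\mu p$), so $f$ vanishes exactly to first order at $r_e$. For any such $f=(r-r_e)k(r)$ with $k(r_e)>0$, your own formula $\cD_L f=Lqf'+q_2(\mu(rf)')'$ gives
\[
  \cD_L f(r_e)=q_2k(r_e)\bigl(r_e\mu'(r_e)-L(r_0^2-r_e^2)\bigr)=q_2k(r_e)Q(r_e)\bigl(A(r_e)-L\bigr),\qquad A(r_e)=\frac{r_e\mu'(r_e)}{r_0^2-r_e^2}.
\]
So saturation near $r_e$ is admissible only if $L\geq A(r_e)$. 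Since $A(r_e)\to\infty$ as $r_0\searrow r_e$ while the low eigenvalues stay bounded, the regime $L<A(r_e)$ cannot be avoided, and neither of your constructions covers it. In effect your case split is reversed: the scheme ``saturate, then solve $\cD_L f=0$'' that you propose for small $L$ is exactly what works, and what the paper does, for $L>A(r_e)$.

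For $L<A(r_e)$, $f$ must vanish sublinearly, $f\sim(r-r_e)^{\alpha(r_e)/A(r_e)}$, with normalized logarithmic derivative $\alpha(r_e)\leq L$. The paper takes $\alpha\equiv\frac{L+1}{2}$, which satisfies the constraint when $A(r_e)>\frac{L+1}{2}$. The remaining condition $(L^2-1)Q^2+2(L+1)\mu(3r^2-r_0^2)-4r\mu\mu'\geq 0$ already forces $L\geq 1$ at $r=r_e$. In the interior it uses the precise spectral bound $L\geq\frac{2r_0^2}{r_0^2+r_Cr_e}>1$ (Lemma~\ref{LemmaCeILower}). That bound comes from completing the square with a $bm\cot\theta$ term and using $\sigma=\frac{am}{r_0^2+a^2}$, $m\geq 1$, and $a^2<r_Cr_e$. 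Your plan has no construction in this regime, and it uses only $L\geq 0$, whereas the paper's construction there depends on this bound.

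For $L>A(r_e)$, the switching point is not free. Saturation is admissible exactly up to the unique zero $\wh{r}$ of $\wh{S}_L$; uniqueness comes from the monotonicity of $\wh{S}_L$, itself a nontrivial polynomial inequality. Switching there is also what makes $f_-$ of class $\cC^2$, since $\cD_L F(\wh{r})=0$. A jump in $f''$ at an arbitrary $r_1$ does not give the $\cC^2((r_e,r_0])$ regularity the statement requires, though it would be harmless for the flux argument.

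The three properties on the equality piece, which you leave open, need concrete comparison functions. Set $\fp=\mu r^2e^{-L\tau}$ and $\sD_L f=(\fp f')'+r\mu'e^{-L\tau}f$. The Wronskian identity against $F=\mu/Q^2$, for which $\sD_L F\geq 0$ beyond $\wh{r}$, gives $C^{f_2}\leq 0$. The Wronskian identity against $h=\frac{r_0}{r}h_\flat$ with $\dot h_\flat/h_\flat=\frac{L-1}{2}$, for which $\sD_L h\leq 0$ and $(rh)'(r_0)=0$, gives $f_2>0$ and $(rf_2)'(r_0)>0$. This second comparison needs the Wronskian to be positive at $\wh{r}$, which is the extra inequality $B<A+\frac{2\mu}{Q}+1$. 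Your suggested route to $(rf)'(r_0)\geq 0$, integrating the ODE and using $f'\geq 0$, does not work as stated, since nothing forces $f'\geq 0$ on the equality piece.
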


\begin{proof}[Proof of Theorem~\usref{ThmNNo} given Proposition~\usref{PropFCe}]
  Lemmas~\ref{LemmaFC} and~\ref{LemmaFCLim} imply that
  \begin{equation}
  \label{EqFCePf}
  \begin{split}
    J_-^{f_-}[R_-](r_0) &= -2 q_2 r_0\mu(r_0)f_-(r_0)\Re(R'\ol{R})(r_0) + q_2\mu(r_0)(r f_-)'(r_0)|R(r_0)|^2 \\
      &\leq \lim_{r\searrow r_e}J_-^{f_-}[R_-](r) = 0;
  \end{split}
  \end{equation}
  for the right-hand side we use the vanishing of $f_-(r_e)$. Since $q_2,\mu(r_0)>0$, this implies that $\Re(R'\ol{R})(r_0)\geq 0$. This is compatible with~\eqref{EqFCc} only if $\Re(R'\ol{R})(r_0)=0$, and this then forces $B_c=0$. The regular-singular ODE argument following~\eqref{EqNNoRegSing} implies that $R=0$. (Alternatively, we can infer that $\int_{r_0}^{r_c}\cF_+^{f_+}[R_+]\,\dd r=0$, and then Lemma~\ref{LemmaFCc} and $R_+(r_c)=0$ imply $R_+=0$ on $[r_0,r_c]$, and thus $R=0$ by uniqueness for \emph{regular} ODEs.)
\end{proof}

\begin{rmk}[Reformulation with a single current]
\label{RmkFCeSingle}
  Define $f,J_\sharp^f$ to be $f_-,J_-^{f_-}$ on $[r_e,r_0]$ and $f_+,J_+^{f_+}$ on $[r_0,r_c]$. Scaling $f_+$, the resulting $f$ is continuous across $r=r_0$. Then
  \[
    \frac{\dd}{\dd r}J_\sharp^f = C^f\Bigl|R'-i\frac{|q|}{\mu}R\Bigr|^2 + (\cD_L f)|R|^2,
  \]
  since $q<0$ on $[r_e,r_0)$ and $q>0$ on $(r_0,r_c]$. Instead of recording a current $\sim\Re(R'\ol{R})=\frac12(|R|^2)'$ at $r=r_0$, we now have a distributional flux across $r=r_0$ arising from the second order term $q_2 r\mu f''$ of $\cD_L f$; since $(r f_+)'|_{r=r_0}=0\leq (r f_-)'|_{r=r_0}$, the jump $J_\sharp^f(r_0+)-J_\sharp^f(r_0-)$ is a \emph{non-positive} multiple of $|R(r_0)|^2$. From $|B_c|^2\sim J_\sharp^f(r_c)\leq J_\sharp^f(r_e)=0$, we then conclude $R=0$ as before.
\end{rmk}

In the remainder of this section, we prove Proposition~\ref{PropFCe}. We write $f_-=f$ for the sought-after function, with $f>0$ on $(r_e,r_0]$. \emph{We work with $r\in[r_e,r_0]$ throughout.} We first observe:

\begin{lemma}[Equivalent formulation of monotonicity]
\label{LemmaFCeEquiv}
  Define $Q(r):=r_0^2-r^2=-q(r)/q_2$ (which is positive on $[r_e,r_0)$) and, for $r\in[r_e,r_0)$,
  \begin{equation}
  \label{EqFCeEquivFA}
    F(r) := \frac{\mu(r)}{Q(r)^2},\quad
    A(r) := \frac{r(4 r\mu + Q\mu')}{Q^2}.
  \end{equation}
  Define $\tau=\tau(r)$ by $\frac{\dd\tau}{\dd r}=\frac{Q}{r\mu}$, and denote differentiation along $\tau$ by a dot, so $\dot f=\frac{r\mu}{Q}f'$. Define the normalized logarithmic derivative $\alpha$ of $f$ by
  \[
    \alpha := \frac{\dot f}{f} = \frac{r\mu}{Q}\frac{f'}{f}.
  \]
  Finally, define the function
  \begin{equation}
  \label{EqFCeEquivSL}
    S_L \colon [r_e,r_0)\times\R \ni (r,\alpha) \mapsto \alpha(L-\alpha) + F(r)\bigl((3 r^2-r_0^2)\alpha-r\mu'(r)\bigr).
  \end{equation}
  Then
  \begin{subequations}
  \begin{alignat}{2}
  \label{EqFCeEquiv1}
    C^f(r)&=\frac{q_2 f Q^2}{r}(\alpha-A) &&\leq 0 \iff \alpha(r)\leq A(r); \\
  \label{EqFCeEquiv2}
    (\cD_L f)(r)&=\frac{q_2 f Q^2}{r\mu}\bigl(\dot\alpha-S_L(r,\alpha)\bigr) && \leq 0 \iff \dot\alpha(r) \leq S_L(r,\alpha(r)).
  \end{alignat}
  \end{subequations}
\end{lemma}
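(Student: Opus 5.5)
The plan is a direct computation. I would substitute the relations $q=-q_2Q$ and $q'=2q_2r$ (using $q=q_2(r^2-r_0^2)$ from~\eqref{EqFsigma}) into the formulas~\eqref{EqFCCur2}. I would then eliminate $f'$ in favor of $\alpha$ via
\[
  f'=\frac{\alpha fQ}{r\mu}.
\]
Throughout, $r\in[r_e,r_0)$, so that $r>0$, $\mu>0$ and $Q>0$. Since $f>0$ on $(r_e,r_0]$ by standing assumption, the prefactors $\frac{q_2fQ^2}{r}$ and $\frac{q_2fQ^2}{r\mu}$ are positive. The two equivalences ``$\leq 0\iff\ldots$'' are therefore immediate once the two identities are established.

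For~\eqref{EqFCeEquiv1}, I would write
\[
  C^f=\mu'fq-2\mu fq'-\mu f'q=q_2\bigl(-\mu'fQ-4r\mu f+\mu f'Q\bigr).
\]
Inserting $\mu f'Q=\alpha fQ^2/r$ gives
\[
  C^f=q_2f\Bigl(\frac{\alpha Q^2}{r}-4r\mu-Q\mu'\Bigr).
\]
Factoring out $q_2fQ^2/r$ leaves $\alpha-\frac{r(4r\mu+Q\mu')}{Q^2}=\alpha-A$, as claimed.

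For~\eqref{EqFCeEquiv2}, I would first use $(fq')'=2q_2(rf)'$ to write
\[
  \cD_Lf=-Lq_2Qf'+q_2\bigl(\mu(rf)'\bigr)'.
\]
Next, $(rf)'=f\bigl(1+\alpha Q/\mu\bigr)$, so that $\mu(rf)'=f(\mu+\alpha Q)$. Differentiating this product and using $Q'=-2r$ and $\alpha'=\dot\alpha\,Q/(r\mu)$ yields
\[
  \bigl(\mu(rf)'\bigr)'=f'(\mu+\alpha Q)+f\Bigl(\mu'+\frac{\dot\alpha Q^2}{r\mu}-2r\alpha\Bigr).
\]
Replacing every $f'$ by $\alpha fQ/(r\mu)$ and factoring out $q_2fQ^2/(r\mu)$, the bracket becomes
\[
  \dot\alpha-L\alpha+\alpha^2+\frac{\alpha\mu}{Q}-\frac{2r^2\mu\alpha}{Q^2}+\frac{r\mu\mu'}{Q^2}.
\]
The key regrouping step writes $\frac{\alpha\mu}{Q}=\alpha F\,(r_0^2-r^2)$ and $\frac{2r^2\mu\alpha}{Q^2}=2r^2F\alpha$. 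The $\alpha$-linear $F$-terms then combine to $-\alpha F\,(3r^2-r_0^2)$, and the last term is $rF\mu'$. The bracket thus equals
\[
  \dot\alpha-\Bigl[\alpha(L-\alpha)+F\bigl((3r^2-r_0^2)\alpha-r\mu'\bigr)\Bigr]=\dot\alpha-S_L(r,\alpha),
\]
which is~\eqref{EqFCeEquiv2}.

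There is no real obstacle; the only care needed is bookkeeping in the $\cD_Lf$ computation. In particular, one must convert $\alpha'$ into $\dot\alpha$ with the correct factor $Q/(r\mu)$, and collect the three $\alpha F$-contributions correctly so that exactly $3r^2-r_0^2$ emerges.
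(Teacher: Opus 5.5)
Your computation is correct and is essentially the paper's own proof: substituting $q=-q_2Q$, $q'=2q_2r$ and $f'=\alpha fQ/(r\mu)$, writing $\mu(rf)'=(\mu+\alpha Q)f$, expanding, and regrouping the $\alpha F$-terms into $-(3r^2-r_0^2)\alpha F$ is exactly how the paper obtains both identities. The only imprecision is your claim that $\mu>0$ on $[r_e,r_0)$. Since $\mu(r_e)=0$ (and $f(r_e)=0$ in the application), the identity for $\cD_L f$, which has $\mu$ in the denominator, and the two equivalences should be read on $(r_e,r_0)$, where $q_2,f,\mu,Q,r>0$, as the paper does.
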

\begin{proof}
  Plugging $q=-q_2 Q$ and $f'=\frac{Q}{r\mu}\dot f=\frac{\alpha Q}{r\mu}f$ into~\eqref{EqFCCur2} gives
  \[
    C^f = -q_2\mu' f Q - 4 q_2 r \mu f + q_2 \frac{\alpha Q^2}{r}f = \frac{q_2 f Q^2}{r}(\alpha-A).
  \]
  Repeatedly using the relationship $X'=\frac{Q}{r\mu}\dot X$, we moreover compute
  \begin{align*}
    -\frac{1}{q_2 f}\cD_L f &= L Q\frac{f'}{f} - \frac{1}{f}\bigl(\mu(r f)'\bigr)' = \frac{L Q^2}{r\mu}\alpha - \frac{1}{f}\bigl((Q\alpha+\mu)f\bigr)' \\
      &= \frac{L Q^2}{r\mu}\alpha + \Bigl(2 r\alpha - \frac{Q^2}{r\mu}\dot\alpha - \mu'\Bigr) - (Q\alpha+\mu)\frac{Q\alpha}{r\mu} \\
      &= \frac{Q^2}{r\mu}\bigl( -\dot\alpha + S_L(r,\alpha) \bigr).
  \end{align*}
  The equivalences in~\eqref{EqFCeEquiv1}--\eqref{EqFCeEquiv2} follow from $q_2>0$ and $f,\mu>0$ on $(r_e,r_0]$.
\end{proof}

This rephrases Proposition~\ref{PropFCe} as a constrained ordinary differential inequality for $\alpha$. One can, of course, recover $f$ from $\alpha$ by integration:

\begin{lemma}[Integration of $\alpha$]
\label{LemmaFCeInt}
  If $\alpha\in\cC^0([r_e,r_0))$ satisfies $\alpha(r_e)>0$, every nonzero solution $f\colon(r_e,r_0)\to\R$ of $\alpha=\dot f/f$ satisfies $\lim_{r\searrow r_e}f(r)=0$ and $|f'(r)|=o((r-r_e)^{-1})$ as $r\searrow r_e$. Moreover, $f$ is unique up to scaling.
\end{lemma}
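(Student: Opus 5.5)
We rewrite $\alpha=\dot f/f$ in the variable $r$: since $\dot f=\frac{r\mu}{Q}f'$, the equation becomes
\[
  \frac{f'}{f}=\frac{\alpha(r)Q(r)}{r\mu(r)}.
\]
So every nonzero solution has the form
\[
  f(r)=c\exp\Bigl(-\int_r^{r_1}\frac{\alpha Q}{s\mu}\,\dd s\Bigr)
\]
for a fixed $r_1\in(r_e,r_0)$ and a constant $c\neq0$. Uniqueness up to scaling is then immediate from linear ODE uniqueness on $(r_e,r_0)$: $\log|f|$ is determined up to an additive constant, and $f$ cannot change sign.

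The key local fact is the behavior of the coefficient near $r_e$. Since $\mu(r_e)=0$ and $\mu'(r_e)>0$ (as $r_e$ is a simple root with $\mu>0$ on $(r_e,r_c)$), and $Q(r_e)=r_0^2-r_e^2>0$, we have
\[
  \frac{\alpha(r)Q(r)}{r\mu(r)}=\frac{\beta}{r-r_e}\bigl(1+o(1)\bigr),\qquad \beta:=\frac{\alpha(r_e)Q(r_e)}{r_e\mu'(r_e)}>0,
\]
using only the continuity of $\alpha$ at $r_e$. Fix $0<\beta'<\beta$. For $r$ close to $r_e$ the integrand is at least $\beta'/(s-r_e)$, so
\[
  \int_r^{r_1}\frac{\alpha Q}{s\mu}\,\dd s\geq \beta'\log\frac{1}{r-r_e}-C.
\]
Hence $|f(r)|\leq C'(r-r_e)^{\beta'}\to0$.

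For the derivative, combine the two estimates:
\[
  |f'(r)|=|f(r)|\cdot\Bigl|\frac{\alpha Q}{r\mu}\Bigr|\leq C'(r-r_e)^{\beta'}\cdot\frac{C''}{r-r_e}=O\bigl((r-r_e)^{\beta'-1}\bigr)=o\bigl((r-r_e)^{-1}\bigr).
\]
This uses only $\beta'>0$. The sole point requiring care is that $\alpha$ is merely continuous at $r_e$, so the comparison must be made with a slightly smaller exponent $\beta'$ rather than $\beta$ itself; the positivity $\alpha(r_e)>0$ is exactly what makes $\beta'>0$ possible. The proof is entirely elementary.
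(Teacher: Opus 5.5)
Your proof is correct and is essentially the paper's argument. The paper integrates $\frac{\dd}{\dd\tau}\log f=\alpha$ in the variable $\tau$, using $\tau(r)-\frac{\log(r-r_e)}{A(r_e)}\in\CI([r_e,r_0])$, to get $|f|\lesssim(r-r_e)^{(\alpha(r_e)-\eps)/A(r_e)}$; this is your bound with $\beta=\alpha(r_e)/A(r_e)$, so working directly in $r$ with a slightly smaller exponent $\beta'$ is only a cosmetic change.
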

\begin{proof}
  Direct integration gives $\tau(r)-\frac{r_0^2-r_e^2}{r_e\mu'(r_e)}\log(r-r_e)=\tau(r)-\frac{\log(r-r_e)}{A(r_e)}\in\CI([r_e,r_0])$. (In particular, $\tau(r)\to-\infty$ as $r\searrow r_e$.) Furthermore, integration of $\frac{\dd}{\dd\tau}\log f=\alpha$ gives $f(\tau)=f(\tau_0)\exp(\int_{\tau_0}^\tau\alpha(s)\,\dd s)$, so $|f|\leq C_\eps e^{(\alpha(r_e)-\eps)\tau}\leq C'_\eps(r-r_e)^{(\alpha(r_e)-\eps)/A(r_e)}$ for all $\eps>0$. Therefore, $|f'|=|\frac{Q\alpha}{r\mu}f|=\cO((r-r_e)^{(\alpha(r_e)-\eps)/A(r_e)-1})$. The asymptotics of $f$ follow from $\alpha(r_e)>0$ and
  \begin{equation}
  \label{EqFCeIntApos}
    A(r_e)=\frac{r_e\mu'(r_e)}{r_0^2-r_e^2}>0.
  \end{equation}
  Uniqueness up to scaling follows from the first order nature of the ODE $\dot f/f=\alpha$.
\end{proof}

For later use, we record a monotonicity property of $A$.

\begin{lemma}[Monotonicity of $A$]
\label{LemmaFCeAMono}
  $A(r)=\frac{r(4 r\mu(r)+Q(r)\mu'(r))}{Q(r)^2}$ satisfies $A'(r)\geq 0$ on $[r_e,r_0)$ (with strict inequality on $(r_e,r_0)$), and $A(r_e)>0$ as well as $\lim_{r\nearrow r_0}A(r)=+\infty$.
\end{lemma}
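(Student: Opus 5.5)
The two endpoint statements are immediate, and the monotonicity reduces to the positivity of one explicit polynomial expression.

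\emph{Endpoints.} At $r=r_e$ we have $\mu(r_e)=0$, so $A(r_e)=r_e\mu'(r_e)/Q(r_e)>0$, as already noted in~\eqref{EqFCeIntApos}. As $r\nearrow r_0$, the numerator $r(4r\mu+Q\mu')$ tends to $4r_0^2\mu(r_0)>0$, while $Q(r)^2\searrow 0$. Hence $A\to+\infty$.

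\emph{Monotonicity.} I write $A=\frac{4r^2\mu}{Q^2}+\frac{r\mu'}{Q}$ and differentiate using $Q'=-2r$. Multiplying by $Q^3>0$ gives
\[
  Q^3A' = 16r^3\mu + 8r\mu Q + 6r^2\mu' Q + (r\mu')'\,Q^2 =: P(r).
\]
So the task is to show $P>0$ on $(r_e,r_0)$ and $P\geq 0$ at $r_e$.

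I would first dispose of the $\mu'$-term with Lemma~\ref{LemmaGIneq1}, applied with this $r_0$. For $r<r_0$ it gives $\mu'(r)>-2\mu(r)/(r_0-r)$. Whenever $\mu'<0$, this yields
\[
  6r^2\mu' Q>-12r^2(r_0+r)\mu .
\]
Combined with the first two terms of $P$, this leaves
\[
  \bigl(16r^3+8r(r_0^2-r^2)-12r^2(r_0+r)\bigr)\mu=4r(r_0-r)(2r_0-r)\mu\geq 0 .
\]
When $\mu'\geq 0$, the three terms $16r^3\mu+8r\mu Q+6r^2\mu'Q$ are all non-negative.

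What remains is the term $(r\mu')'Q^2$, where
\[
  (r\mu')'=-16\lambda r^3+4(1-\lambda a^2)r-2\bhm
\]
can be negative, for instance near $r_e$ when $\bhm$ is comparable to $r_e$. This is the main obstacle. The crude bound above discards too much, so positivity must come from balancing $(r\mu')'Q^2$ against the $\mu$- and $\mu'$-terms with the correct weights.

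To handle it, I would use the root parameterization of Corollary~\ref{CorGParam} with the normalization~\eqref{EqGNormalize}, $r_e=1$. Then $\mu=-\lambda(r-r_-)(r-r_C)(r-1)(r-r_c)$ with $r_-=-(r_C+1+r_c)$, and $r_0\in(1,r_c)$. After substituting $r=1+x(r_0-1)$ with $x\in[0,1]$, $P$ becomes a polynomial in $(x,r_0,r_C,r_c)$ on an explicit semialgebraic domain, and I would verify its positivity there. First I would check the boundary $x=0$: there $\mu=0$ and $P=6\mu'(1)Q(1)+(r\mu')'(1)Q(1)^2$, whose sign I would confirm using $Q(1)\leq r_c^2-1$ and the explicit values of $\mu'(1)$ and $\mu''(1)$. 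Then I would show that $P$ is increasing in $x$, or else expand $P$ in shifted variables in which all coefficients are non-negative; the identity $r_-=-(r_C+r_e+r_c)$ and the bound $a^2\leq r_Cr_e$ from Corollary~\ref{CorGBound} are the tools I expect to need for controlling the signs.
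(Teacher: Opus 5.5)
Your endpoint arguments are fine, and your $P$ is exactly the paper's $A_1=8(r_0^2+r^2)r\mu+(r_0^2+5r^2)Q\mu'+rQ^2\mu''$: expand $8(r_0^2+r^2)=8Q+16r^2$ and $r_0^2+5r^2=Q+6r^2$. The gap is that you never establish positivity of $P$, which is the entire content of the monotonicity claim.

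\emph{The preliminary step has an algebra error.} In the step via Lemma~\ref{LemmaGIneq1} you write $16r^3+8r(r_0^2-r^2)-12r^2(r_0+r)=4r(r_0-r)(2r_0-r)$. The correct value is $4r(2r_0^2-3rr_0-r^2)$. This is negative for $r>\frac{\sqrt{17}-3}{2}r_0$, hence on all of $(r_e,r_0)$ whenever $r_0<\frac{3+\sqrt{17}}{4}r_e$. So your diagnosis that the first three terms are harmless and only $(r\mu')'Q^2$ is the obstacle is unfounded; all the terms must be balanced jointly.

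\emph{The final paragraph is a plan, not a proof.} Its concrete suggestions are unsupported. You give no reason why $P$ should be increasing in $x$. The bound $a^2\leq r_Cr_e$ plays no role: $P$ is linear in $\mu$, so after writing $\mu=-\lambda(r-r_-)(r-r_C)(r-r_e)(r-r_c)$ the factor $\lambda>0$ drops out and no trace of $a$ remains.

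\emph{What closes the argument in the paper is a specific positivity certificate.} Normalize $r_e=1$, divide by $\lambda$, and use gap variables $r_C=1-s$, $r=1+x$, $r_0=1+x+y$, $r_c=1+x+y+z$, with $s\in[0,1]$ and $x,y,z\geq 0$. Then $\mu/\lambda=(4+2x+y+z-s)(x+s)x(y+z)$, $Q=y(2+2x+y)$, and $\partial_r=\partial_x-\partial_y$ at fixed $r_0,r_c$. In these variables $A_1$ is quadratic in $s$. Its coefficients in the Bernstein basis $(1-s)^2,\ 2s(1-s),\ s^2$ are polynomials in $x,y,z$ with non-negative monomial coefficients; this is checked by computer algebra. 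Strictness on $(r_e,r_0)$ follows because not all of these coefficients vanish identically.

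In your parameterization, both $x\in[0,1]$ and $r_C\in[0,1)$ are bounded on both sides. So plain non-negativity of monomial coefficients is not the natural certificate there. The point of the paper's choice is that only one variable, $s$, is bounded on both sides, and that one is handled by the Bernstein basis.
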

\begin{proof}
  The positivity of $A(r_e)$ was already observed in~\eqref{EqFCeIntApos}. We have
  \[
    A'(r) = \frac{A_1(r)}{Q(r)^3},\quad
    A_1(r) = 8(r_0^2+r^2)r\mu+(r_0^2+5 r^2)Q\mu'+r Q^2\mu'',
  \]
  and thus need to verify the non-negativity of $A_1(r)$. We do this using a Bernstein expansion argument. It is convenient to use~\eqref{EqGNormalize}, so normalize to $r_e=1$, and write
  \begin{equation}
  \label{EqFCeAMonosxyz}
    r_C=1-s < r_e=1 < r=1+x < r_0=1+x+y < r_c=1+x+y+z
  \end{equation}
  with $s\in[0,1]$ and $x,y,z\geq 0$. If we replace $\mu$ by $\lambda^{-1}\mu$, then $A$ scales by $\lambda^{-1}$ as well; thus, it suffices to verify $A'\geq 0$ for
  \begin{equation}
  \label{EqFCeAMonoMu}
    -(r-r_-)(r-r_C)(r-r_e)(r-r_c) = (4+2 x+y+z-s)(x+s)x(y+z)
  \end{equation}
  in place of $\mu$. We also record
  \begin{equation}
  \label{EqFCeAMonoQ}
    Q=y(2+2 x+y).
  \end{equation}
  The derivative ${}'=\pa_r$, with $r_C,r_e,r_c,r_0$ fixed, is given by $\pa_x-\pa_y$ in the coordinates $s,x,y,z$. One then finds that $A_1$, as a polynomial in $s,x,y,z$, has degree $2$ in $s$. In its Bernstein expansion in $s$,
  \begin{equation}
  \label{EqFCeAMonoBern}
    A_1(s,x,y,z) = \sum_{j=0}^2 c_j(x,y,z)b_{j,2}(s),\quad b_{j,n}(s):=\binom{n}{j}s^j(1-s)^{n-j}\geq 0\ \ (0\leq s\leq 1),
  \end{equation}
  the coefficients $c_j$ are polynomials in $x,y,z$; and if we expand $c_j(x,y,z)=\sum_{k,l,n}c_{j,k l n}x^k y^l z^n$, then $c_{j,k l n}\geq 0$ for all $k,l,n$ and for all $j=0,\ldots,2$. (See Appendix~\ref{SMath} for the \texttt{Mathematica} code.)

  The strict inequality $A'(r)>0$ on $(r_e,r_0)$ follows from $A_1>0$ for $s\in(0,1)$, which uses two facts: $b_{j,2}(s)>0$ for $s\in(0,1)$, and not all $c_j$ vanish identically, so one is strictly positive at $x,y,z>0$.
\end{proof}

%%%%%%%%%%%%%%%%%%%%%%%%%%%%%%
\subsubsection{Analysis of the ceiling trajectory}

While the ``ceiling trajectory'' $\alpha=A$ (which saturates the inequality~\eqref{EqFCeEquiv1}) is not regular up to $r=r_0$ (since $A$ blows up inverse quadratically as $r\nearrow r_0$), it is nonetheless useful to determine the extent to which it does work for~\eqref{EqFCeEquiv2}. We begin with:

\begin{lemma}[Ceiling trajectory]
\label{LemmaFCeCeil}
  We have $\dot F/F=A$; that is, for $f=F$ we have equality in~\eqref{EqFCeEquiv1}. Furthermore,
  \begin{equation}
  \label{EqFCeCeil}
    \frac{\cD_L F}{F} = \frac{q_2 Q^2}{r\mu}A \wh{S}_L(r),\quad \wh{S}_L(r):=\frac{\dot A-S_L(r,A)}{A}.
  \end{equation}
\end{lemma}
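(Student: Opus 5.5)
The first claim is a direct logarithmic derivative computation, and the second is Lemma~\ref{LemmaFCeEquiv} applied with $f=F$.

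\textbf{Step 1: $\dot F/F=A$.} Since $F=\mu/Q^2$ and $Q'=-2r$, we have
\[
  \frac{F'}{F}=\frac{\mu'}{\mu}-\frac{2Q'}{Q}=\frac{\mu'}{\mu}+\frac{4r}{Q}.
\]
Multiplying by $\frac{r\mu}{Q}$, which converts $r$-derivatives to $\tau$-derivatives, gives
\[
  \frac{\dot F}{F}=\frac{r\mu'}{Q}+\frac{4r^2\mu}{Q^2}=\frac{r(Q\mu'+4r\mu)}{Q^2}=A,
\]
by the definition~\eqref{EqFCeEquivFA}. By~\eqref{EqFCeEquiv1}, the normalized logarithmic derivative of $f=F$ is $\alpha=A$, so $C^F=0$ identically on $(r_e,r_0)$. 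This is the claimed equality in~\eqref{EqFCeEquiv1}.

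\textbf{Step 2: the formula~\eqref{EqFCeCeil}.} I apply~\eqref{EqFCeEquiv2} with $f=F$ and $\alpha=A$. The only hypothesis used in deriving that identity was $f>0$ (so that $f'=\frac{\alpha Q}{r\mu}f$ makes sense), and this holds since $\mu,Q>0$ on $(r_e,r_0)$. The identity gives
\[
  \cD_L F=\frac{q_2FQ^2}{r\mu}\bigl(\dot A-S_L(r,A)\bigr).
\]
Dividing by $F$ and writing $\dot A-S_L(r,A)=A\,\wh S_L(r)$ yields~\eqref{EqFCeCeil}. This division is legitimate because $A>0$ on $[r_e,r_0)$, by Lemma~\ref{LemmaFCeAMono} ($A(r_e)>0$ and $A'\ge0$).

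\textbf{Only potential obstacle.} One must check that the derivation of~\eqref{EqFCeEquiv2} used only the relation $f'=\frac{\alpha Q}{r\mu}f$ and not any smoothness of $f$ up to $r_0$. That derivation is a pointwise computation on $(r_e,r_0)$, so this check is routine. With it, the lemma is a purely algebraic consequence of earlier results.
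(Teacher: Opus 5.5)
Your proposal is correct and follows the paper's own proof: you compute $\dot F/F=\frac{r\mu}{Q}(\log F)'=A$ directly, and then read off \eqref{EqFCeCeil} from \eqref{EqFCeEquiv2} with $f=F$, $\alpha=A$. Your extra remarks are bookkeeping that the paper leaves implicit and that is harmless: positivity of $F$ on $(r_e,r_0)$, and $A>0$ from Lemma~\ref{LemmaFCeAMono}, which makes the division defining $\wh{S}_L$ legitimate.
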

\begin{proof}
  We have $\frac{\dot F}{F} = \frac{r\mu}{Q}(\log F)' = \frac{r\mu'}{Q} - 2\frac{r\mu\cdot(-2 r)}{Q^2} = \frac{r(4 r\mu+Q\mu')}{Q^2} = A$; and~\eqref{EqFCeCeil} is~\eqref{EqFCeEquiv2}.
\end{proof}

When $\wh{S}_L\leq 0$, we have $\dot A\leq S_L(r,A)$ and thus the ceiling trajectory is acceptable; when $\wh{S}_L>0$ on the other hand, the ceiling trajectory increases too fast, and thus $F$ violates $\cD_L F\leq 0$. Let us thus investigate the sign of $\wh{S}_L(r)$ for $r\in(r_e,r_0)$ and its monotonicity properties.

\begin{lemma}[Limits and monotonicity of $\wh{S}_L$]
\label{LemmaFCeMono}
  On $(r_e,r_0)$, we have $\wh{S}_L'(r)>0$, and
  \begin{equation}
  \label{EqFCeMonoLim}
    \wh{S}_L(r_e) = A(r_e) - L,\quad
    \lim_{r\nearrow r_0}\wh{S}_L(r) = +\infty.
  \end{equation}
\end{lemma}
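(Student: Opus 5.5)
The proof will rest on an explicit formula for $\wh S_L$ in which $L$ enters only as an additive constant. Dividing~\eqref{EqFCeEquivSL} at $\alpha=A$ by $A$ gives
\[
  \wh S_L(r) = G(r) - L,\qquad G(r) := \frac{\dot A}{A} + A - (3r^2-r_0^2)F + \frac{r\mu' F}{A},
\]
with $\dot A=\frac{r\mu}{Q}A'$. Consequently $\wh S_L'=G'$, so the monotonicity statement does not depend on $L$. It therefore suffices to prove three things: $G(r_e)=A(r_e)$, $G(r)\to+\infty$ as $r\nearrow r_0$, and $G'>0$ on $(r_e,r_0)$. Throughout I will use Lemma~\ref{LemmaFCeAMono}, which gives $A>0$ on $[r_e,r_0)$. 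Hence $N:=4r\mu+Q\mu'>0$ there, and $G$ is a well-defined rational function on $[r_e,r_0)$.

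\emph{Limits.} At $r=r_e$ we have $\mu(r_e)=0$, so $F(r_e)=0$. Since $A$ is smooth up to $r_e$, also $\dot A(r_e)=0$. The term $r\mu'F/A$ vanishes there as well, because $A(r_e)>0$. Hence $G(r_e)=A(r_e)$, which gives the first identity in~\eqref{EqFCeMonoLim}.

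As $r\nearrow r_0$, write $\log A=\log(rN)-2\log Q$. Since $Q'=-2r$, this gives
\[
  \frac{\dot A}{A}=\frac{r\mu}{Q}\Bigl(\frac{(rN)'}{rN}+\frac{4r}{Q}\Bigr)=4r^2F+\cO(Q^{-1}).
\]
Similarly $A=4r^2F+\cO(Q^{-1})$, the term $F r\mu'/A$ is bounded, and $F\sim\mu(r_0)/Q^2\to+\infty$. Collecting terms,
\[
  G = (8r^2-3r^2+r_0^2)F+\cO(Q^{-1}) \sim 6r_0^2F \to +\infty.
\]

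\emph{Monotonicity.} I will write $G=P_1/(Q^3 rN)$ for an explicit polynomial $P_1$, using $\dot A/A=\frac{\mu A_1}{Q^2 N}$ with $A_1$ from Lemma~\ref{LemmaFCeAMono}. Differentiating then gives
\[
  G'=\frac{P_2}{Q^4r^2N^2}
\]
for an explicit polynomial $P_2$ in $r,r_0$ and the roots. Since the denominator is positive, it remains to show $P_2>0$. For this I will follow the proof of Lemma~\ref{LemmaFCeAMono}: normalize $r_e=1$ via~\eqref{EqGNormalize}, replace $\mu$ by $\lambda^{-1}\mu$ as in~\eqref{EqFCeAMonoMu} (note that $G$ is homogeneous of degree one in $\mu$), use~\eqref{EqFCeAMonoQ}, and pass to the variables $s,x,y,z$ of~\eqref{EqFCeAMonosxyz}, with $\pa_r=\pa_x-\pa_y$. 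I will then Bernstein-expand $P_2$ in $s\in[0,1]$ and check by computer algebra that every coefficient is a polynomial in $x,y,z$ with non-negative coefficients. Strictness on $(r_e,r_0)$, i.e.\ $x,y>0$, would follow as before, since not all Bernstein coefficients vanish identically.

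\emph{Main obstacle.} The hard step is this polynomial positivity. $P_2$ has considerably higher degree than $A_1$, and it is not guaranteed that the naive Bernstein coefficients are all non-negative. If some coefficient fails, I would first try to clear the factor $\mu$ from the $\dot A/A$ term separately. Failing that, I would split $G'$ into the sum $(\dot A/A)'+\bigl(A-(3r^2-r_0^2)F+r\mu'F/A\bigr)'$ and prove positivity of each part with its own Bernstein certificate. As a last resort, I would raise the Bernstein degree in $s$ (degree elevation), which often converts sign-indefinite coefficients into non-negative ones.
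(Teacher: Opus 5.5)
Your proposal is correct and follows essentially the same route as the paper. The paper writes $\wh{S}_L=-L+\frac{N}{Q^2A_0}$, with $A_0=4r\mu+Q\mu'$ being what you call $N$ and the paper's $N=2r_0^2\mu A_0+rA_0^2+r\mu A_0'Q+\mu\mu'Q^2$; it gets the same leading term $\frac{3\mu(r_0)}{2(r_0-r)^2}$ at $r_0$ and shows $\wh{S}_L'=\frac{N_1}{Q^3A_0^2}>0$ by a degree-$6$ Bernstein expansion in $s$. Your non-reduced numerator is exactly $P_2=Qr^2N_1$, and $Qr^2=y(2+2x+y)(1+x)^2$ has non-negative coefficients and does not depend on $s$, so the paper's certificate for $N_1$ carries over directly and none of your fallback strategies is needed.
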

\begin{proof}
  Since $A\in\CI([r_e,r_0))$ and $A(r_e)>0$ by~\eqref{EqFCeIntApos}, we compute, using $\mu(r_e)=0$ and the formula~\eqref{EqFCeEquivSL}, that $\wh{S}_L(r_e)=-\frac{S_L(r_e,A(r_e))}{A(r_e)}=-L+A(r_e)$. To find the limit as $r\nearrow r_0$, we note that, modulo $(r_0-r)^{-1}\CI([r_e,r_0])$, we have $A(r)\equiv\frac{4 r_0^2\mu(r_0)}{(r_0^2-r^2)^2}\equiv\frac{\mu(r_0)}{(r_0-r)^2}$. Since $\mu(r_0)>0$, this implies $\frac{A'}{A}\equiv\frac{2}{r_0-r}$, and hence the second part of~\eqref{EqFCeMonoLim} follows from
  \[
    \wh{S}_L(r) = \frac{r\mu}{Q}\frac{A'}{A} - \frac{S_L(r,A)}{A(r)} \equiv \frac{\mu(r_0)}{(r_0-r)^2} - \Bigl( -A(r)+\frac{\mu\cdot 2 r_0^2 A}{Q^2 A}\Bigr) \equiv \frac{3\mu(r_0)}{2(r_0-r)^2}.
  \]

  We check the monotonicity of $\wh{S}_L(r)$ using a Bernstein expansion argument similarly to the proof of Lemma~\ref{LemmaFCeAMono}. Write $A=\frac{r A_0}{Q^2}$, so $A_0(r)=4 r\mu(r)+Q(r)\mu'(r)$; using $\frac{F}{A}=\frac{\mu}{r A_0}$, we then first expand
  \begin{align*}
    \wh{S}_L &= \frac{r\mu}{Q}\Bigl(\frac{1}{r}-\frac{2 Q'}{Q}+\frac{A_0'}{A_0}\Bigr) - L + \frac{r A_0}{Q^2} - \frac{\mu}{r A_0}\Bigl((3 r^2-r_0^2)\frac{r A_0}{Q^2} - r\mu'\Bigr) \\
      &= -L + \frac{N(r)}{Q(r)^2 A_0(r)},\quad
      N(r) := 2 r_0^2\mu A_0+r A_0^2+r\mu A_0'Q+\mu\mu' Q^2,
  \end{align*}
  where in the passage to the second line we use $Q=r_0^2-r^2$ and $Q'=-2 r$. Therefore,
  \[
    \wh{S}_L' = \frac{N_1(r)}{Q(r)^3 A_0(r)^2},\quad N_1(r):=N'Q A_0-N Q A_0'-2 N Q'A_0.
  \]
  We need to show that $N_1\geq 0$. Note now that $N_1$ is a polynomial in $r$, the roots $r_C,r_e,r_c$ (and $r_-=-r_C-r_e-r_c$), and the number $r_0\in(r_e,r_c)$. We use~\eqref{EqGNormalize} and define $s,x,y,z$ as in~\eqref{EqFCeAMonosxyz}. If we replace $\mu$ by $\lambda^{-1}\mu$ (also in the definition of $A_0$), then $N_1$ scales by $\lambda^{-3}$; thus, it suffices to verify $N_1\geq 0$ for~\eqref{EqFCeAMonoMu} in place of $\mu$, and with $Q$ given by~\eqref{EqFCeAMonoQ}. Recall that $'=\pa_r$ for fixed $r_C,r_e,r_c,r_0$ reads $\pa_x-\pa_y$ in the coordinates $s,x,y,z$. One can then compute $N_1$ explicitly as a polynomial in $s,x,y,z$. It has degree $6$ in $s$. In its Bernstein expansion in $s$ (cf.\ \eqref{EqFCeAMonoBern}),
  \[
    N_1(s,x,y,z) = \sum_{j=0}^6 c_j(x,y,z)b_{j,6}(s),\quad b_{j,6}(s):=\binom{6}{j}s^j(1-s)^{6-j}\geq 0\ \ (0\leq s\leq 1),
  \]
  the coefficients $c_j$ are polynomials in $x,y,z$; and the coefficient of every monomial $x^k y^l z^n$ appearing in $c_j$ is $\geq 0$. (See Appendix~\ref{SMath} for the \texttt{Mathematica} code that verifies this.) The strict lower bound $N_1>0$ on $(r_e,r_0)$ follows by the same argument as at the end of the proof of Lemma~\ref{LemmaFCeAMono}.
\end{proof}

If $A(r_e)\geq L$, then $\wh{S}_L$ is strictly positive on $(r_e,r_0)$, and thus $f=F$ (i.e., $\alpha=A$) is nowhere admissible for~\eqref{EqFCeEquiv2}; in this case, we show in~\S\ref{SssCeI} that a particular \emph{constant} choice of $\alpha$ verifies~\eqref{EqFCeEquiv1}--\eqref{EqFCeEquiv2}. Otherwise, $f=F$ is admissible for some proper subinterval $(r_e,\wh{r})\subsetneq(r_e,r_0)$ where $\wh{r}$ is the unique zero of $\wh{S}_L$, i.e., $\wh{S}_L(\wh{r})=0$; in this case, we follow $f=F$ until $r=\wh{r}$, and continue $f$ past $r=\wh{r}$ as a solution of $\dot\alpha=S_L(r,\alpha)$ (thus now saturating~\eqref{EqFCeEquiv2}); see~\S\ref{SssCeII}.

%%%%%%%%%%%%%%%%%%%%%%%%%%%%%%
\subsubsection{Case I: low angular eigenvalues (\texorpdfstring{$A(r_e)\geq L$}{A(re) geq L})}
\label{SssCeI}

We first record a lower bound on the angular eigenvalue $L$. Recall that $1\leq m\in\N$.

\begin{lemma}[Lower bound on angular eigenvalues]
\label{LemmaCeILower}
  For $m\geq 1$, all eigenvalues $L$ of $\slDelta_{\sigma,m}$ satisfy
  \begin{equation}
  \label{EqCeILower}
    L \geq \ubar L := \frac{2 r_0^2}{r_0^2 + r_C r_e} > 1.
  \end{equation}
\end{lemma}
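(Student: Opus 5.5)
The plan is to bound the quadratic form of $\slDelta_{\sigma,m}$ from below on $L^2_m(\Sph^2)$, using three ingredients: the pointwise bounds $1\leq c(\theta)\leq b$ and $b\geq 1$, the classical bound for the round Laplacian on $e^{i m\phi}$-modes, and Corollary~\ref{CorGBound}.

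\emph{Step 1 (quadratic form).} For an eigenfunction $S=e^{i m\phi}\cS(\theta)$ with $\|S\|_{L^2}=1$, integrate by parts in~\eqref{EqNAngOp}:
\[
  L=\int_{\Sph^2} c\,|\pa_\theta\cS|^2 + \frac{b^2(m-k\sin^2\theta)^2}{c\sin^2\theta}|\cS|^2,\qquad k:=a\sigma=\frac{a^2 m}{r_0^2+a^2}\in[0,m).
\]
Here the expression for $k$ comes from~\eqref{EqFsigma}. Since $1\leq c\leq b$ and $b\geq1$, we have $c\geq1$ and $b^2/c\geq b\geq 1$, so
\[
  L\geq \int |\pa_\theta\cS|^2+\frac{(m-k s)^2}{s}|\cS|^2,\qquad s:=\sin^2\theta .
\]

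\emph{Step 2 (expand the square).} Write $(m-ks)^2/s = m^2/s - 2mk + k^2 s \geq m^2/s-2mk$. Then
\[
  L\geq \int\Bigl(|\pa_\theta\cS|^2+\frac{m^2}{\sin^2\theta}|\cS|^2\Bigr) - 2mk.
\]
The integral is the round-sphere Dirichlet form of $e^{i m\phi}\cS$. Its minimum on $\ker(\pa_\phi-i m)$ is $m(m+1)$, the lowest spherical harmonic eigenvalue $\ell(\ell+1)$ with $\ell\geq|m|$. Hence
\[
  L\geq m(m+1)-2mk = m^2(1-2\theta)+m,\qquad \theta:=\frac{a^2}{r_0^2+a^2}.
\]

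\emph{Step 3 (minimize over $m$).} By Corollary~\ref{CorGBound}, $a^2\leq r_Cr_e<r_e^2<r_0^2$, so $\theta<\frac12$. The right-hand side above is therefore increasing in $m\geq1$, and its value at $m=1$ is
\[
  2-2\theta=\frac{2r_0^2}{r_0^2+a^2}\geq\frac{2r_0^2}{r_0^2+r_Cr_e}=\ubar L .
\]
Finally, $\ubar L>1$ is equivalent to $r_0^2>r_Cr_e$, which holds because $r_C<r_e<r_0$.

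The only point needing care is Step~2: expanding the square must not lose too much. Dropping $k^2s\geq0$ and using the sharp constant $m(m+1)$ (rather than just $m^2$) is what makes the bound come out exactly in the form $\ubar L$ at $m=1$. The monotonicity in $m$ then relies only on $\theta<\frac12$, which follows from $a^2\leq r_Cr_e$.
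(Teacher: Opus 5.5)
Your proof is correct, but it takes a different route from the paper. The paper never compares with the round sphere. It completes the square directly in the weighted form, writing $c|u'|^2$ as $\bigl|\sqrt{c}\,u'-\gamma\cot\theta\,u/\sqrt{c}\bigr|^2$ plus a cross term and a correction $-\gamma^2\cot^2\theta\,|u|^2/c$. Against $\sin\theta\,\dd\theta$, the cross term integrates to exactly $\gamma\|u\|^2$; the boundary terms vanish because $|u|\lesssim\sin^m\theta$. The paper then chooses $\gamma=bm$, so that after inserting $\sigma=am/(r_0^2+a^2)$ the remaining potential factors as $\frac{b^2m^2}{(r_0^2+a^2)^2}\cdot\frac{r_0^4-a^4\cos^2\theta}{c(\theta)}$, which is minimized at the poles. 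This gives $L\geq bm+bm^2\frac{r_0^2-a^2}{r_0^2+a^2}$, and only afterwards are $b,m\geq 1$ used.

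You instead discard the weights at the outset ($c\geq 1$, $b^2/c\geq 1$), drop $k^2\sin^2\theta$, and quote the bound $\ell(\ell+1)\geq m(m+1)$ for the round Laplacian on $e^{im\phi}$-modes. This yields $L\geq m+m^2\frac{r_0^2-a^2}{r_0^2+a^2}$, which is precisely the paper's bound with $b$ replaced by $1$. Dropping $k^2\sin^2\theta$ costs nothing, since the paper's minimum is attained where $\sin\theta=0$ anyway. Moreover, the paper's completed square with $\gamma=bm$ is just the weighted analogue of the lowering-operator identity built on $|\cS'-m\cot\theta\,\cS|^2$ that proves the $m(m+1)$ bound on the round sphere.

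So your version is shorter and more transparent, given standard facts about spherical harmonics. The paper's version is self-contained and retains the (here unneeded) factor $b\geq 1$. A minor point: you use $\theta$ both for the polar angle and for $a^2/(r_0^2+a^2)$; rename the latter.
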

\begin{proof}
  Let $\gamma\in\R$. For $u$ in the (dense subset of the) domain of $\slDelta_{\sigma,m}$ such that $(\theta,\phi)\mapsto e^{i m\phi}u$ is smooth and $\|u\|^2_{L^2([0,\pi];\sin\theta\,\|\dd\theta|)}=1$, we can write
  \begin{align*}
    \la\slDelta_{\sigma,m}u,u\ra_{L^2} &= \int_0^\pi c(\theta)|u'|^2\,\sin\theta\,\dd\theta + \int_0^\pi \frac{b^2}{c(\theta)\sin^2\theta}(m-a\sigma\sin^2\theta)^2|u|^2\,\sin\theta\,\dd\theta \\
      &= \int_0^\pi \Bigl|\sqrt{c(\theta)}\,u'-\frac{\gamma\cot\theta}{\sqrt{c(\theta)}}u\Bigr|^2\,\sin\theta\,\dd\theta + 2\gamma\int_0^\pi \Re(u'\ol{u})\cos\theta\,\dd\theta \\
      &\qquad + \int_0^\pi \Bigl(\frac{b^2}{c(\theta)\sin^2\theta}(m-a\sigma\sin^2\theta)^2-\frac{\gamma^2\cot^2\theta}{c(\theta)}\Bigr)|u|^2\,\sin\theta\,\dd\theta.
  \end{align*}
  The first integral is non-negative. The second integral is $\gamma\int_0^\pi (|u|^2)'\cos\theta\,\dd\theta=\gamma[|u|^2\cos\theta]_0^\pi+\gamma\|u\|_{L^2}^2$. Since $m\geq 1$, we have $|u(\theta)|\lesssim|\sin\theta|^m=o(1)$ as $\theta\to 0,\pi$, so the boundary terms vanish. (The fact that the second term is a multiple of the squared $L^2$-norm of $u$ is part of the motivation for completing the square using the $\gamma\cot\theta$ term above.) In the third integral, we insert $\sigma=\frac{a m}{r_0^2+a^2}$ from~\eqref{EqFsigma}; the difference of squares in the integrand then takes on a particularly simple form for $\gamma=b m$: we then obtain
  \begin{align*}
    \la\slDelta_{\sigma,m}u,u\ra_{L^2} &\geq b m + \frac{b^2 m^2}{(r_0^2+a^2)^2}\int_0^\pi \frac{1}{c(\theta)\sin^2\theta}\bigl( (r_0^2+a^2\cos^2\theta)^2 - (r_0^2+a^2)^2\cos^2\theta\bigr)\,\sin\theta\,\dd\theta \\
      &= b m + \frac{b^2 m^2}{(r_0^2+a^2)^2}\int_0^\pi \frac{r_0^4-a^4\cos^2\theta}{1+\lambda a^2\cos^2\theta}|u|^2\,\sin\theta\,\dd\theta.
  \end{align*}
  Since $\lambda>0$ and $a^2<r_C r_e<r_0^2$, the integrand is positive and monotonically decreasing as a function of $\cos^2\theta$, so attains its minimum at $\theta=0,\pi$. Recalling that $1+\lambda a^2=b$, this gives
  \[
    \la\slDelta_{\sigma,m}u,u\ra_{L^2} \geq b m + \frac{b m^2(r_0^2-a^2)}{r_0^2+a^2}.
  \]
  Since $b,m\geq 1$, this is further bounded from below by $1 + \frac{r_0^2-a^2}{r_0^2+a^2} = \frac{2 r_0^2}{r_0^2+a^2}$. This is thus a lower bound for $L$. Corollary~\ref{CorGBound} gives the first bound in~\eqref{EqCeILower}. For the second bound, use $r_0>r_C,r_e$.
\end{proof}

The construction of the function $f$ (or $\alpha=\dot f/f$) for $A(r_e)\geq L$ will, in fact, work under the assumption $A(r_e) > \frac{L+1}{2}$ that is weaker than $A(r_e)>L$ (since $L>1$). We then claim:

\begin{prop}[Case I: constant $\alpha$ works]
\label{PropCeI}
  Suppose that $A(r_e)>\alpha_*:=\frac{L+1}{2}$. Define $f_-$ by $\frac{\dot f_-}{f_-}=\alpha_*$ (i.e., $\frac{f_-'}{f_-}=\frac{\alpha_* Q}{r\mu}$) and $f_-(r_0)=1$. Then $f_-$ satisfies the requirements of Proposition~\usref{PropFCe}.
\end{prop}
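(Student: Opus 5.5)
The plan is to verify the three items of Proposition~\ref{PropFCe} for the explicit choice $\alpha\equiv\alpha_*$, using Lemma~\ref{LemmaFCeEquiv} to turn them into pointwise inequalities. Since $\alpha_*>0$ is constant, Lemma~\ref{LemmaFCeInt} applies directly, and shows that $f_-$ is smooth and positive on $(r_e,r_0]$, that $f_-(r_e)=0$, and that $|f_-'|=o((r-r_e)^{-1})$. This settles item~\eqref{ItFCe2} and the first half of item~\eqref{ItFCe3}.

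For the second half of item~\eqref{ItFCe3}, I will write $(r f_-)'=f_-+r f_-'=f_-\bigl(1+\frac{\alpha_* Q}{\mu}\bigr)$. At $r=r_0$ we have $Q=0$, so this equals $f_-(r_0)=1>0$.

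For item~\eqref{ItFCe1} I treat the two inequalities separately.
\begin{itemize}
\item \emph{The bound $C^{f_-}\le 0$.} On $[r_e,r_0)$, \eqref{EqFCeEquiv1} reduces it to $\alpha_*\le A(r)$. This holds because $A$ is increasing by Lemma~\ref{LemmaFCeAMono} and $A(r_e)>\alpha_*$ by hypothesis. At $r=r_0$ itself I use \eqref{EqFCCur2} directly: there $q(r_0)=0$, so $C^{f_-}(r_0)=-2\mu f_- q'<0$.
\item \emph{The bound $\cD_L f_-\le 0$.} Since $\dot\alpha=0$, \eqref{EqFCeEquiv2} reduces it to
\[
  S_L(r,\alpha_*)=\frac{L^2-1}{4}+F(r)\Bigl(\frac{(3r^2-r_0^2)(L+1)}{2}-r\mu'(r)\Bigr)\ \ge 0,\qquad r\in[r_e,r_0).
\]
The value at $r_0$ then follows by continuity of $\cD_L f_-$. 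The first term is positive by Lemma~\ref{LemmaCeILower}, since $L>1$. At $r=r_e$ we have $F=0$, so strict positivity holds there.
\end{itemize}

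The main obstacle is controlling the $F$-term in $S_L(r,\alpha_*)$. The factor $3r^2-r_0^2$ may be negative, $F$ blows up like $(r_0-r)^{-2}$ near $r_0$, and the subtraction of $r\mu'$ costs further positivity. I will exploit the identity $A=4r^2F+\frac{r\mu'}{Q}$, which gives $F r\mu'=FQ(A-4r^2F)$. This lets me trade $r\mu'$ against $A$ and $F$.

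Because $S_L(r,\alpha_*)$ is a convex quadratic in $L$, I will then minimize it over the admissible range $\ubar L\le L<2A(r_e)-1$. This range is forced by Lemma~\ref{LemmaCeILower} and by the hypothesis $A(r_e)>\alpha_*$.

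The resulting finite family of inequalities, after multiplying through by $Q^2$, should be polynomial in $r,r_0,r_C,r_e,r_c$ and $L$. I would first attempt to verify them as in Lemmas~\ref{LemmaFCeAMono} and~\ref{LemmaFCeMono}. That means normalizing $r_e=1$, passing to the variables $s,x,y,z$ of \eqref{EqFCeAMonosxyz}, parametrizing $L$ affinely in its allowed interval, and checking that the Bernstein coefficients in $s$ (and in the $L$-parameter) are nonnegative.

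If this direct check fails near $r_0$, I would handle a neighbourhood of $r_0$ separately. There the leading term of $S_L(r,\alpha_*)$ is $\frac{\mu(r_0)}{4(r_0-r)^2}\bigl(r_0(L+1)-\mu'(r_0)\bigr)$. I would try to bound it from below using the constraint $L+1<2A(r_e)=\frac{2r_e\mu'(r_e)}{r_0^2-r_e^2}$, together with the lower bound $L\ge\ubar L$.
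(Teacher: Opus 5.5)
You handle items~\eqref{ItFCe2} and~\eqref{ItFCe3} of Proposition~\ref{PropFCe}, and the bound $C^{f_-}\le 0$, correctly and as the paper does. You also correctly reduce $\cD_L f_-\le 0$ to
\[
  4Q^2S_L(r,\alpha_*)=(L^2-1)Q^2+2(L+1)\mu(3r^2-r_0^2)-4r\mu\mu'\ge 0\quad\text{on }[r_e,r_0).
\]
\textbf{Gap.} This inequality is the whole content of the proposition, and you do not prove it. You defer it to a computer check that is not correctly set up:
\begin{itemize}
\item The inequality is not polynomial in $r,r_0,r_C,r_e,r_c,L$. Writing $\mu=\lambda\mu_0$ with $\mu_0=-(r-r_-)(r-r_C)(r-r_e)(r-r_c)$, the three terms scale like $\lambda^0,\lambda^1,\lambda^2$. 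Moreover $\lambda$ is not a rational function of the roots; it only solves $\lambda w+\lambda^2t=1$. So the homogeneity in $\lambda$ used in Lemmas~\ref{LemmaFCeAMono} and~\ref{LemmaFCeMono} is unavailable, and $\mu'$ and $3r^2-r_0^2$ can change sign on $[r_e,r_0)$.
\item Your use of the hypothesis points the wrong way. After substituting $Fr\mu'=FQ(A-4r^2F)$, $A$ enters $S_L$ with the negative coefficient $-FQ$. Hence $A\ge A(r_e)>\alpha_*$ bounds $S_L$ from above, not below.
\item The upper bound $L<2A(r_e)-1$ never enters. Using $\mu\le(r-r_e)(r-r_C)<r^2$ and $Q^2+r^2(3r^2-r_0^2)>0$, one gets $LQ^2+\mu(3r^2-r_0^2)>0$ for $L>1$. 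So your convex quadratic in $L$ is increasing on $[\ubar L,\infty)$, and its minimum is at $\ubar L$.
\end{itemize}
The paper proves $S_L(r,\alpha_*)\ge 0$ for every $L\ge\ubar L$ without using $A(r_e)>\alpha_*$; that hypothesis is needed only for $C^{f_-}\le 0$. This is why the same estimate is reused in Lemma~\ref{LemmaCeIIComp}, where $A(r_e)<L$.

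\textbf{Missing idea.} Two crude bounds eliminate $\lambda$ and $r_c$ before any computer check.
\begin{itemize}
\item First, $r\mu'=-4\lambda r^4+2(1-\lambda a^2)r^2-2\bhm r<2r^2$. Since $\mu>0$, this replaces the term $-4r\mu\mu'$, quadratic in $\mu$, by $-8r^2\mu$. It gives $4Q^2S_L(r,\alpha_*)\ge(L^2-1)Q^2+2\mu B_L$ with $B_L=(3L-1)r^2-(L+1)r_0^2$.
\item This is trivially non-negative where $B_L\ge0$. Where $B_L<0$, one uses $\mu\le(r-r_e)(r-r_C)$ from Corollary~\ref{CorGBound}, which yields the $\lambda$- and $r_c$-free polynomial $\tilde B_L$.
\end{itemize}
Then $\partial_L\tilde B_L>0$ reduces everything to $L=\ubar L$. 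Only at that point does a Bernstein expansion in $s$, with coefficients in $(x,y)$, succeed. Even then it requires the extra split into $x\ge y$ and $y\ge x$, and the outcome is sensitive to the exact form of $\ubar L$.

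Your fallback near $r_0$ needs precisely $\mu'(r_0)<2r_0$, which gives $r_0(L+1)-\mu'(r_0)>r_0(L-1)>0$; the upper bound $L+1<2A(r_e)$ does not help there. In any case, a local argument at $r_0$ plus an unexecuted global check does not establish the inequality on all of $[r_e,r_0)$.
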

\begin{proof}
  Since $\alpha_*>0$ and $Q(r)\geq 0$, with equality at $r=r_0$, part~\eqref{ItFCe2} of Proposition~\ref{PropFCe} follows from Lemma~\ref{LemmaFCeInt}, and part~\eqref{ItFCe3} uses, in addition, the observation $(r f_-)'(r_0)=r_0 f_-'(r_0)+f_-(r_0)=1>0$. The inequality~\eqref{EqFCeEquiv1} follows for $r\in[r_e,r_0]$ from the monotonicity of $A(r)$ (Lemma~\ref{LemmaFCeAMono}), so
  \[
    A(r) \geq A(r_e) > \alpha_*.
  \]

  Finally, we verify~\eqref{EqFCeEquiv2} on $[r_e,r_0]$, i.e.,
  \[
    4 Q^2(r)S_L(r,\alpha_*) = (L^2-1)Q(r)^2 + 2(L+1)\mu(r)(3 r^2-r_0^2) - 4 r\mu(r)\mu'(r) \geq 0.
  \]
  We first use $r\mu'=-4\lambda r^4+2(1-\lambda a^2)r^2-2\bhm r<2 r^2$ to bound the left-hand side from below by
  \begin{equation}
  \label{EqCeIB}
    (L^2-1)Q(r)^2 + 2\mu(r)B_L(r),\quad B_L(r):=(3 L-1)r^2-(L+1)r_0^2.
  \end{equation}
  If $B_L\geq 0$, this is $\geq 0$, as claimed. If $B_L<0$ on the other hand, we use the upper bound $\mu\leq(r-r_e)(r-r_C)$ from Corollary~\ref{CorGBound}; therefore,~\eqref{EqCeIB} is bounded from below by
  \[
    \tilde B_L(r) := (L^2-1)Q(r)^2 + 2(r-r_e)(r-r_C)B_L(r).
  \]
  We must prove that $\tilde B_L\geq 0$. We do this by exploiting monotonicity \emph{in $L$}: the derivative is
  \[
    \frac12\pa_L\tilde B_L = L Q(r)^2 + (r-r_e)(r-r_C)(3 r^2-r_0^2).
  \]
  This is positive when $3 r^2-r_0^2\geq 0$. When $3 r^2-r_0^2\leq 0$, we use $L>1$ and $(r-r_e)(r-r_C)<r^2$ to obtain $\frac12\pa_L\tilde B_L\geq(r_0^2-r^2)^2+r^2(3 r^2-r_0^2)=4 r_0^4\bigl((r/r_0)^4-\frac34(r/r_0)^2+\frac14\bigr)>0$. Thus, $\tilde B_L$ is strictly increasing in $L$ for $L\geq\ubar L=\frac{2 r_0^2}{r_0^2+r_C r_e}>1$ (recalling~\eqref{EqCeILower}). It remains to prove that $\tilde B_{\ubar L}(r)\geq 0$ for $r\in[r_e,r_0]$, which we do using a Bernstein polynomial argument similarly to Lemma~\ref{LemmaFCeAMono}: normalizing to $r_C=1-s<r_e=1\leq r=1+x\leq r_0=1+x+y$, one finds that $(r_0^2+r_C r_e)^2\tilde B_{\ubar L}(r)$ is a cubic polynomial in $s$ that can thus be expressed as $\sum_{j=0}^3 c_j(x,y)b_{j,3}(s)$ where the coefficients $c_j(x,y)$ are themselves polynomials. When $x\geq y$, resp. $y\geq x$, write $x=y+w$, resp.\ $y=x+w$ with $w\geq 0$; then all monomials of $(y,w)\mapsto c_j(y+w,y)$, resp.\ $(x,w)\mapsto c_j(x,x+w)$ have non-negative coefficients.\footnote{This is very sensitive to the precise form of the lower bound $\ubar L$.} See Appendix~\ref{SMath} for the \texttt{Mathematica} code.
\end{proof}

%%%%%%%%%%%%%%%%%%%%%%%%%%%%%%
\subsubsection{Case II: large angular eigenvalues (\texorpdfstring{$A(r_e)<L$}{A(re)<L})}
\label{SssCeII}

When $A(r_e)<L$, we denote by $\wh{r}\in(r_e,r_0)$ the unique zero (cf.\ Lemma~\ref{LemmaFCeMono}) of $\wh{S}_L(r)$ on $(r_e,r_0)$. Recalling $F=\frac{\mu}{Q^2}$ from~\eqref{EqFCeEquivFA}, define
\[
  f_1(r) := \frac{F(r)}{F(\wh{r})},\quad r\in[r_e,\wh{r}] \subsetneq [r_e,r_0].
\]
Then Lemma~\ref{LemmaFCeCeil} and~\eqref{EqFCeCeil} (and the definition of $\wh{r}$) give
\[
  C^{f_1} = 0\ \ \text{on}\ \ [r_e,\wh{r}],\quad \cD_L f_1 < 0\ \ \text{on}\ \ (r_e,\wh{r}).
\]
Not being allowed to use (scalar multiples) of $F$ beyond $r=\wh{r}$ (which is the equality case in~\eqref{EqFCeEquiv1}), we instead continue $f_1$ by having it solve the equality case in~\eqref{EqFCeEquiv2}:

\begin{lemma}[Continuation of $f_1$]
\label{LemmaCeIICont}
  Define $f_2$ on $[\wh{r},r_0]$ to be the solution of $\cD_L f_2=0$, $f_2(\wh{r})=1$, $f_2'(\wh{r})=\frac{F'(\wh{r})}{F(\wh{r})}$. Then the function $f_-$, defined by $f_1$ on $[r_e,\wh{r}]$ and $f_2$ on $[\wh{r},r_0]$, is $\cC^2$ (and piecewise $\CI$).
\end{lemma}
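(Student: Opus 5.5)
The plan is to check three things: each piece is smooth on its own interval, $f_2$ is well defined, and the first three derivatives-of-order-at-most-two match at $\wh{r}$. Throughout we write $D_L$ for the operator $\cD_L$ of~\eqref{EqFCCur2}, $\mu$ for $\mu(r)$, and $q_2$ for the constant with $q'=2q_2 r$.

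\emph{Step 1: the operator $D_L$.} Using $q' = 2q_2 r$, we have
\[
  D_L f = L q f' + \frac12\bigl(\mu(f q')'\bigr)' = L q f' + q_2\bigl(\mu (r f)'\bigr)'.
\]
This is a linear second order ODE whose principal part is $q_2 r\mu f''$. The lower order coefficients are polynomials in $r$. On $[\wh{r},r_0]\subset(r_e,r_c)$ we have $\mu>0$, $r>0$ and $q_2>0$, so the ODE is regular on a neighborhood of this interval; it has no singular point there, in particular not at $r_0$. Standard linear ODE theory then gives a unique solution $f_2$ with the prescribed $f_2(\wh{r})$ and $f_2'(\wh{r})$. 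Since the coefficients are real analytic, $f_2\in\CI([\wh{r},r_0])$ (indeed analytic) and real-valued.

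\emph{Step 2: smoothness of $f_1$.} By definition $f_1=F/F(\wh{r})$ with $F=\mu/Q^2$. Because $Q=r_0^2-r^2>0$ on $[r_e,\wh{r}]$ (recall $\wh{r}<r_0$), $f_1$ is smooth on $[r_e,\wh{r}]$. It is also positive on $(r_e,\wh{r}]$ with $f_1(\wh{r})=1$.

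\emph{Step 3: matching at $\wh{r}$.} By construction $f_1(\wh{r})=1=f_2(\wh{r})$ and $f_1'(\wh{r})=F'(\wh{r})/F(\wh{r})=f_2'(\wh{r})$.

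For the second derivatives we use the defining property of $\wh{r}$, namely $\wh{S}_L(\wh{r})=0$. By~\eqref{EqFCeCeil} this gives $D_L F(\wh{r})=0$, hence $D_L f_1(\wh{r})=0$. We also have $D_L f_2(\wh{r})=0$. Evaluating the explicit form of $D_L$ at $r=\wh{r}$ expresses $q_2\wh{r}\mu(\wh{r})f''(\wh{r})$ as a fixed linear combination of $f(\wh{r})$ and $f'(\wh{r})$. That combination is the same for $f_1$ and $f_2$, and $q_2\wh{r}\mu(\wh{r})\neq 0$. Therefore $f_1''(\wh{r})=f_2''(\wh{r})$.

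Combining Steps 1–3, the glued function $f_-$ is $\cC^2$ on $(r_e,r_0]$, continuous on $[r_e,r_0]$, and $\CI$ on each of the two pieces.

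The only genuinely non-formal point is the matching of second derivatives. It depends on the continuation being started exactly at the zero $\wh{r}$ of $\wh{S}_L$, which exists and is unique by Lemma~\ref{LemmaFCeMono} in the case $A(r_e)<L$. Everything else follows from the non-degeneracy $\mu>0$ on $[\wh{r},r_0]$.
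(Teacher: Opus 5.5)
Your proposal is correct and follows essentially the same argument as the paper. Values and first derivatives match by construction, and the second derivatives match because $\cD_L f_1(\wh{r})=0=\cD_L f_2(\wh{r})$ (the former equivalent to $\wh{S}_L(\wh{r})=0$ via~\eqref{EqFCeCeil}) while the $f''$-coefficient $q_2 r\mu$ is nonzero at $\wh{r}$. Your added observation, that the ODE is regular on $[\wh{r},r_0]$ so $f_2$ exists and is smooth up to $r_0$, is a detail the paper leaves implicit.
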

\begin{proof}
  The initial conditions of $f_1$ and $f_2$ at $r=\wh{r}$ match. Furthermore, the coefficient of $f''$ in $\cD_L f$ (see~\eqref{EqFCCur2}) is $q_2\mu r\neq 0$ at $r=\wh{r}$, and since $\cD_L f_1(\wh{r})=0$ (this being equivalent to $\wh{S}_L(\wh{r})=0$) and $\cD_L f_2=0$, also the second derivatives of $f_1$ and $f_2$ match.
\end{proof}

In the remainder of this section, we will prove:
\begin{prop}[Case II: a piecewise construction]
\label{PropCeII}
  Suppose that $A(r_e)<L$. Then the function $f_-$ defined in Lemma~\usref{LemmaCeIICont} satisfies the requirements of Proposition~\usref{PropFCe}.
\end{prop}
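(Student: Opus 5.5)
The proof splits at $\wh{r}$. On $[r_e,\wh{r}]$ the properties come directly from the earlier lemmas. On $[\wh{r},r_0]$ we control the normalized logarithmic derivative $\alpha=\dot f_-/f_-$ of $f_2$ from above and below by comparison for the first-order Riccati equation $\dot\alpha=S_L(r,\alpha)$. That equation is equivalent to $\cD_L f_2=0$ by~\eqref{EqFCeEquiv2}, as long as $f_2>0$. Throughout, $f_-$ is $\cC^2$ by Lemma~\ref{LemmaCeIICont}.

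\emph{The interval $[r_e,\wh{r}]$.} Here $\alpha=A$ by Lemma~\ref{LemmaFCeCeil}, so $C^{f_-}=0$. By~\eqref{EqFCeCeil}, $\cD_L f_-=F(\wh r)^{-1}\frac{q_2Q^2}{r\mu}FA\,\wh S_L$. This is $\leq 0$ because $\wh S_L\leq 0$ on $[r_e,\wh r]$: $\wh S_L$ is increasing with unique zero $\wh r$ (Lemma~\ref{LemmaFCeMono}). Since $\alpha(r_e)=A(r_e)>0$, Lemma~\ref{LemmaFCeInt} gives $f_-(r_e)=0$ and $|f_-'|=o((r-r_e)^{-1})$, which is part~\eqref{ItFCe2}. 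Positivity of $f_1=F/F(\wh r)$ on $(r_e,\wh r]$ is clear because $\mu,Q>0$ there.

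\emph{Upper barrier on $[\wh r,r_0)$.} On any subinterval where $f_2>0$, the function $\alpha$ solves $\dot\alpha=S_L(r,\alpha)$ with $\alpha(\wh r)=A(\wh r)$, and $\cD_L f_2=0$ holds by construction. For $r>\wh r$, Lemma~\ref{LemmaFCeMono} gives $\dot A-S_L(r,A)=A\wh S_L>0$, using $A>0$ from Lemma~\ref{LemmaFCeAMono}. So $A$ is a strict supersolution touching $\alpha$ at $\wh r$. The standard ODE comparison argument then gives $\alpha\leq A$, i.e.\ $C^{f_-}\leq 0$ by~\eqref{EqFCeEquiv1}. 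The argument at the touching point uses that $\dot\alpha=\dot A$ there while $\ddot\alpha<\ddot A$, or equivalently one perturbs $A$ slightly.

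\emph{Lower barrier.} It remains to show that $f_2$ stays positive up to $r_0$ (i.e.\ $\alpha$ does not escape to $-\infty$) and that $(rf_2)'(r_0)\geq 0$. Both would follow from the single bound $(rf_2)'\geq 0$ on $[\wh r,r_0]$. Indeed, $g:=rf_2$ is then nondecreasing with $g(\wh r)>0$, so $f_2>0$. In terms of $\alpha=\frac{r\mu}{Q}\frac{f'}{f}$, the condition $(rf)'\geq 0$ reads
\[
  \alpha\geq\beta:=-\frac{\mu}{Q}=-FQ .
\]
At $r=\wh r$ we have $\alpha=A>0>\beta$. So I would show that $\beta$ is a subsolution, $\dot\beta\leq S_L(r,\beta)$ on $(r_e,r_0)$, and conclude $\alpha\geq\beta$ by comparison. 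Note that $\beta\to-\infty$ as $r\to r_0$, which is harmless since only $\alpha\geq\beta$ is needed. Explicitly, the subsolution inequality is
\[
  \frac{r\mu}{Q}\Bigl(-\frac{\mu}{Q}\Bigr)'\leq -\frac{\mu}{Q}\Bigl(L+\frac{\mu}{Q}\Bigr)-\frac{\mu}{Q^2}\Bigl((3r^2-r_0^2)\frac{\mu}{Q}+r\mu'\Bigr).
\]
After multiplying by $Q^3/\mu$, this becomes a polynomial inequality in $r,r_0$ and the roots, with an extra term $LQ^2\geq 0$ on the favorable side. Using $L>A(r_e)>0$, or simply $L>1$ from Lemma~\ref{LemmaCeILower}, I would drop or bound the $L$-term and verify the remainder with the Bernstein-expansion method of Lemmas~\ref{LemmaFCeAMono} and~\ref{LemmaFCeMono}, in the normalized variables~\eqref{EqFCeAMonosxyz}.

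\emph{Conclusion and main obstacle.} Continuity of $g'=(rf_2)'$ at $r_0$ then gives $(rf_2)'(r_0)\geq 0$, which completes part~\eqref{ItFCe3}, and all requirements of Proposition~\ref{PropFCe} hold. I expect the lower-barrier step to be the main obstacle. If $\beta=-\mu/Q$ fails to be a subsolution, I would try barriers of the form $\beta_\kappa=-\kappa\mu/Q$ with $\kappa\in(0,1]$, or work directly with the second-order equation $(\mu g')'=LQ(g/r)'$ for $g=rf_2$, which follows from $\cD_L f_2=0$. With that equation I would argue by maximum principle that $g'$ cannot reach a first zero before $r_0$.
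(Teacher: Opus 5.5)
Your treatment of $[r_e,\wh r]$ is fine. Your upper barrier is also fine: the Riccati comparison with the strict supersolution $A$ is equivalent to the paper's Wronskian argument $W[F,f_2]'=-f_2\sD_L F\leq 0$. Your target $(rf_2)'\geq 0$ on $[\wh r,r_0]$ is in fact true.

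The gap is the lower barrier: $\beta=-\mu/Q$ is \emph{not} a subsolution. It is the normalized logarithmic derivative of $f=r_0/r$, so by~\eqref{EqFCeEquiv2} the inequality $\dot\beta\leq S_L(r,\beta)$ is equivalent to $\cD_L(r_0/r)\leq 0$. But, exactly as in the proof of Lemma~\ref{LemmaFCc}, $(fq')'=0$ for this $f$, and $r^2\cD_L(r_0/r)=-Lr_0q=Lr_0q_2Q>0$ on $[r_e,r_0)$. Concretely, when you multiply your displayed inequality by $Q^3/\mu$, all $\mu$-terms cancel identically and what remains is $0\leq -LQ^2$. The $L$-term is the entire content and sits on the unfavorable side, so there is no remainder to verify by Bernstein expansion. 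Thus $\beta$ is a strict supersolution, with $\dot\beta-S_L(r,\beta)=L\mu/Q$. A solution touching it from above crosses below, so comparison yields nothing.

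Your fallbacks fail for the same reason. First, $\beta_\kappa$ comes from $f=r^{-\kappa}$, and $\cD_L r^{-\kappa}=q_2r^{-\kappa-1}\bigl(\kappa LQ+(1-\kappa)(r\mu'-\kappa\mu)\bigr)$; the $L$-term again has the bad sign and dominates near $\wh r$ for large $L$. Second, for $g=rf_2$, at a first zero $r_1$ of $g'$ with $g>0$ the equation gives $\mu g''(r_1)=-LQg/r_1^2<0$. So the zeroth-order term has the wrong sign for a maximum principle, and a purely qualitative argument cannot work.

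The missing idea is to raise the barrier by the constant $\alpha_\flat=\frac{L-1}{2}$. That is, compare $f_2$ with $h=\frac{r_0}{r}h_\flat$, where $\dot h_\flat/h_\flat=\alpha_\flat$ (Lemma~\ref{LemmaCeIIComp}); its normalized logarithmic derivative is $\alpha_\flat-\mu/Q$. The cross term $-2\alpha_\flat\beta$ in $S_L$ absorbs most of the defect $L\mu/Q$. One gets $-4r\mu q_2^{-1}\cD_L h/h=(L^2-1)Q^2+2\mu\bigl((3L-1)r^2-(L+1)r_0^2\bigr)$, which is the expression~\eqref{EqCeIB} already shown to be non-negative in Case~I, and still $(rh)'(r_0)=0$.

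The price is that the initial ordering at $\wh r$ is no longer free. You now need $A(\wh r)>\alpha_\flat-\mu(\wh r)/Q(\wh r)$, where $\alpha_\flat$ grows with $L$. The paper obtains this from $\wh S_L(\wh r)=0$, i.e.\ $L=A(\wh r)+B(\wh r)$, combined with the separate polynomial inequality $B<A+\frac{2\mu}{Q}+1$ of Lemma~\ref{LemmaCeII2Sign}, which is a further Bernstein-expansion check. Then $W[h,f_2]$ is increasing and positive on $[\wh r,r_0]$, so $f_2/h$ increases. This gives $f_2>0$, which your upper-barrier step also needs, and $(rf_2)'(r_0)>0$.
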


Since $\cD_L f_-\leq 0$ on $[r_e,r_0]$ and $C^{f_1}\leq 0$ on $[r_e,\wh{r}]$ by construction, we only need to verify
\begin{enumerate}[leftmargin=0.7in]
\myitem{ItCeII1}{\ref*{PropCeII}.A} the upper bound $C^{f_2}\leq 0$ on $[\wh{r},r_0]$;
\myitem{ItCeII2}{\ref*{PropCeII}.B} the signs in Proposition~\ref{PropFCe}\eqref{ItFCe3}.
\end{enumerate}
We will do this using comparison arguments for the operator
\[
  (q_2 r\mu)^{-1}\cD_L f = f'' + \Bigl(\frac{\mu'}{\mu}+\frac{2}{r}-L\frac{Q}{r\mu}\Bigr)f' + \frac{\mu'}{r\mu} f
\]
from~\eqref{EqFCCur2} (recalling that $q=-q_2 Q$, $Q=r_0^2-r^2$). We can simplify the form of this by recalling $\tau'=\frac{Q}{r\mu}$ and noting that the $f'$-coefficient is the logarithmic $r$-derivative of $\fp:=\mu r^2 e^{-L\tau}>0$, so
\[
  (q_2 r\mu)^{-1}\cD_L f = \fp^{-1}(\fp f')' + \frac{\mu'}{r\mu}f = \fp^{-1}\sD_L f,\quad \sD_L f:=(\fp f')'+r\mu'e^{-L\tau}f.
\]
The advantage of working with $\sD_L$ is that we have the simple Wronskian identity
\begin{equation}
\label{EqCeIIWronski}
  W[f,h]:=\fp(f h'-f'h) \implies W[f,h]' = f\sD_L h-h\sD_L f.
\end{equation}

\begin{proof}[{Proof of~\eqref{ItCeII1}, assuming $f_2>0$ on {$[\wh{r},r_0]$}}]
  We compare $f_2$ with $F$ from~\eqref{EqFCeEquivFA} on $[\wh{r},r_0)$. At $r=\wh{r}$ we have $W[F,f_2]=W[f_1,f_2]=0$ since $f_1$ and $f_2$ agree to second order there. Moreover, since $\cD_L f_2=0$ (so $\sD_L f_2=0$), the identity~\eqref{EqCeIIWronski} gives
  \[
    W[F,f_2]' = -f_2\sD_L F \leq 0
  \]
  since the sign of $\sD_L F$ equals that of $\wh{S}_L$ and thus is $\geq 0$ on $[\wh{r},r_0)$. By definition of $W[F,f_2]$ and using the assumption that $f_2>0$ on $[\wh{r},r_0]$, this implies that $\frac{f_2'}{f_2} \leq \frac{F'}{F}$, i.e., the normalized logarithmic derivative of $f_2$ is $\leq A=\frac{\dot F}{F}$. This proves~\eqref{EqFCeEquiv1} for $r<r_0$, and by continuity also at $r_0$.
\end{proof}

For the proof of~\eqref{ItCeII2}, we use a different comparison function:
\begin{lemma}[A comparison function]
\label{LemmaCeIIComp}
  Set $\alpha_\flat:=\frac{L-1}{2}<\alpha_*:=\frac{L+1}{2}$. Define $h_\flat(r)>0$ on $[\wh{r},r_0]$ by
  \[
    h_\flat(\wh{r})=\frac{\wh{r}}{r_0},\quad \frac{\dot h_\flat}{h_\flat}=\alpha_\flat
  \]
  (or equivalently $\frac{h_\flat'}{h_\flat}=\frac{\alpha_\flat Q}{r\mu}$), and set $h(r):=\frac{r_0}{r}h_\flat(r)$. Then
  \[
    h(\wh{r})=1,\quad
    (r h)'(r_0)=0,\quad
    \sD_L h\leq 0.
  \]
\end{lemma}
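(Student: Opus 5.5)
The first two identities are immediate from the definitions; the third reduces, after one computation, to the polynomial inequality already established in the proof of Proposition~\ref{PropCeI}.

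\emph{Well-definedness and the two identities.} Since $r_e<\wh{r}$, the coefficient $\frac{\alpha_\flat Q}{r\mu}$ is smooth on $[\wh{r},r_0]$, because $\mu>0$ there. Hence $h_\flat$ is a well-defined positive function on this interval. Then
\[
  h(\wh{r})=\frac{r_0}{\wh{r}}\cdot\frac{\wh{r}}{r_0}=1,\qquad (rh)'(r_0)=r_0h_\flat'(r_0)=r_0\,\frac{\alpha_\flat Q(r_0)}{r_0\mu(r_0)}\,h_\flat(r_0)=0,
\]
using $Q(r_0)=0$.

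\emph{Reduction of the sign condition.} Since $\sD_L h=\fp\,(q_2r\mu)^{-1}\cD_L h$ and $\fp,q_2,r,\mu>0$ on $[\wh{r},r_0]$, it suffices to show $\cD_L h\leq 0$. I compute $\cD_L h=Lqh'+\frac12\bigl(\mu(hq')'\bigr)'$ directly from~\eqref{EqFCCur2}, using the following ingredients:
\begin{itemize}
\item $q=-q_2Q$ and $q'=2q_2r$, so $hq'=2q_2r_0h_\flat$;
\item $\mu h_\flat'=\alpha_\flat Qh_\flat/r$;
\item $h'=r_0\bigl(h_\flat'/r-h_\flat/r^2\bigr)$;
\item $Q'=-2r$ and $2r^2+Q=r^2+r_0^2$.
\end{itemize}
Multiplying by $\frac{r^2}{q_2r_0h_\flat}$, a routine computation gives
\[
  \frac{r^2}{q_2r_0h_\flat}\cD_L h=\frac{Q^2}{\mu}\,\alpha_\flat(\alpha_\flat-L)+LQ-\alpha_\flat(r^2+r_0^2).
\]
With $\alpha_\flat=\frac{L-1}{2}$ we have $\alpha_\flat(\alpha_\flat-L)=-\frac{L^2-1}{4}$. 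Moreover,
\[
  LQ-\alpha_\flat(r^2+r_0^2)=-\tfrac12B_L(r),
\]
with $B_L$ as in~\eqref{EqCeIB}. Therefore
\[
  \cD_L h\leq 0\iff (L^2-1)Q^2+2\mu B_L\geq 0.
\]

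\emph{Conclusion from Case~I.} This is exactly the expression~\eqref{EqCeIB} shown to be non-negative in the proof of Proposition~\ref{PropCeI}. That argument has two branches:
\begin{itemize}
\item if $B_L\geq 0$, the expression is trivially non-negative;
\item if $B_L<0$, Corollary~\ref{CorGBound} bounds it below by $\tilde B_L$, and $\tilde B_L\geq 0$ follows from monotonicity in $L$ together with the Bernstein check at $L=\ubar L$.
\end{itemize}
Neither branch used the Case~I hypothesis $A(r_e)>\alpha_*$. Both used only $L\geq\ubar L$ (Lemma~\ref{LemmaCeILower}) and $r\in[r_e,r_0]$. Hence the inequality holds on $[\wh{r},r_0]\subset[r_e,r_0]$, and $\sD_L h\leq 0$ follows.

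The only step that needs care is the algebraic simplification: one must see that the quantity in question collapses to the same quantity as in Case~I. The specific choice $\alpha_\flat=\frac{L-1}{2}$, together with the prefactor $\frac{r_0}{r}$, is what produces the product $-\frac{L^2-1}{4}$ and the term $B_L$. Once that identity is in hand, no new positivity argument is needed.
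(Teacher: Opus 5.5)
Your proposal is correct and matches the paper's proof. The two normalizations follow from the definitions and $Q(r_0)=0$. The sign condition reduces, through your identity (equivalent to the paper's $-4r\mu q_2^{-1}\cD_L h/h=(L^2-1)Q^2+2\mu B_L$), to the expression \eqref{EqCeIB}, which was already shown to be non-negative in Case~I without using $A(r_e)>\alpha_*$. Your explicit algebra simply fills in the ``direct computation'' the paper leaves to the reader.
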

\begin{proof}
  The normalization at $r=\wh{r}$ follows from the definition. Next, $h_\flat'(r_0)=0$ since $Q(r_0)=0$, and therefore $(r h)'=r_0 h_\flat'(r)$ vanishes at $r=r_0$ indeed. Finally, since $\sD_L h=\frac{\fp}{q_2 r\mu}\cD_L h$, it suffices to note that, by a direct computation, we have
  \[
    -4 r\mu q_2^{-1}\frac{\cD_L h}{h} = (L^2-1)Q^2 + 2\mu\bigl((3 L-1)r^2-(L+1)r_0^2\bigr);
  \]
  this is precisely the expression~\eqref{EqCeIB} whose positivity we verified there. (Note that the positivity proof did not use $A(r_e)>\alpha_*$, and thus is valid also in the present setting.)
\end{proof}

\begin{proof}[Proof of~\eqref{ItCeII2}]
  We have $h(\wh{r})=f_2(\wh{r})$; and the Wronskian $W[h,f_2]$ satisfies
  \[
    W[h,f_2]' = h\sD_L f_2-f_2\sD_L h = -f_2\sD_L h\geq 0
  \]
  by Lemma~\ref{LemmaCeIIComp} as long as $f_2>0$ (which is certainly true at least for $r\geq\wh{r}$ close to $\wh{r}$). We verify in Lemma~\ref{LemmaCeII2Sign} below that at $r=\wh{r}$, the quantity
  \begin{equation}
  \label{EqCeIICompW}
    \frac{1}{\fp h f_2}W[h,f_2] = \frac{f_2'}{f_2}-\frac{h'}{h} = \frac{Q}{r\mu} \bigl( A+\frac{\mu}{Q} -\alpha_\flat \bigr)
  \end{equation}
  (where we use that $\frac{f'_2}{f_2}=\frac{Q}{r\mu}\frac{\dot F}{F}=\frac{Q}{r\mu}A$ at $r=\wh{r}$) is strictly positive. Granted this, we deduce that $W[h,f_2]>0$ throughout $[\wh{r},r_0]$, so
  \[
    \Bigl(\frac{f_2}{h}\Bigr)' = \frac{W[h,f_2]}{\fp h^2} > 0.
  \]
  This implies that $f_2$ cannot have a zero before $h$ does; since $h$ is strictly positive on $[\wh{r},r_0]$, we get
  \[
    f_2 > 0\ \ \text{on}\ \ [\wh{r},r_0].
  \]
  Finally, at $r=r_0$, we use $r_0 h'(r_0)=-h(r_0)$ to deduce that
  \[
    0 < \frac{r_0 W[h,f_2](r_0)}{\fp} = h(r_0)\bigl(r_0 f'_2(r_0)+f_2(r_0)\bigr) = h(r_0) (r f_2)'|_{r=r_0}.
  \]
  This finishes the proof of Proposition~\ref{PropCeII}.
\end{proof}

To complete the argument, we need to prove the positivity of~\eqref{EqCeIICompW} at $r=\wh{r}$. This uses:

\begin{lemma}[An inequality]
\label{LemmaCeII2Sign}
  Recalling~\eqref{EqFCeCeil} and~\eqref{EqFCeEquivSL}, write $\wh{S}_L=-L+A+B$, where $A(r)$ was defined in~\eqref{EqFCeEquivFA} and
  \[
    B(r) := \frac{r\mu(r)A'(r)}{Q(r)A(r)} - F(r)(3 r^2-r_0^2) + \frac{F(r)}{A(r)}r\mu'(r)
  \]
  where we recall $F=\frac{\mu}{Q^2}$. Then for all $r\in(r_e,r_0)$, we have
  \begin{equation}
  \label{EqCeII2Sign}
    B < A + \frac{2\mu}{Q} + 1.
  \end{equation}
\end{lemma}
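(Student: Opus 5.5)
By the computation in the proof of Lemma~\ref{LemmaFCeMono}, the inequality~\eqref{EqCeII2Sign} reduces to a single polynomial inequality, which I would then check with the same Bernstein-expansion technique used in Lemmas~\ref{LemmaFCeAMono} and~\ref{LemmaFCeMono}.

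\emph{Step 1: rewrite $B$.} Write $A=\frac{rA_0}{Q^2}$ with $A_0=4r\mu+Q\mu'$. The expansion in the proof of Lemma~\ref{LemmaFCeMono} gives $\wh S_L=-L+\frac{N}{Q^2A_0}$, where
\[
N=2r_0^2\mu A_0+rA_0^2+r\mu A_0'Q+\mu\mu'Q^2 .
\]
Since $\wh S_L=-L+A+B$ by definition, this yields
\[
B=\frac{N}{Q^2A_0}-\frac{rA_0}{Q^2}.
\]
By Lemma~\ref{LemmaFCeAMono}, $A$ is increasing with $A(r_e)>0$, so $A>0$ on $(r_e,r_0)$. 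Hence $A_0>0$ and $Q>0$ there, and multiplying by $Q^2A_0$ preserves the direction of the inequality. Thus~\eqref{EqCeII2Sign} is equivalent to $P>0$ on $(r_e,r_0)$, where
\[
P:=2rA_0^2+2\mu QA_0+Q^2A_0-N .
\]

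\emph{Step 2: simplify $P$.} Substituting $N$ and using $Q-r_0^2=-r^2$, so that $2\mu QA_0-2r_0^2\mu A_0=-2r^2\mu A_0$, gives
\[
P=rA_0^2-2r^2\mu A_0-r\mu QA_0'-\mu\mu'Q^2+Q^2A_0 .
\]

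\emph{Step 3: reduce to a polynomial check.} Normalize as in~\eqref{EqGNormalize} and~\eqref{EqFCeAMonosxyz}, and write $\mu=\lambda\tilde\mu$, with $\tilde\mu$ the polynomial in~\eqref{EqFCeAMonoMu} and $Q$ as in~\eqref{EqFCeAMonoQ}. Then $A_0=\lambda\tilde A_0$, and $P$ splits by homogeneity into
\[
P=\lambda^2P_2+\lambda P_1,\qquad P_1=Q^2\tilde A_0>0 ,
\]
where $P_2$ collects the four terms quadratic in $\mu$. The first attempt is to show $P_2\geq 0$ outright. For this I would expand $P_2$ as a polynomial in $s,x,y,z$, using again that $'=\pa_x-\pa_y$. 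I would then write its Bernstein expansion in $s\in[0,1]$ and check that every coefficient $c_j(x,y,z)$ has only non-negative monomial coefficients, as in Appendix~\ref{SMath}. Strictness then follows from $P_1>0$.

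\emph{Main obstacle.} I expect the difficulty to be that $P_2$ alone has some negative Bernstein coefficients, because the term $-\mu\mu'Q^2$ (recall $\mu'$ can be positive) and the term $-r\mu QA_0'$ compete with the positive term $rA_0^2$. In that case the plan is to exploit the $\lambda$-dependence rather than discard $P_1$. Corollary~\ref{CorGParam} determines $\lambda$ from the roots: $\lambda w\leq 1$ by~\eqref{EqGParamExp}, with $\lambda=1/w$ when $r_C=0$. Since $P/\lambda=\lambda P_2+P_1$ is affine in $\lambda$, it suffices to check it at the worst admissible endpoint. If $P_2<0$ somewhere, the worst case is the largest $\lambda$, so I would replace $\lambda$ by its upper bound $1/w$, clear the denominator $w$, and run the Bernstein positivity check on $P_2+wP_1$.

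If this check still fails, I would split into the cases $x\geq y$ and $y\geq x$ as in the proof of Proposition~\ref{PropCeI}, substituting $x=y+w$, respectively $y=x+w$, before checking signs of the monomial coefficients.
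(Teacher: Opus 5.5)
Your proposal is correct and is essentially the paper's proof. Your $P$ coincides with the paper's left-hand side, since $\mu QA_0-(r_0^2+r^2)\mu A_0=-2r^2\mu A_0$, and your $P_1,P_2$ are the paper's $E_0,E_1$. Your fallback check of $P_2+wP_1$ is exactly the Bernstein verification of $wE_0+E_1$ that the paper performs directly, without first trying $P_2\geq 0$ and without splitting into $x\geq y$ and $y\geq x$.

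One detail to make explicit: the Bernstein check only gives $P_2+wP_1\geq 0$. Strictness in the fallback therefore comes from $\lambda w=1-\lambda a^2<1$, which holds because $a\neq 0$ in the superradiant range, together with $P_1>0$.
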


At $r=\wh{r}$, we have $\wh{S}_L=0$ and thus $L=A+B<2 A+2\frac{\mu}{Q}+1$, so $A(\wh{r})>\alpha_\flat-\frac{\mu(\wh{r})}{Q(\wh{r})}$, as desired.

\begin{proof}[Proof of Lemma~\usref{LemmaCeII2Sign}]
  As in the proof of Lemma~\ref{LemmaFCeMono}, introduce $A_0(r)=4 r\mu+Q\mu'$, so $A=\frac{r A_0}{Q^2}$. Inserting this into~\eqref{EqCeII2Sign} and multiplying by $A_0 Q^2$ (and recalling $A_0=\frac{A Q^2}{r}>0$), one finds that~\eqref{EqCeII2Sign} is equivalent to
  \[
    A_0 Q^2 + r A_0^2 + \mu A_0 Q - (r_0^2+r^2)A_0\mu - r\mu A_0' Q - Q^2\mu\mu' > 0.
  \]
  The left-hand side is not homogeneous in the quartic coefficient $\lambda$ of $\mu=-\lambda(r+r_C+r_e+r_c)(r-r_C)(r-r_e)(r-r_c)$; rather, it is of the form
  \[
    \lambda E_0 + \lambda^2 E_1 = \lambda(E_0+\lambda E_1),
  \]
  where $E_0$ comes from the term $A_0 Q^2=r^{-1}Q^4 A$; thus $E_0>0$ on $(r_e,r_0)$ by $A(r_e)>0$ and the monotonicity of $A$ (see Lemma~\ref{LemmaFCeAMono}). Furthermore, since $\lambda w=1-\lambda a^2<1$ in the notation of~\eqref{EqGParamExp} (and recalling that we are considering $a\neq 0$) where $w=r_C^2+r_e^2+r_c^2+r_C r_e+r_C r_c+r_e r_c$, we have $\lambda^{-1}(E_0+\lambda E_1)>w E_0+E_1$, and thus it suffices to prove $w E_0+E_1\geq 0$. This can be done using a Bernstein expansion: $w E_0+E_1$ has degree $4$ in $s$ (in the variables~\eqref{EqFCeAMonosxyz}), and the coefficients of the monomials in $x,y,z$ of each Bernstein coefficient are non-negative. See Appendix~\ref{SMath}.
\end{proof}

As shown in~\S\ref{SsFCe}, this completes the proof of Theorem~\ref{ThmNNo}.

%%%%%%%%%%%%%%%%%%%%%%%%%%%%%%%%%%%%%%%%%%%%%%%%%%%%%%%%%%%%%%%%%%%%%%
\section{Zero modes}
\label{S0}

At zero frequency, there do exist mode solutions: for $\sigma=0$ and $m=0$, and for the angular eigenvalue $L=0$ of $\slDelta_{0,0}$ (with constant eigenfunction), every constant solves~\eqref{EqNODE}. (Without separation of variables, this simply amounts to $\Box_g 1=0$.) These are all:

\begin{lemma}[Zero modes]
\label{Lemma0No}
  For $\sigma=0$, the only mode solutions at zero energy are constants.
\end{lemma}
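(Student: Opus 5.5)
We work directly with the stationary equation $\wh{\Box_g}(0)u=0$, where $u\in\CI(X)$ is smooth across both horizons. At $\sigma=0$ the coordinate change~\eqref{EqICoords} only affects $\phi$, since $\phi_*=\phi-\Phi(r)$ and $t_*=t-T(r)$ with no $\sigma$-dependence.

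First separate in $\phi_*$. Decompose $u=\sum_m e^{im\phi_*}u_m(r,\theta)$; it suffices to treat each $m$. In Boyer--Lindquist coordinates the $m$-th mode is $e^{im\phi}\,e^{-im\Phi(r)}u_m$. After projecting onto an eigenfunction of $\slDelta_{0,m}$ with eigenvalue $L$, the radial function $R$ solves~\eqref{EqNODE} with $q\equiv -bam$. The outgoing condition is~\eqref{EqNBC} with exponents $\pm i\,bam/\mu'(r_{e/c})$.

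\emph{Case $am\neq0$.} Here $q(r_e)=q(r_c)=-bam\neq 0$ have the same sign. The Wronskian argument from the proof of Theorem~\ref{ThmNNo} in the non-superradiant range applies verbatim: $\sigma=0$ lies outside $(\frac{am}{r_c^2+a^2},\frac{am}{r_e^2+a^2})$ when $am>0$, and by the symmetry $(a,m)\mapsto(-a,-m)$ we may assume $am\ge 0$. It gives $B_e=B_c=0$, and the regular-singular argument following~\eqref{EqNNoRegSing} then gives $R=0$.

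\emph{Case $am=0$.} Now $q\equiv0$, and the outgoing condition just says that $R$ is smooth on $[r_e,r_c]$. Multiply $-(\mu R')'+LR=0$ by $\ol R$ and integrate over $[r_e,r_c]$. The boundary terms vanish because $\mu(r_e)=\mu(r_c)=0$ and $R$ is smooth, so
\[
\int_{r_e}^{r_c}\mu|R'|^2\,\dd r+L\int_{r_e}^{r_c}|R|^2\,\dd r=0 .
\]
Next we need $L\ge0$. At $\sigma=0$ the operator $\slDelta_{0,m}$ is manifestly nonnegative: both terms in~\eqref{EqNAngOp} are nonnegative with $\sigma=0$. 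If $L>0$, the identity forces $R=0$. If $L=0$, then $R$ is constant, and also $\slDelta_{0,m}\cS=0$. Integrating by parts gives $\cS'=0$ and $m\cS=0$; for $m\ne0$ (which here means $a=0$) $\cS$ vanishes, so $m=0$ and $\cS$ is constant. Hence the mode solution is constant.

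Finally, undo the separation. Both $\slDelta_{0,m}$ and the radial operator are symmetric, so completeness of the eigenfunction expansion in $L^2_m(\Sph^2)$ lets us conclude $u$ is constant. We use that $u$ smooth implies each coefficient $R$ is smooth on $[r_e,r_c]$, or satisfies~\eqref{EqNBC}, which follows from projecting the smooth $u$.

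\emph{Generalized modes.} Theorem~\ref{ThmI} also concerns generalized zero modes, although the lemma as stated only asks about modes. If needed, write $\sum_j t_*^j u_j$ with top coefficient $u_d$. Then $u_d$ is a mode, hence constant, and $u_{d-1}$ solves $\wh{\Box_g}(0)u_{d-1}=-[\Box_g,t_*]u_d\cdot(\text{const})$, with the right-hand side computable from~\eqref{EqICoords}. Pairing with the constant (the cokernel element) should give a contradiction unless $d=0$.

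The main obstacle is the $am\neq 0$ case: we must check that the Wronskian boundary limits still read $\mp q|B|^2$, which needs the indicial roots to be nondegenerate. They are, since $q(r_{e/c})\neq0$.
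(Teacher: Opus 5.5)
Your proof is correct and follows essentially the same route as the paper: you reduce to the radial ODE, use the Wronskian argument when $am\neq 0$ (where $q\equiv -bam$ is a nonzero constant), and use the energy identity $\int\mu|R'|^2+L\int|R|^2=0$ with $\slDelta_{0,m}\geq 0$ when $am=0$, where $L=0$ forces $m=0$ and constant $R$ and $\cS$. One small slip: the symmetry $(a,m)\mapsto(-a,-m)$ does not change the sign of $am$ (at $\sigma=0$ the symmetry that flips it is $(\sigma,m)\mapsto(-\sigma,-m)$), but no sign normalization is needed anyway, because a constant $q$ automatically has the same nonzero sign at $r_e$ and $r_c$, which gives $bam\bigl(|B_e|^2+|B_c|^2\bigr)=0$ directly.
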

\begin{proof}
  We consider solutions $R(r)$ of~\eqref{EqNODE} with outgoing boundary conditions~\eqref{EqNBC}. From~\eqref{EqNODE} we have $q(r)=-b a m$. If $a m\neq 0$, then the Wronskian argument in the proof of Theorem~\ref{ThmNNo} applies to give $R=0$. Otherwise the ODE~\eqref{EqNODE} reads $-(\mu R')'+L R=0$, and the outgoing condition means $R\in\CI([r_e,r_c])$. Multiplying by $\ol{R}$ and integrating by parts on $[r_e,r_c]$ gives
  \[
    0 = \int_{r_e}^{r_c} \mu|R'|^2\,\dd r + L\int_{r_e}^{r_c} |R|^2\,\dd r.
  \]
  The boundary term vanishes by the smoothness of $R$ and since $\mu(r_e)=\mu(r_c)=0$. Since $\slDelta_{0,m}$ is positive semidefinite, this implies $R=0$ unless $L=0$---so necessarily $m=0$, and with the angular eigenfunctions of $\slDelta_{0,0}=-\frac{1}{\sin\theta}\pa_\theta c(\theta)\sin\theta\,\pa_\theta$ being constants, as follows from an integration by parts. In this case, $R'=0$ shows that $R$ is constant.
\end{proof}

If $u=\sum_{j=0}^d t_*^j u_j$ were a generalized zero mode with $d\geq 1$ and $u_d\neq 0$, then $\Box_g u=0$ would imply $\wh{\Box_g}(0)u_d=0$, so $u_d=1$ after scaling. The equation for $u_{d-1}$ then reads $d[\Box_g,t_*]u_d+\Box_g u_{d-1}=0$, so $\Box(t_*+u_{d-1}/d)=0$. We show in Proposition~\ref{Prop0NoG} below that this does not admit smooth solutions $u_{d-1}$. This contradiction establishes Theorem~\ref{ThmI} for $\sigma=0$.

\begin{prop}[No nontrivial generalized zero modes]
\label{Prop0NoG}
  Recall $t_*$ from~\eqref{EqICoords}. There does not exist $R\in\CI(X)$ such that $\Box_g(t_*+R)=0$.
\end{prop}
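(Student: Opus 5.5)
We argue by contradiction via a flux (divergence theorem) argument on the stationary slab. First we compute $\Box_g t_*$. Since $t_*=t-T(r)$ and $\Box_g$ annihilates functions of $t$ alone only up to the $r$-dependence, \eqref{EqIBox} gives
\[
  \varrho^2\Box_g t_* = \pa_r\bigl(\mu T'\bigr) = \pa_r\bigl(b(r^2+a^2)J(r)\bigr).
\]
This is a function of $r$ alone, with $J(r_e)=-1$ and $J(r_c)=1$. The putative equation $\Box_g R=-\Box_g t_*$, with $R\in\CI(X)$ independent of $t_*$, reads
\[
  \wh{\Box_g}(0)R=-\varrho^{-2}\pa_r\bigl(b(r^2+a^2)J\bigr).
\]
Here $\varrho^2\wh{\Box_g}(0)=-\pa_r\mu\pa_r+(\text{angular operator})+\dots$, and the $\pa_\phi$-terms involve $\mu^{-1}$. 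To avoid these singular terms, we average $R$ over $\phi_*$: by rotation invariance, the $m=0$ component $R_0(r,\theta)$ still solves the equation, since the right-hand side is axisymmetric. For axisymmetric stationary functions, $\varrho^2\Box_g$ acting on $\tilde R(r,\theta)$ (with $t_*,\phi_*$-derivatives vanishing) is simply $-\pa_r\mu\pa_r-\frac{1}{\sin\theta}\pa_\theta c\sin\theta\,\pa_\theta$. The coordinate change only affects $\pa_r$ through $\pa_{t_*}$ and $\pa_{\phi_*}$, which annihilate $\tilde R$.

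Next we integrate over $\Sph^2$ with the measure $\sin\theta\,\dd\theta\,\dd\phi$. The angular divergence term drops out. Writing $\bar R(r)$ for the spherical mean, we obtain the ODE
\[
  -\pa_r\bigl(\mu\bar R'\bigr) = -\pa_r\bigl(b(r^2+a^2)J\bigr)\quad\text{on }(r_e,r_c),
\]
hence $\mu\bar R' = b(r^2+a^2)J + c_0$ for some constant $c_0$. Since $\bar R$ is smooth on $[r_e,r_c]$, the left side vanishes at both $r_e$ and $r_c$. This forces
\[
  c_0=b(r_e^2+a^2)\quad\text{and}\quad c_0=-b(r_c^2+a^2).
\]
These are incompatible because $b(r_e^2+a^2)>0>-b(r_c^2+a^2)$. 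This contradiction proves the proposition.

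The only step needing care is the derivation of the reduced operator: we must confirm that $\varrho^2\Box_g$, applied to $t_*+\tilde R(r,\theta)$, produces exactly $\pa_r(\mu T')-\pa_r\mu\pa_r\tilde R-\frac{1}{\sin\theta}\pa_\theta c\sin\theta\,\pa_\theta\tilde R$. The cross terms from the $\mu^{-1}(\dots\pa_t)^2$ term are linear in $\pa_t$ and vanish, since $\pa_t^2 t_*=0$ and $\pa_t\tilde R=0$. One can equally run the argument as a divergence identity $\int_X \varrho^2\Box_g(\cdot)\,\dd r\,\dd\sigma_{\Sph^2}$ using $|g|^{1/2}\propto\varrho^2\sin\theta$. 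In that form the total flux of $\dd t_*$ through the two horizons equals $b(r_c^2+a^2)+b(r_e^2+a^2)\neq0$ times $4\pi$, while that of $\dd R$ vanishes.
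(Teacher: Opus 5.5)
Your proposal is correct and is essentially the paper's proof. Both project to the axisymmetric component and compute $\varrho^2\Box_g t_*=\pa_r\bigl(b(r^2+a^2)J\bigr)$. Both then observe that the $[r_e,r_c]\times\Sph^2$ integral of $\varrho^2\wh{\Box_g}(0)R$ vanishes, while that of the source equals $b(r_c^2+a^2)+b(r_e^2+a^2)>0$ (up to the factor $4\pi$). Determining the integration constant $c_0$ separately at each horizon and finding the two values incompatible is just this total-flux computation written out.
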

\begin{proof}
  The projection of $R$ to $m=0$ would satisfy the same equation. Let thus $R\in\CI(X)$ be axisymmetric. Then the integral of $\varrho^2\wh{\Box_g}(0)R$ over $[r_e,r_c]\times\Sph^2$ against $\sin\theta\,\dd r\,\dd\theta\,\dd\phi$ vanishes, as follows from~\eqref{EqIBox} upon integrating by parts. To complete the proof, we thus need to show that the total integral of $\varrho^2\Box_g t_*\in\CI(X)$ is nonzero. But from~\eqref{EqIBox} we get
  \[
    \varrho^2\Box_g t_* = -\pa_r\mu\pa_r(t_*-t) = \pa_r\mu\pa_r T = \pa_r (r^2+a^2)b J(r)
  \]
  by~\eqref{EqICoords}, the $r$-integral of which is $\bigl((r_c^2+a^2)+(r_e^2+a^2)\bigr)b>0$.
\end{proof}

%%%%%%%%%%%%%%%%%%%%%%%%%%%%%%%%%%%%%%%%%%%%%%%%%%%%%%%%%%%%%%%%%%%%%%
\section{Mode stability in the closed upper half plane; proof of Theorem~\ref{ThmI}}
\label{SPf}

Theorem~\ref{ThmNNo} and Proposition~\ref{Prop0NoG} establish mode stability of $\Box_g$ on subextremal KdS on the real axis. In the SdS case $a=0$, we recall:

\begin{lemma}[Mode stability for SdS]
\label{LemmaPfSdS}
  Theorem~\usref{ThmI} holds in the case $a=0$.
\end{lemma}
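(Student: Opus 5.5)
For $a=0$ we have $b=c=1$, $\mu(r)=r^2(1-\lambda r^2)-2\bhm r$, and the operator $\varrho^2\Box_g=r^2\Box_g$ separates exactly in spherical harmonics. The real axis is already settled: Theorem~\ref{ThmNNo} covers $\sigma\in\R\setminus\{0\}$, and Lemma~\ref{Lemma0No} together with Proposition~\ref{Prop0NoG} cover $\sigma=0$. So it remains to exclude outgoing modes with $\Im\sigma>0$.

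\emph{Step 1 (separation).} Fix $\sigma$ with $\Im\sigma>0$ and a putative outgoing mode $e^{-i\sigma t_*}u$. Since $a=0$, the angular operator is $\slDelta$, the round Laplacian on $\Sph^2$, independently of $\sigma$. Projecting $u$ onto spherical harmonics of degree $\ell$ reduces to
\[
  -(\mu R')'-\frac{\sigma^2r^4}{\mu}R+\ell(\ell+1)R=0,
\]
with $L=\ell(\ell+1)\geq 0$. Here $q=\sigma r^2$, and \eqref{EqNBC} holds with complex exponents. In the tortoise-type form, $R\sim(r-r_e)^{-i\sigma r_e^2/\mu'(r_e)}$ near $r_e$. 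Since $\Re(-i\sigma)=\Im\sigma>0$, this power vanishes at $r_e$, and $R$ is smooth times this factor. The same holds at $r_c$.

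\emph{Step 2 (energy identity).} Multiply the ODE by $\ol{R}$ and integrate over $[r_e+\eps,r_c-\eps]$:
\[
  \int\mu|R'|^2+L|R|^2-\sigma^2\frac{r^4}{\mu}|R|^2\,\dd r=\bigl[\mu R'\ol{R}\bigr]_{r_e+\eps}^{r_c-\eps}.
\]
The outgoing asymptotics make the boundary terms tend to $0$. Near $r_e$, $\mu R'\ol R=O(|R|^2)$ and $|R|^2\sim(r-r_e)^{2\Im\sigma\, r_e^2/\mu'(r_e)}\to0$, and similarly at $r_c$. The same estimates make all the integrals converge, with $|R|^2/\mu$ integrable.

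\emph{Step 3 (sign argument).} The identity reads $E_1-\sigma^2E_2=0$, with $E_1=\int(\mu|R'|^2+L|R|^2)\geq 0$ and $E_2=\int\frac{r^4}{\mu}|R|^2\geq0$.
\begin{itemize}
\item If $E_2>0$, then $\sigma^2=E_1/E_2\geq 0$ is real, so $\sigma\in\R$. This contradicts $\Im\sigma>0$, unless $\sigma$ is purely imaginary; but then $\sigma^2<0$ contradicts $\sigma^2\geq0$.
\item If $E_2=0$, then $R=0$.
\end{itemize}

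The only step requiring care is Step 2: justifying the vanishing of the boundary terms and the integrability when $\Im\sigma>0$. The regular-singular structure at each horizon handles both. The outgoing branch is $(r-r_\bullet)^{\beta}\CI$ with $\Re\beta>0$, and $\mu R'\ol R$ is bounded by $|r-r_\bullet|^{2\Re\beta}$ times a bounded factor. Hence mode stability holds in the closed upper half plane when $a=0$, including the description of generalized zero modes from \S\ref{S0}.
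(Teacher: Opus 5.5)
Your proof is correct and follows the paper's route. You multiply the equation by the conjugate of $v=e^{i\sigma T}u$ and integrate over $[r_e,r_c]$; the boundary terms vanish since the outgoing factor $|r-r_\bullet|^{\beta}$ has $\Re\beta>0$ when $\Im\sigma>0$. Your conclusion from $E_1=\sigma^2E_2$ just repackages the paper's split into $\Im(\sigma^2)\neq 0$ versus $\sigma\in i(0,\infty)$.

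The differences are cosmetic. You separate into spherical harmonics first, whereas the paper integrates over $[r_e,r_c]\times\Sph^2$ directly. Also, your intermediate claim $\mu R'\ol{R}=O(|R|^2)$ should be replaced by the bound $O(|r-r_\bullet|^{2\Re\beta})$ that you state at the end, since the smooth factor may vanish at the horizon.
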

\begin{proof}
  We only need to consider $\Im\sigma>0$. Upon writing a mode solution as $e^{-i\sigma t_*}u(x)=e^{-i\sigma t}v$ where $v(x)=e^{i\sigma T}u(x)$ and $\mu v'$ vanish at the horizons. We can thus multiply the equation
  \[
    r^2\wh{\Box_g}(\sigma)v = -\pa_r\mu\pa_r v + \slDelta v - \frac{r^4}{\mu}\sigma^2 v,\quad\slDelta = -\frac{1}{\sin\theta}\pa_\theta\sin\theta\,\pa_\theta-\frac{1}{\sin^2\theta}\pa_\phi^2,
  \]
  by $\ol{v}$ and integrate by parts over $[r_e,r_c]\times\Sph^2$ to obtain
  \begin{equation}
  \label{EqPfSdS}
    \int_0^{2\pi}\int_0^\pi\int_{r_e}^{r_c} \Bigl(\mu |\pa_r v|^2 + |\slnabla v|^2 - \frac{r^4}{\mu}\sigma^2|v|^2\Bigr)\,\sin\theta\,\dd r\,\dd\theta\,\dd\phi = 0.
  \end{equation}
  The boundary term is absent since $\mu v'\ol{v}$ vanishes at the horizons. If $\sigma\notin i(0,\infty)$, then $\Im(\sigma^2)\neq 0$, so taking imaginary parts gives $v=0$. For $\sigma\in i(0,\infty)$, the integrand is a sum of squares, so we can again conclude that $v=0$.
\end{proof}

We now use a continuity argument (in the spirit of \cite[proof of Theorem~1.7 in \S{3.9}]{HintzKdSMS}) to complete the proof of Theorem~\ref{ThmI}.

\begin{proof}[Proof of Theorem~\usref{ThmI}]
  We first recall the main result of~\cite{PetersenVasySubextremal}. Fix $r_*\in(r_e,r_c)$ with $\mu'(r_*)=0$. Write $\tau_*=t_*$ and $\psi_*=\phi_*-\frac{a}{r_*^2+a^2}t_*$. For $u=u(r,\theta,\psi_*)$ we then have $e^{i\varsigma\tau_*}\Box_g(e^{-i\varsigma\tau_*}u)=:P(\varsigma)u$ where $P(\varsigma)$ is a second order differential operator on $X_*=[r_e-\delta,r_c+\delta]\times\Sph^2_{\theta,\psi_*}\cong X$. Note that if $\Box_g(e^{-i\sigma t_*}e^{i m\phi_*}u(r,\theta))=\Box_g\bigl(e^{-i(\sigma-\frac{a m}{r_*^2+a^2})\tau_*}e^{i m\psi_*}u(r,\theta)\bigr)=0$, where $u\neq 0$, so $\sigma$ is a QNM, then $P(\varsigma)$, $\varsigma:=\sigma-\frac{a m}{r_*^2+a^2}$, has nontrivial nullspace. (Note that the frequency shift is purely real.) Then \cite{PetersenVasySubextremal} proved,\footnote{The function $t_*$ used by Petersen--Vasy is required to have timelike differential. This can be arranged by adding to our $t_*$ an appropriate smooth function $\tilde T\in\CI([r_e-\delta,r_c+\delta])$, see \cite[Remark~1.1]{PetersenVasySubextremal} or \cite[\S{3.1}]{HintzKdSMS}. Since this merely conjugates $P(\varsigma)$ by the smooth (on $X_*$) function $e^{i\varsigma\tilde T(r)}$, it does not affect any of the properties of $P(\varsigma)^{-1}$ used below.} for any fixed subextremal KdS metric:
  \begin{enumerate}
  \item The operator $P(\varsigma)^{-1}$ extends from $\Im\varsigma\gg 1$ finite-meromorphically (i.e., every pole has finite order, and the principal parts at each pole are finite rank operators) to $\Im\varsigma>-1$ as an operator $H^{s-1}(X_*)\to H^s(X_*)$ for sufficiently large $s$.
  \item A number $\varsigma\in\C$ is a pole of $P(\varsigma)^{-1}$ if and only if there exists $m\in\Z$ and a nontrivial mode solution $\sim e^{i m\psi_*}$ in the kernel of $P(\varsigma)$, which in turn is equivalent to $\sigma=\varsigma+\frac{a m}{r_*^2+a^2}$ being a QNM in the sense of Definition~\ref{DefIMode}.
  \item\label{ItI3} There exists $C_0>0$ (which can be chosen locally uniformly as one varies the subextremal KdS parameters) such that all QNMs $\varsigma$ with $\Im\varsigma\geq 0$ satisfy $\Im\varsigma\leq C_0$ (from a simple spacetime energy estimate), $|\Re\varsigma|\leq C_0$ (from delicate high-energy estimates).
  \end{enumerate}
  The finite-meromorphic nature of $P(\varsigma)^{-1}$ implies, in particular, that in the decomposition of $\ker P(\varsigma)$ into its $e^{i m\psi_*}$-subspaces, only finitely many subspaces are nontrivial (and each of them has finite dimension). Thus, there is a finite upper bound $\bar m$ on the largest $|m|\in\N_0$ appearing among the decomposed mode solutions corresponding to putative QNMs in $[-C_0,C_0]+i[0,C_0]$. Upon increasing $C_0$ by $\frac{|a|\bar m}{r_*^2+a^2}$, we conclude that all QNMs (for all subextremal KdS parameters in a small neighborhood of some fixed parameters) in the sense of Definition~\ref{DefIMode} lie in $[-C_0,C_0]+i[0,C_0]$; and by increasing $C_0$ further, if necessary, they indeed lie in
  \begin{equation}
  \label{EqPfRect}
    \cR_{C_0} := \bigl( [-C_0,C_0] + i[0,C_0] \bigr) \setminus \{ |\sigma|<C_0^{-1} \}
  \end{equation}
  unless they are $0$ (since $0$ is a QNM and the set of poles of $P(\varsigma)^{-1}$, shifted by multiples of $m$ to pass to $\sigma$, is discrete).

  Suppose now that for the KdS parameters $\Lambda,\bhm,a$ there exists a QNM $\sigma$ with $\Im\sigma>0$. Pick any continuous path $p\colon[0,1]\ni s\mapsto(\Lambda(s),\bhm(s),a(s))$ of subextremal KdS parameters, with $p(1)=(\Lambda,\bhm,a)$ and $a(0)=0$.\footnote{Concretely, this can be done by using the roots as parameters, cf.\ Corollary~\ref{CorGParam}, so setting $r_C(s)=s r_C$ while keeping $r_e$ and $r_c$ fixed.} Let $\cS\subset[0,1]$ be the set of all $s\in[0,1]$ for which KdS with parameters $p(s)$ has a QNM in the upper half plane. By assumption, $1\in\cS$. Moreover, $\cS$ is relatively open since QNMs move continuously under perturbations (see, e.g., \cite[Appendix~A.2]{HintzThesis}). We prove that $\cS$ is closed. Suppose $(s_j)_{j\in\N}$ is a sequence in $\cS$ with $s_j\to s_0\in[0,1]$ as $j\to\infty$. Let $C_0$ denote the constant from item~\eqref{ItI3} for the KdS parameters $p(s_0)$. Let $\sigma_j$ be a QNM for KdS with parameters $p(s_j)$ with $\Im\sigma_j>0$. Then $\sigma_j\in\cR_{C_0}$ when $j$ is sufficiently large. Since $\cR$ is compact, there exists a convergent subsequence $\sigma_{j_k}\xra{k\to\infty}\sigma_0\in\cR$. But $\sigma_0$ must be a QNM of KdS with parameters $p(s_0)$, since otherwise a neighborhood of $\sigma_0$ would be free of QNMs for all nearby KdS spacetimes, contradicting the fact that $\sigma_j$ enters every neighborhood of $\sigma_0$. Since $0$ is an isolated QNM for all nearby KdS spacetimes (the total rank of resonances in a fixed small disc around $0$ being constant under small perturbations and thus equal to $1$), $\sigma_0$ cannot be $0$; and by Theorem~\ref{ThmNNo}, $\sigma_0$ cannot be real, either; so $\Im\sigma_0>0$.

  We conclude that $\cS=[0,1]$. But $0\in\cS$ contradicts Lemma~\ref{LemmaPfSdS}.
\end{proof}

%%%%%%%%%%%%%%%%%%%%%%%%%%%%%%%%%%%%%%%%%%%%%%%%%%%%%%%%%%%%%%%%%%%%%%
\section{The Klein--Gordon case}
\label{SKG}

We now consider, for $\nu>0$, the operator
\[
  \varrho^2(\Box_g+\lambda\nu) = \varrho^2\Box_g + \lambda\nu r^2 + \lambda\nu a^2\cos^2\theta.
\]
Separating variables as in~\eqref{EqNPDE}--\eqref{EqNAngOp}, we need to show that every outgoing solution of
\begin{align*}
  &-(\mu R')' - \frac{q(r)^2}{\mu(r)}R + V(r)R = 0,\quad V(r) := L + \lambda\nu r^2,\quad L\in\spec(\slDelta_{\sigma,m,\nu}), \\
  &\qquad \slDelta_{\sigma,m,\nu} := \slDelta_{\sigma,m} + \lambda\nu a^2\cos^2\theta,
\end{align*}
is trivial for $\sigma\in\R$ (now including for $\sigma=0$); the relevant spheroidal harmonics are now eigenfunctions of $\slDelta_{\sigma,m,\nu}$ (times $e^{i m\phi}$). The outgoing condition for $R$ is the same as in the case $\nu=0$. The proof for non-superradiant frequencies is the same as in~\S\ref{SN}, so we focus on the case $a,m>0$, $\sigma=\frac{a m}{r_0^2+a^2}>0$, and $q(r_e)<0=q(r_0)<q(r_c)$, $q(r)=q_2(r^2-r_0^2)$, as in~\S\ref{SF}. (The case $\sigma=0$ is discussed in~\S\ref{SsKGPf}.) We define $R_\pm$ by~\eqref{EqFCRpm}; instead of~\eqref{EqFCConj} they satisfy
\[
  (-\pa_r\mu\pa_r + V \mp 2 i\nabla_q)R_\pm = 0.
\]
We use a multiplier $2 f\nabla_q$ for real-valued $f$; the replacement for Lemma~\ref{LemmaFC} is the identity
\begin{align*}
  &J_\pm^f[u] := -\mu f q|u'|^2 - \mu f q'\Re(u'\ol{u}) + \Bigl(V f q + \frac12\mu(f q')'\Bigr)|u|^2 \\
  &\qquad \implies \frac{\dd}{\dd r}J_\pm^f[u] = \cF_\pm^f[u] = C^f|u'|^2 + (\cD_V f)|u|^2, \\
  &\qquad \qquad\qquad C^f=\mu' f q-2\mu f q'-\mu f' q,\quad \cD_V f=q(f V)' + \frac12\bigl(\mu(f q')'\bigr)'.
\end{align*}
The expression for $J_\pm^f[u](r_0)$ is the same as in~\eqref{EqFCLimSeam}.

%%%%%%%%%%%%%%%%%%%%%%%%%%%%%%%%%%%%%%%%%%%%%%%%%%
\subsection{Multiplier and Hardy inequality on the interval \texorpdfstring{$[r_0,r_c]$}{[r0,rc]}}

On $[r_0,r_c]$, we again take $f_+(r)=\frac{r_0}{r}$; then writing $q(r)=q_2\tilde Q(r)$, $\tilde Q(r):=r^2-r_0^2\geq 0$, we compute (cf.\ Lemma~\ref{LemmaFCc})
\begin{alignat*}{2}
  \cD_V f_+ &= q_2 d_+,&\quad d_+&:=-r_0\tilde Q(r)\Bigl(\frac{L}{r^2}-\lambda\nu\Bigr), \\
  C^{f_+} &= -q_2 c_+,&\quad c_+&:=\frac{r_0}{r}\Bigl( -\tilde Q\mu' + 4 r\mu - \frac{\tilde Q\mu}{r} \Bigr) > 0.
\end{alignat*}

\begin{rmk}[Wrong sign]
\label{RmkKGWrong}
  Consider, for the sake of concreteness, the conformally coupled case $\nu=2$. If $L<2\lambda r_c^2$, then $d_+>0$ near $r=r_c$. Numerically, for $r_C=1-\eps$, $r_e=1$, $r_c=\eps^{-1}$, and $r_0=1+\eps$, we have $2\lambda r_c^2\to 2$ but $L\approx 1.2$ for $\eps\ll 1$ and the lowest eigenvalue of $\slDelta_{\sigma,1,2}$, so $\cD_V f_+$ may indeed have the wrong (i.e., positive) sign.
\end{rmk}

We can borrow the definite sign of the term $c_+|R_+'|^2$ of the flux $-q_2^{-1}\cF_+^{f_+}[R_+]$ by writing, for a function $h=h(r)$ that we choose below,
\begin{align*}
  -q_2^{-1}\cF_+^{f_+}[R_+] &= c_+|R_+'|^2 - d_+|R_+|^2 \\
    &= c_+\Bigl| R_+' + \frac{h}{c_+}R_+ \Bigr|^2 - 2 h\Re(R_+'\ol{R_+}) - \Bigl(d_+ + \frac{h^2}{c_+}\Bigr)|R_+|^2 \\
    &= c_+\Bigl| R_+' + \frac{h}{c_+}R_+ \Bigr|^2 + \Bigl( h' - \frac{h^2}{c_+} - d_+ \Bigr)|R_+|^2 - \bigl( h |R_+|^2 \bigr)'.
\end{align*}
The coefficient of $|R_+|^2$ is $h' - \frac{h^2}{c_+} + \frac{L r_0\tilde Q}{r^2} - \lambda\nu r_0\tilde Q$. If we choose $h'(r) = \lambda\nu r_0\tilde Q(r)$, $h(r_0)=0$, so
\[
  h(r)=\lambda\nu r_0\tilde Q_1(r),\quad
  \tilde Q_1(r):=\int_{r_0}^r \tilde Q(s)\,\dd s = \frac13(r^3-r_0^3) - (r-r_0)r_0^2,
\]
the $h'$- and $-\lambda\nu r_0\tilde Q$-terms cancel, so the $|R_+|^2$-coefficient is
\[
  \frac{r_0\tilde Q(r)}{r^2}\Bigl( L - \frac{\nu^2}{D(r)} \Bigr) \geq \frac{r_0\tilde Q(r)}{r^2}\Bigl(1-\frac{\nu^2}{D(r)}\Bigr),\quad D(r):=\frac{c_+(r)\tilde Q(r)}{\lambda^2 r_0 r^2\tilde Q_1(r)^2};
\]
here we use that $L\geq 1$,\footnote{A sharper lower bound for $L$ would ultimately give a larger range of $\nu$; the lower bound $L\geq 1$ suffices to get a range of $\nu$ that includes the conformal mass case, and hence we do not try to optimize this part (also since the present choice of $f_+$ is unlikely to be optimal in any case).} as follows from $\slDelta_{\sigma,m,\nu}\geq\slDelta_{\sigma,m}=\slDelta_{\sigma,m,0}$ (on the level of quadratic forms) and Lemma~\ref{LemmaCeILower}. This, in turn, is non-negative for $\nu^2\leq 18$, i.e., $\nu\leq 3\sqrt{2}$, in view of:

\begin{lemma}[Lower bound]
\label{LemmaWGLowerD}
  We have $D(r)\geq 18$ for $r\in(r_0,r_c]$, and $D(r)\to +\infty$ as $r\searrow r_0$.
\end{lemma}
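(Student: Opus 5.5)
We need $D(r)=\frac{c_+\tilde Q}{\lambda^2 r_0 r^2\tilde Q_1^2}\geq 18$ on $(r_0,r_c]$, where
\[
  c_+=\frac{r_0}{r}\Bigl(-\tilde Q\mu'+4r\mu-\frac{\tilde Q\mu}{r}\Bigr).
\]
Hence $D=\frac{\tilde Q\,\hat c}{\lambda^2 r^3\tilde Q_1^2}$ with $\hat c:=-\tilde Q\mu'+4r\mu-\tilde Q\mu/r>0$ (Lemma~\ref{LemmaFCc}).

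\emph{Behavior near $r_0$.} As $r\searrow r_0$ we have $\tilde Q\approx 2r_0(r-r_0)$ and $\tilde Q_1\approx r_0(r-r_0)^2$. Also $\hat c(r_0)=4r_0\mu(r_0)>0$. Therefore
\[
  D\approx \frac{8r_0\mu(r_0)}{\lambda^2 r_0^3(r-r_0)^3}\to+\infty .
\]

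\emph{The lower bound: reduce to a polynomial inequality.} The statement is equivalent to
\[
  P(r):=r\,\tilde Q\bigl(-r\tilde Q\mu'+4r^2\mu-\tilde Q\mu\bigr)-18\lambda^2 r^5\tilde Q_1^2\geq 0\quad\text{on }[r_0,r_c].
\]
I would treat this in the same way as the Bernstein arguments (Lemmas~\ref{LemmaFCeAMono}, \ref{LemmaFCeMono}, \ref{LemmaCeII2Sign}), normalizing $r_e=1$ via~\eqref{EqGNormalize}.

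The inhomogeneity in $\lambda$ needs handling. The first part is linear in $\lambda$ (write $\mu=\lambda\check\mu$ with $\check\mu=-(r-r_-)(r-r_C)(r-r_e)(r-r_c)$ monic-type), while the subtracted term is quadratic in $\lambda$. Dividing by $\lambda$ turns the condition into
\[
  E(r):=r\tilde Q\bigl(-r\tilde Q\check\mu'+4r^2\check\mu-\tilde Q\check\mu\bigr)-18\lambda r^5\tilde Q_1^2\geq 0 .
\]
Since the negative term increases with $\lambda$, I would use the upper bound $\lambda< 1/w$ from~\eqref{EqGParamExp}, where $w=r_C^2+r_e^2+r_c^2+r_Cr_e+r_Cr_c+r_er_c$. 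It then suffices to show
\[
  wE_0-18r^5\tilde Q_1^2\geq 0,\qquad E_0:=r\tilde Q(\cdots)\ \text{(the $\check\mu$-part)} .
\]
This is homogeneous in the roots and in $r,r_0$.

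To carry this out, parametrize $r_C=1-s$, $r_e=1$, $r_0=1+x$, $r=1+x+y$, $r_c=1+x+y+z$, with $s\in[0,1]$ and $x,y,z\geq 0$. Then $\tilde Q=y(2+2x+y)$ and $\tilde Q_1$ is $y^2$ times a positive polynomial. Both terms then carry a factor $y^3$ (from $\tilde Q\cdot\hat c$, where $\hat c$ contains $z$ via $\check\mu$, versus $\tilde Q_1^2\sim y^4$). After dividing out the common power of $y$, I would expand in the Bernstein basis in $s$ and check that all monomial coefficients in $x,y,z$ are non-negative, using the appendix code.

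\emph{Main obstacle.} I expect the Bernstein positivity to fail directly, with some negative coefficients. This is because the constant $18$ is presumably sharp in a degenerate limit. In the Remark~\ref{RmkKGWrong} regime ($r_c\to\infty$, $r_0\to r_e$), $\lambda r_c^2\to1$, and $D$ at $r=r_c$ should be computable. At $r=r_c$, where $\check\mu=0$, the numerator becomes $-r\tilde Q^2\check\mu'$. So I would first compute the limiting value of $D(r_c)$ in the extremal corners to confirm that $18$ is the infimum.

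If the naive expansion has negative coefficients, I would fall back on the following steps:
\begin{itemize}
\item use the sharper constraint $\lambda w=1-\lambda a^2$ with $a^2=\lambda r_Cr_er_c(r_C+r_e+r_c)$ instead of $\lambda<1/w$;
\item or split into the regions $x\geq y$ and $y\geq x$ with a shift substitution, as in Proposition~\ref{PropCeI};
\item or use the bound $\mu\leq(r-r_e)(r-r_C)$ from Corollary~\ref{CorGBound} for the $-\tilde Q\mu$ term.
\end{itemize}
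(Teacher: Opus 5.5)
Your proposal is correct. It follows the paper's overall scheme: clear denominators, factor out one power of $\lambda$, normalize $r_e=1$, use the same coordinates $s,x,y,z$, and certify by a Bernstein expansion in $s$. The difference is how you remove the remaining $\lambda$-dependence, and your way is more economical.

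\textbf{The two reductions.} In the paper's notation $D_0=E_0/r$ and $D_1=18r^4\tilde Q_1^2$, one needs $D_0-\lambda D_1\ge0$. You use $\lambda\le1/w$ directly and reduce to $wD_0-D_1\ge0$. The paper instead multiplies $D_0-\lambda D_1$ by $\lambda t=a^2$ and uses $1=\lambda w+\lambda^2t$ to rewrite it as $\lambda(tD_0+wD_1)-D_1$. It then inserts the lower bound $\lambda^{-1}=w+a^2\le w+r_Cr_e$. The polynomial it finally certifies is
\[
  D_2=tD_0-r_Cr_eD_1=r_Cr_e\bigl(r_c(r_C+r_e+r_c)D_0-D_1\bigr).
\]
So in effect the paper only uses $\lambda r_c(r_C+r_e+r_c)\le1$ (equivalently $a^2\le r_Cr_e$), which is weaker than your $\lambda w\le1$.

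\textbf{Your target follows from the paper's.} Since $w=r_c(r_C+r_e+r_c)+r_C^2+r_Cr_e+r_e^2$,
\[
  wD_0-D_1=(r_C^2+r_Cr_e+r_e^2)\,D_0+\frac{D_2}{r_Cr_e}\ge0,
\]
using $D_0\ge0$ (Lemma~\ref{LemmaFCc}) and the paper's certificate for $D_2$. Two consequences:
\begin{itemize}
\item The obstacle you anticipate does not occur.
\item Your first fallback would not be a sharpening: the paper's way of using the exact relation produces a stronger polynomial inequality than yours.
\end{itemize}
A short hand computation also shows that $D_0$ has a degree-$2$ Bernstein expansion in $s$ whose coefficients have nonnegative monomial coefficients in $x,y,z$. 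So your polynomial even has a direct Bernstein certificate. The paper's route buys only that its polynomial is the one already checked in Appendix~\ref{SMath}.

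\textbf{Two harmless slips.}
\begin{itemize}
\item The blow-up is $D\approx 8\mu(r_0)/\bigl(\lambda^2r_0^3(r-r_0)^3\bigr)$, without your extra factor $r_0$.
\item $E_0$ vanishes only to first order at $y=0$, as your own observation $\hat c(r_0)=4r_0\mu(r_0)>0$ shows. So the common factor is $y$, not $y^3$.
\end{itemize}
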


For the parameters of Remark~\ref{RmkKGWrong}, we have $D(r_c)\to 18$ as $\eps\to 0$.

\begin{proof}[Proof of Lemma~\usref{LemmaWGLowerD}]
  The claimed bound is equivalent to
  \[
    (-r\tilde Q\mu'+4 r^2\mu-\tilde Q\mu)\tilde Q-18\lambda^2 r^4\tilde Q_1(r)^2 \geq 0.
  \]
  The first summand scales with the prefactor $\lambda$ of $\mu$; thus, setting $\mu_0=-(r-r_-)(r-r_C)(r-r_e)(r-r_c)$, we must show
  \begin{equation}
  \label{EqWGLowerD1}
    \lambda D_0 - \lambda^2 D_1 \geq 0,\qquad D_0 := (-r\tilde Q\mu_0'+4 r^2\mu_0-\tilde Q\mu_0)\tilde Q,\quad D_1 := 18 r^4\tilde Q_1^2.
  \end{equation}
  Recall the functions $w,t$ of the roots of $\mu$ from~\eqref{EqGParamExp}, and $1=\lambda w+\lambda^2 t$; here $\lambda t=a^2>0$ since $a\neq 0$. Multiplying~\eqref{EqWGLowerD1} by $t$, we must show
  \[
    \lambda\tilde D_0 - D_1 \geq 0,\quad \tilde D_0 := t D_0+w D_1.
  \]
  Normalize $r_e=1$ and introduce $s\in(0,1]$ and $x,y,z\geq 0$ by
  \[
    r_C = 1-s < r_e = 1 < r_0 = 1+x \leq r = 1+x+y \leq r_c = 1+x+y+z.
  \]
  This differs from~\eqref{EqFCeAMonosxyz} in the order of $r$ and $r_0$, so now $\mu_0 = (4+2 x+2 y+z-s)z(x+y)(x+y+s)$ and $\pa_r=\pa_y-\pa_z$, but still $\tilde Q=(1+x+y)^2-(1+x)^2=y(2+2 x+y)$. One can then check $\tilde D_0\geq 0$ using a Bernstein expansion (see again Appendix~\ref{SMath}). Since $1=\lambda(w+a^2)$, we have $\lambda^{-1}=w+a^2\leq w+r_C=w+1-s$ by Corollary~\ref{CorGBound}, and hence $\tilde D_0\geq 0$ implies
  \[
    \lambda\tilde D_0 - D_1 \geq \frac{\tilde D_0}{w+1-s} - D_1 = \frac{D_2}{w+1-s},\quad D_2 := \tilde D_0-(w+1-s)D_1.
  \]
  A Bernstein expansion (see Appendix~\ref{SMath}) proves $D_2\geq 0$, finishing the proof.
\end{proof}

\emph{We henceforth assume that $\nu\in(0,3\sqrt{2}]$.} Integrating $\frac{\dd}{\dd r}J_+^{f_+}[R_+]=\cF_+^{f_+}[R_+]$ over $[r_0,r_c]$, gives
\begin{align*}
  J_+^{f_+}[R_+](r_c) = V(r_c)f_+(r_c)q(r_c)|B_c|^2 &\leq J_+^{f_+}[R_+](r_0) + q_2\bigl(h|R_+|^2\bigr)\big|_{r_0}^{r_c} \\
    &\leq -2 q_2\mu(r_0) r_0\Re(R_+'\ol{R_+})(r_0) + q_2 h(r_c)|B_c|^2
\end{align*}
where we use~\eqref{EqNBC2}. This yields
\begin{equation}
\label{EqWGCosmRobin}
  \Re(R'\ol{R})(r_0)\leq -C|B_c|^2,\quad C>0,
\end{equation}
analogously to~\eqref{EqFCc} since $q_2,\mu(r_0),h(r_c)>0$ and (using $V(r)\geq\lambda\nu r^2$)
\[
  V(r_c)f_+(r_c)\tilde Q(r_c) - h(r_c) \geq \lambda\nu r_0\bigl( r_c(r_c^2-r_0^2) - \tilde Q_1(r_c)\bigr) = \frac23\lambda\nu r_0(r_c^3-r_0^3) > 0.
\]

%%%%%%%%%%%%%%%%%%%%%%%%%%%%%%%%%%%%%%%%%%%%%%%%%%
\subsection{Multiplier on the interval \texorpdfstring{$[r_e,r_0]$}{[re,r0]}}

We can use the \emph{same} multiplier $2 f_-\nabla_q$ as in the massless case. Indeed, the expression for $C^{f_-}$ is the same as in the massless case (and thus $C^{f_-}\leq 0$), and
\[
  \cD_V f_- = \cD_L f_- - \lambda\nu q_2 Q(r)(r^2 f_-)';
\]
since $\cD_L f_-\leq 0$, it suffices to prove that $(r^2 f_-)'\geq 0$ on $(r_e,r_0]$. In the setting of Proposition~\ref{PropCeI}, this follows from
\[
  \frac{(r^2 f_-)'}{r^2 f_-} = \frac{2}{r} + \frac{\alpha_*Q}{r\mu} > 0.
\]
In the setting of Proposition~\ref{PropCeII}, we use that $f_-$ is a positive constant multiple of $F$ on $[r_e,\wh{r}]$, and $\frac{F'}{F}=\frac{Q}{r\mu}A>0$ by Lemma~\ref{LemmaFCeAMono}. On $[\wh{r},r_0]$ on the other hand, we recall that~\eqref{EqCeIICompW} is positive and $\frac{h'}{h}=-\frac{1}{r}+\frac{\alpha_\flat Q}{r\mu}$ for the function $h$ introduced there, so $\frac{f_2'}{f_2}>-\frac{1}{r}+\frac{\alpha_\flat Q}{r\mu}$, and hence
\[
  \frac{(r^2 f_2)'}{r^2 f_2} > \frac{1}{r} + \frac{\alpha_\flat Q}{r\mu} > 0,
\]
as required. (We also recall that the lower bound~\eqref{EqCeILower} on angular eigenvalues is valid also for $\slDelta_{\sigma,m,\nu}\geq\slDelta_{\sigma,m}$.) Therefore, we obtain~\eqref{EqFCePf} as in the massless case, and hence
\begin{equation}
\label{EqWGEvRobin}
  \Re(R'\ol{R})(r_0) \geq 0.
\end{equation}

%%%%%%%%%%%%%%%%%%%%%%%%%%%%%%%%%%%%%%%%%%%%%%%%%%
\subsection{Conclusion of the proof}
\label{SsKGPf}

The inequalities~\eqref{EqWGCosmRobin} and~\eqref{EqWGEvRobin} force $\Re(R'\ol{R})(r_0)=0$, thus $B_c=0$, and hence $R=0$. \emph{This establishes mode stability for $\sigma\in\R\setminus\{0\}$.} For $\sigma=0$, the same arguments as in the proof of Lemma~\ref{Lemma0No} reduce to $a m=0$, so $q=0$, and then lead to the identity $0=\int_{r_e}^{r_c}\mu|R'|^2\,\dd r+\int_{r_e}^{r_c} V|R|^2\,\dd r$. Since $V\geq\lambda\nu r^2$ is strictly positive for $\nu>0$, this gives $R=0$.

Analogously to Lemma~\ref{LemmaPfSdS}, we also record that Theorem~\ref{ThmII} holds in the case $a=0$ for all $\nu>0$; for $\Im\sigma>0$, the integrand in~\eqref{EqPfSdS} merely has an additional non-negative term $r^2\lambda\nu|v|^2$.

The continuity argument in~\S\ref{SPf}, but now for the family $\Box_g+\lambda\nu$ and without excising a neighborhood of $\sigma=0$ in~\eqref{EqPfRect}, finishes the proof of Theorem~\ref{ThmII}.

%%%%%%%%%%%%%%%%%%%%%%%%%%%%%%%%%%%%%%%%%%%%%%%%%%%%%%%%%%%%%%%%%%%%%%
%%%%%%%%%%%%%%%%%%%%%%%%%%%%%%%%%%%%%%%%%%%%%%%%%%%%%%%%%%%%%%%%%%%%%%
\appendix
\section{Mathematica code}
\label{SMath}

\begin{lstlisting}[style=mathematica,label={lst:positivity}]
(* For a polynomial poly in the variable xx, compute its Bernstein coefficients in the basis of Bernstein polynomials of degree deg. *)
BernsteinCoeff[poly_, xx_, deg_] := Block[{coeff},
  coeff = CoefficientList[poly, xx];
  Table[
    Sum[
      Binomial[j, k]/Binomial[deg, k] coeff[[k + 1]],
      {k, 0, j}
    ],
    {j, 0, deg}
  ]
]

(* mu is the normalized function -(r-r_-)(r-r_C)(r-r_e)(r-r_c). mup is its r-derivative.  Similarly for the other functions. N1fn is the function denoted N_1(r) in the paper. *)
mu[s_, x_, y_, z_] := (4 + 2 x + y + z - s) (x + s) x (y + z);
mup[s_, x_, y_, z_] := D[mu[s, x, y, z], x] - D[mu[s, x, y, z], y];
mupp[s_, x_, y_, z_] := D[mup[s, x, y, z], x] - D[mup[s, x, y, z], y];

Q[x_, y_] := y (2 + 2 x + y);
Qp[x_, y_] := D[Q[x, y], x] - D[Q[x, y], y];

A0[s_, x_, y_, z_] := 4 (1 + x) mu[s, x, y, z] + Q[x, y] mup[s, x, y, z];
A0p[s_, x_, y_, z_] := D[A0[s, x, y, z], x] - D[A0[s, x, y, z], y];

A1[s_, x_, y_, z_] :=
  8 ((1 + x + y)^2 + (1 + x)^2) (1 + x) mu[s, x, y, z]
  + ((1 + x + y)^2 + 5 (1 + x)^2) Q[x, y] mup[s, x, y, z]
  + (1 + x) Q[x, y]^2 mupp[s, x, y, z];

Nfn[s_, x_, y_, z_] :=
  2 (1 + x + y)^2 mu[s, x, y, z] A0[s, x, y, z]
  + (1 + x) A0[s, x, y, z]^2
  + (1 + x) mu[s, x, y, z] * A0p[s, x, y, z] Q[x, y]
  + mu[s, x, y, z] mup[s, x, y, z] Q[x, y]^2;
Nfnp[s_, x_, y_, z_] :=
  D[Nfn[s, x, y, z], x] - D[Nfn[s, x, y, z], y];

N1fn[s_, x_, y_, z_] :=
  Nfnp[s, x, y, z] Q[x, y] A0[s, x, y, z]
  - Nfn[s, x, y, z] Q[x, y] A0p[s, x, y, z]
  - 2 Nfn[s, x, y, z] Qp[x, y] A0[s, x, y, z];

ubarLNum[s_, x_, y_] := 2 (1 + x + y)^2;
ubarLDenom[s_, x_, y_] := (1 + x + y)^2 + (1 - s);

tildeBNum[s_, x_, y_] :=
  (ubarLNum[s, x, y]^2 - ubarLDenom[s, x, y]^2) Q[x, y]^2
  + 2 ubarLDenom[s, x, y] x (x + s)
      ((3 ubarLNum[s, x, y] - ubarLDenom[s, x, y]) (1 + x)^2
       - (ubarLNum[s, x, y] + ubarLDenom[s, x, y]) (1 + x + y)^2);

E0[s_, x_, y_, z_] := A0[s, x, y, z] Q[x, y]^2;
E1[s_, x_, y_, z_] :=
  (1 + x) A0[s, x, y, z]^2
  + mu[s, x, y, z] A0[s, x, y, z] Q[x, y]
  - ((1 + x + y)^2 + (1 + x)^2) A0[s, x, y, z] mu[s, x, y, z]
  - (1 + x) mu[s, x, y, z] A0p[s, x, y, z] Q[x, y]
  - Q[x, y]^2 mu[s, x, y, z] mup[s, x, y, z];
w[s_, x_, y_, z_] :=
  (1 - s)^2 + 1 + (1 + x + y + z)^2 + (1 - s)
  + (1 - s) (1 + x + y + z) + (1 + x + y + z);
Bpf[s_, x_, y_, z_] :=
  w[s, x, y, z] E0[s, x, y, z] + E1[s, x, y, z];

muCosm[s_, x_, y_, z_] := (4 + 2 x + 2 y + z - s) z (x + y) (x + y + s);
mupCosm[s_, x_, y_, z_] := D[muCosm[s, x, y, z], y] - D[muCosm[s, x, y, z], z];

tRoot[s_, x_, y_, z_] := (1 - s) (1 + x + y + z) ((1 - s) + 1 + (1 + x + y + z));
wRoot[s_, x_, y_, z_] :=
  (1 - s)^2 + 1 + (1 + x + y + z)^2 + (1 - s)
  + (1 - s) (1 + x + y + z) + (1 + x + y + z);
Q1[x_, y_] := 1/3 ((1 + x + y)^3 - (1 + x)^3) - y (1 + x)^2;
D0[s_, x_, y_, z_] :=
  (-(1 + x + y) Q[x, y] mupCosm[s, x, y, z]
   + 4 (1 + x + y)^2 muCosm[s, x, y, z]
   - Q[x, y] muCosm[s, x, y, z]) Q[x, y];
D1[s_, x_, y_, z_] := 18 (1 + x + y)^4 Q1[x, y]^2;
tildeD0[s_, x_, y_, z_] := tRoot[s, x, y, z] D0[s, x, y, z] + wRoot[s, x, y, z] D1[s, x, y, z];
D2[s_, x_, y_, z_] := tildeD0[s, x, y, z] - (wRoot[s, x, y, z] + 1 - s) D1[s, x, y, z];

(* Compute the degrees (in s) of A1 and N1: *)
Print[Length[CoefficientList[A1[s, x, y, z], s]] - 1];      (* output: 2 *)
Print[Length[CoefficientList[N1fn[s, x, y, z], s]] - 1];    (* output: 6 *)
Print[Length[CoefficientList[tildeBNum[s, x, y], s]] - 1];  (* output: 3 *)
Print[Length[CoefficientList[Bpf[s, x, y, z], s]] - 1];     (* output: 4 *)
Print[Length[CoefficientList[tildeD0[s, x, y, z], s]] - 1]; (* output: 4 *)
Print[Length[CoefficientList[D2[s, x, y, z], s]] - 1];      (* output: 4 *)

(* Expand A1, N1fn, tildeBNum, Bpf, tildeD0, D2 into the Bernstein basis of degree 2, 6, 3, 4, 4, 4, respectively. Check that all Bernstein coefficients have themselves non-negative coefficients in their monomials in x, y, and (if applicable) z. *)
bA1 = Expand[BernsteinCoeff[A1[s, x, y, z], s, 2]];
bN1fn = Expand[BernsteinCoeff[N1fn[s, x, y, z], s, 6]];
btildeBNum = Expand[BernsteinCoeff[tildeBNum[s, x, y], s, 3]];
bBpf = Expand[BernsteinCoeff[Bpf[s, x, y, z], s, 4]];
btildeD0 = Expand[BernsteinCoeff[tildeD0[s, x, y, z], s, 4]];
bD2 = Expand[BernsteinCoeff[D2[s, x, y, z], s, 4]];

Print[ Table[
    Min[Flatten[CoefficientList[bA1[[j]], {x, y, z}]]], {j, 1, 3}
  ] ];
(* Output: {0, 0, 0} *)

Print[ Table[
    Min[Flatten[CoefficientList[bN1fn[[j]], {x, y, z}]]], {j, 1, 7}
  ] ];
(* Output: {0, 0, 0, 0, 0, 0, 0} *)

Print[ Table[
    Min[Flatten[CoefficientList[bBpf[[j]], {x, y, z}]]], {j, 1, 5}
  ] ];
(* Output: {0, 0, 0, 0, 0} *)

(* For tildeBNum, we separate cases: for x\geq y, write x=y+w,
   and for y\geq x, write y=x+w. *)

Print[ Table[
    Min[Flatten[
      CoefficientList[Expand[btildeBNum[[j]] /. x -> y + w], {y, w}]
    ]], {j, 1, 4}
  ] ];
(* Output: {0, 0, 0, 0} *)

Print[ Table[
    Min[Flatten[
      CoefficientList[Expand[btildeBNum[[j]] /. y -> x + w], {x, w}]
    ]], {j, 1, 4}
  ] ];
(* Output: {0, 0, 0, 0} *)

Print[ Table[
    Min[Flatten[CoefficientList[btildeD0[[j]], {x, y, z}]]], {j, 1, 5}
  ] ];
(* Output: {0, 0, 0, 0, 0} *)

Print[ bD2[[5]] ]; (* Output: 0 *)
Print[ Table[
    Min[Flatten[CoefficientList[bD2[[j]], {x, y, z}]]], {j, 1, 4}
  ] ];
(* Output: {0, 0, 0, 0} *)

\end{lstlisting}

%%%%%%%%%%%%%%%%%%%%%%%%%%%%%%%%%%%%%%%%%%%%%%%%%%%%%%%%%%%%%%%%%%%%%%
\bibliographystyle{alphaurl}

%\bibliography{
%../bib/math.bib,
%../bib/phys.bib,
%../bib/mathcheck.bib
%}

\end{document}